\newtheorem{assumption}{Assumption}
\newtheorem{remark}{Remark}
\newtheorem*{condition}{Regularity condition}
\newtheorem{theorem}{Theorem}
\newtheorem{lemma}{Lemma}
\newtheorem{corollary}{Corollary}
\newcommand{\plim}{\operatorname*{plim}}
\newcommand{\fct}{h}
\newcommand{\FCT}{g}
\newcommand{\argmin}{\operatorname*{argmin}}
\newcommand{\vect}{\operatorname*{vec}}
\newcommand{\norm}[1]{\left\lVert#1\right\rVert}
\DeclarePairedDelimiter\ceil{\lceil}{\rceil}
\DeclarePairedDelimiter\floor{\lfloor}{\rfloor}
\begin{document}

\title{\bf Linear Panel Regressions with \\ Two-Way Unobserved Heterogeneity\thanks{%
We are grateful for feedback from conference participants at the 27th International Panel
Data Conference, and the 2021 Bristol Econometric Study Group. We also thank
Eric Auerbach,	St\'ephane Bonhomme, Arturas Juodis, Liangjun Su, 
	Joakim Westerlund,
and Andrei Zeleneev for useful comments and discussions.
This research was
supported by the Economic and Social Research Council through the ESRC Centre for
Microdata Methods and Practice grant RES-589-28-0001, and by  the
European Research Council grants ERC-2014-CoG-646917-ROMIA and
ERC-2018-CoG-819086-PANEDA. 
}
}
\author{\setcounter{footnote}{2}
   Hugo Freeman%
   \footnote{
                   Department of Economics,
                   University College London,
                   Gower Street,
                   London WC1E~6BT,
                   UK.
                   Email:~{\tt hugo.freeman.16@ucl.ac.uk}
                   }
   \and Martin Weidner%
   \footnote{Department of Economics and Nuffield College, University of Oxford,
   Manor Road, Oxford OX1~3UQ, and Institute for Fiscal Studies, London, UK.
                   Email:~{\tt martin.weidner@economics.ox.ac.uk}
                   }
                   }
\date{\today}

\maketitle

\begin{abstract}
\noindent
We study linear panel regression models in which the
unobserved error term is an unknown smooth function of  two-way unobserved fixed effects. 
In standard additive or interactive fixed effect models the individual specific and 
time specific effects are assumed to enter with a known functional form (additive
or multiplicative).
In this paper, we allow for this functional form to be more general and unknown.
We discuss two different estimation approaches that allow consistent estimation of 
the regression parameters in this setting as the number of individuals and the number of time periods
grow to infinity.
The first approach uses the interactive fixed effect estimator in \cite{Bai2009}, which is still applicable here,
as long as the number of factors in the estimation grows asymptotically. The second 
approach first discretizes the two-way unobserved heterogeneity 
(similar to what \citealt{bonhomme2017discretizing} are doing for one-way heterogeneity)
and then estimates a simple linear fixed effect model with additive two-way grouped fixed effects. For both estimation methods we obtain asymptotic convergence results, perform Monte Carlo simulations,
and employ the estimators in an empirical application to UK house price data.

\end{abstract}

\newpage

\section{Introduction}

We consider the following panel data model for $i = 1,\dots,N$ cross-sectional units, and $t = 1,\dots,T$ time periods,
\begin{align}\label{model:general}
    Y_{it} &= X_{it}^\prime \, \beta + u_{it} ,
    &
    u_{it} &=  \fct(\alpha_i,\gamma_t) + \varepsilon_{it}, 
\end{align}
where $Y_{it}$ is an observed dependent variable, $X_{it} = (X_{it,1},\ldots,X_{it,K})'$ is a $K$-vector of observed explanatory variables, 
and  $u_{it}$ is an unobserved error term.
Within the unobserved error term, we have an unknown real-valued function 
$\fct(\cdot,\cdot)$ that depends on the (vector-valued) unobserved fixed effects $\alpha_i \in \mathbb{R}^{d_\alpha}$
 and $\gamma_t \in \mathbb{R}^{d_\gamma}$, which are allowed to be arbitrarily correlated with the observed regressors $X_{it}$,
 while $\varepsilon_{it}$ is a mean-zero error term that is uncorrelated with $X_{it}$. Our focus is on estimation of and
 inference on the parameter $\beta \in \mathbb{R}^K$ --- 
   the regression coefficient of $X_{it}$ on $Y_{it}$ when properly controlling for the unobserved $\alpha_i$ and $\gamma_t$.
 
The key model restrictions in \eqref{model:general} are the linearity in $X_{it}$ as well as the
additive separability between $X_{it}^\prime \, \beta$
and  $u_{it}$.
If the unobserved error term $u_{it}$ is of the more general form $u_{it} = \FCT(\alpha_i,\gamma_t,\xi_{it})$,
for some idiosyncratic errors $\xi_{it}$ that are identically distributed across $i$ and over $t$,  and independent of the covariates $X_{it}$, then under appropriate 
regularity conditions we can define
$\fct(\alpha_i,\gamma_t) = \mathbb{E}\left[ u_{it} \, \big| \, \alpha_i, \gamma_t \right]$
 and $\varepsilon_{it} = u_{it} - \fct(\alpha_i,\gamma_t)$
 to again obtain model~\eqref{model:general}. 
  The additive
 separability between $\fct(\alpha_i,\gamma_t)$ and $\varepsilon_{it}$
 is therefore not strictly required.
However, throughout this paper we take the representation of the model in \eqref{model:general} as the starting point for our analysis.

Analogous to the singular value decomposition of a {\it matrix}, 
there exists, under weak regularity conditions, 
the singular value decomposition of a {\it function} $\fct: \mathbb{R}^{d_\alpha} \times \mathbb{R}^{d_\gamma} \rightarrow \mathbb{R}$, which reads
\begin{align}\label{functionSVD}
     \fct(\alpha,\gamma) =   \sum_{r=1}^{\infty} \sigma_r \, \varphi_r(\alpha) \, \psi_r(\gamma) ,
\end{align}
for some functional singular values $\sigma_r>0$,
and appropriate normalized
functions  $\varphi_r :  \mathbb{R}^{d_\alpha}  \rightarrow \mathbb{R}$ and 
$\psi_r :  \mathbb{R}^{d_\gamma}  \rightarrow \mathbb{R}$, $r \in \{1,2,3,\ldots\}$.
Equation \eqref{functionSVD}  allows us to rewrite  model
 \eqref{model:general} as
\begin{align}\label{model:rewritten}
    Y_{it} &= X_{it}^\prime \, \beta +   \sum_{r=1}^{\infty} \lambda_{ir} \, f_{tr}  + \varepsilon_{it}, 
\end{align} 
with $\lambda_{ir}  := \sigma_r \, \varphi_r(\alpha_i)$ and $f_{tr}:= \psi_r(\gamma_t)$. Thus, our model can be viewed as a linear panel
regression model with unobserved ``factor structure'' or ``interactive fixed effects'', but where the number of factors $f_{tr}$
and corresponding factor loadings $\lambda_{ir}$ is infinite.
The same rewriting of a function $\fct(\alpha_i,\gamma_t)$ 
by an infinite sum $\sum_{r=1}^{\infty} \lambda_{ir} \, f_{tr}$ 
is used in \cite{menzel2017bootstrap}, but for a different model,
and with the goal of analyzing the bootstrap for multidimensional data.

Within a panel regression context, most of the existing literature
assumes that the number of  unobserved factors is finite, which, from our perspective, corresponds to a truncation of the infinite sequence of factors
in \eqref{model:rewritten}, that gives
\begin{align}\label{model:truncated}
    Y_{it} &= X_{it}^\prime \, \beta +   \sum_{r=1}^{R} \lambda_{ir} \, f_{tr}  +  e_{it},
\end{align} 
where $e_{it}:= \varepsilon_{it} +   \sum_{r=R+1}^{\infty} \lambda_{ir} \, f_{tr}$.  The interactive fixed effect
model in \eqref{model:truncated} is one possible approximation
of the model \eqref{model:general} that we explore in this paper,
and we will show that this approximation can be used to estimate 
$\beta$ consistently. However, we also explore another approximation
of $\fct(\alpha_i,\gamma_t)$ using two-way grouped fixed effects,
see Section~\ref{subsec:GroupFEidea} below, and we also derive
convergence rate results for the resulting  
grouped fixed effect estimator. Other approximation methods
for $\fct(\alpha_i,\gamma_t)$ are also conceivable, but are not 
explored in this paper.\footnote{
For example, to justify \eqref{functionSVD} we rely on the 
paper by \cite{griebel2014approximation}, which also discusses
the alternative ``sparse grid'' approximation. In our context,
the sparse grid approximation would correspond to
replacing  $\sum_{r=1}^{R} \lambda_{ir} \, f_{tr}$
by $ \sum_{r,q=1}^{R} \gamma_{rq} \, \lambda_{ir} \, f_{tq}$,
with some sparsity condition on the matrix $\gamma=(\gamma_{rq})$.
}

For datasets with both $N$ and $T$ large, the two currently dominant estimation methods 
for the panel regression model in \eqref{model:truncated} are the common correlated effect (CCE) estimator
of \cite{Pesaran2006} and the least-squares (LS) estimator (also called quasi maximum likelihood estimator) in \cite{Bai2009}.
Since those original papers by Pesaran and Bai, a large literature has emerged that has extended the CCE and LS estimation methods, and 
has analyzed the properties of those estimators in more general settings ---
see \cite{chudik2013large},
\cite{bai2016econometric},
and \cite{karabiyik2019econometric} for recent surveys.
We follow that literature here by also considering panels with both $N$ and $T$ large, that is, for our asymptotic results we consider 
$N,T \rightarrow \infty$.\footnote{
There is of course also work on model \eqref{model:truncated} in the context of short $T$ panels, for example,
 \cite{HoltzEakin-Newey-Rosen1988},
 \cite{AhnLeeSchmidt2001,AhnLeeSchmidt2013},
 \cite{sarafidis2009impact}
 \cite{juodis2018fixed,juodis2022linear},
 \cite{westerlund2019cce},
}

The ``conventional'' interactive fixed effect model in \eqref{model:truncated} is a special case of our model \eqref{model:general},
with $\alpha_i = \lambda_i = (\lambda_{i1},\ldots,\lambda_{iR})'$,
$\gamma_t = f_t = (f_{t1},\ldots,f_{tR})'$, and $\fct(\alpha_i,\gamma_t) = \lambda_i' f_t$. The key question that we ask in this paper
is what happens when the multiplicative factor structure $ \lambda_i' f_t$ is replaced by a more general non-linear 
factor structure $\fct(\alpha_i,\gamma_t)$. However, we do maintain all other assumptions of model \eqref{model:truncated}, in particular,
the homogenous regression coefficient $\beta$, and the additive separability between $ X_{it}^\prime \, \beta $ and the unobserved  error.

The main challenge that we need to tackle when considering this extension is that, if the data generating process is given by \eqref{model:general}, then 
the error term $e_{it}$ in \eqref{model:truncated} will generally be correlated with $X_{it}$, because $e_{it}$ contains
the truncated part $\sum_{r=R+1}^{\infty} \lambda_{ir} \, f_{tr}$ of the infinite factor structure,\footnote{
Notice that the majority of these truncated factors will be
``weak'', see  Onatski~\cite*{Onatski2010,Onatski2012} and
Chudik, Pesaran and Tosetti~\cite*{ChudikPesaranTosetti2011}
for the distinction between ``strong'' and ``weak'' factors.
}
and $\lambda_{ir}  = \varphi_r(\alpha_i)$ and $f_{tr}= \psi_r(\gamma_t)$ are functions of $\alpha_i$ and $\gamma_t$, which can be correlated with $X_{it}$. Once $e_{it}$ is correlated
with $X_{it}$ in this way, then the existing results for the CCE and the LS estimator are not applicable anymore.
The currently known results on the CCE and LS estimator in the presence of an infinite number of factors (e.g.\ \citealt{pesaran2011large}, \citealt{chudik2011weak}, and \citealt{westerlund2013estimation})
require that the ``unaccounted'' factors $ \sum_{r=R+1}^{\infty} \lambda_{ir} \, f_{tr}$ are uncorrelated
 with the regressors, so that they can be considered part of the error term $e_{it}$ without generating an endogeneity problem.

For the case that $X_{it}$ and $e_{it}$ are correlated, 
there exist instrumental variable (IV) generalizations of both the CCE and LS method (e.g. \citealt{harding2011least}, \citealt{LeeMoonWeidner2012}, \citealt{robertson2015iv}, \citealt{MoonShumWeidner2018}, and \citealt{norkute2021instrumental}),
but those require observed instruments $Z_{it}$ that are uncorrelated with $e_{it}$. We do not explore instrumental variable approaches
in this paper.

The two main theoretical contributions of our paper are as follows:
Firstly, we formally show that the LS estimator of \cite{Bai2009} can still provide consistent estimates of $\beta$ in
model~\eqref{model:general}, as long as the number of factors $R=R_{NT}$ used in estimation grows to infinity jointly with $N$ and $T$.  
 Secondly, we
     suggest an alternative estimator for $\beta$, which we denote the  {\it two-way group fixed-effect estimator}
     (generalizing ideas in \citealt{bonhomme2017discretizing} on the discretization of one-way heterogeneity), and we provide conditions
     under which this new estimator is $\sqrt{NT}$-consistent  as $N,T \rightarrow \infty$.
In addition, 
we also suggest inference procedures using both of these estimators, but
we do not formally derive inference results in this paper. Instead, 
we study the properties of our suggested confidence intervals
in Monte Carlo simulations. We also apply the estimators to an empirical
application on UK house price data.

When employing the LS estimator 
with factors from \cite{Bai2009} to
model \eqref{model:general}, we are
effectively estimating a misspecified model
--- the DGP is given by \eqref{model:general},
but the estimating equation by \eqref{model:truncated}. 
\cite{galvao2014estimation}
and 
\cite{juodis2020shock}
have recently studied
linear panel regression models
with additive fixed effects under misspecification.
We consider interactive fixed effects for
estimation here,
and the type of misspecification we allow
for is more restrictive. We therefore
do not have to introduce any pseudo-true parameter,
but we find that the LS estimator is still
consistent for the true value of $\beta$ under 
our assumptions.

It also natural to ask if
our non-linear model $\fct(\alpha_i,\gamma_t)$ is truly necessary,
and also if there is a way to test whether a more standard additive or multiplicative error component structure would be 
  sufficient to capture unobserved heterogeneity. 
For example, 
\cite{kapetanios2019testing}  provide a test for
 whether the multiplicative error component structure is necessary or
 whether a simpler two-way fixed effect estimator would be sufficient. 
In many applications they find
 evidence that the standard two-way fixed effect should work well without the need for interactive fixed-effects. 
However, we do not pursue such a testing approach here, because if the main
goal is inference on $\beta$, then size distortions due to pre-testing quickly 
become a  concern (see e.g.\ \citealt{guggenberger2010impact}). 
Instead, our recommendation for applied researcher is to report
two-way fixed effect estimates jointly with factor augmented estimates
and grouped fixed effect estimates in one table that is then subjected to human interpretation.

 In related work, allowing for the number of factors to grow with sample size has been considered in \cite{li2017determining}, where they explicitly detail a factor model with the number of factors growing with sample size. The difference to this paper is our model admits an infinite number of factors even in small samples
and considers finite factor estimation as an approximation to the true data generating process.

There also exist other work on non-linear generalizations of the
interactive fixed effect and factor model specification.
 \cite{zeleneev2020identification} considers 
 the same model
  \eqref{model:general} in the context of network 
 data, but in his baseline discussion, the outcome $Y_{ij}$ (instead of $Y_{it}$ here) is symmetric in $i$ and $j$. The main difference to our 
 work, however, is that Zeleneev estimates the model based on a strategy that
 identifies agents with similar fixed effect values based on the distribution of their outcomes.
His estimation method is accordingly also 
completely different to ours.

\cite{bodelet2020nonparametric} also consider non-linear functions in place of the standard linear factor model. In our notation, their model assumes
a series of smooth univariate functions 
of the form $\sum_{q=1}^Q\fct_{iq}(\gamma_{tq})$ 
for unobserved heterogeneity. 
Their approach models individual specific responses to structural shocks but is different to our approach, which uses a homogeneous bivariate function. Therefore, their approach allows for discontinuities across how individual effects are modelled whereas our assumption is more restrictive since variation across individuals, via $\alpha_i$, must be smooth.

Other papers on unobserved two-way heterogeneity in panel or network models 
either make more parametric assumptions (e.g.~\citealt{graham2017econometric}, \citealt{dzemski2019empirical}, \citealt{chen2020nonlinear}),
or employ stochastic block or graphon models
(e.g.~\citealt{holland1983stochastic}, \citealt{wolfe2013nonparametric}, \citealt{gao2015rate}, \citealt{auerbach2019identification}),
and are therefore less closely related to our paper.

There are also recent papers that use matrix completion methods for
the purpose of  treatment effect estimation in panel models with
two-way heterogeneity, e.g. \cite{athey17} and \cite{amjad18}, \cite{chlz20},
and \cite{fernandez2021low}. Those papers do not require the
additive separability between the regressors and error term in \eqref{model:general}, but as a result they also have to make stronger 
assumptions and employ more complicated estimation methods than we do 
here. The same is true for \cite{freyberger18}, who considers
a non-separable model with interactive fixed effects.
Alternative non-linear extensions of factor models are discussed, for example, in \cite{cunha2010estimating} and \cite{gunsilius2019independent}.

The rest of the paper is organized as follows. Section~\ref{sec:estimators} introduces
our suggested estimators and inference methods.
Section~\ref{sect:consistencyBAI} and Section~\ref{sec:GroupedFE}
provide asymptotic results
for the LS estimator of  \cite{Bai2009}
and for our new two-way group fixed-effect estimator, respectively.
Section~\ref{sec:Implementation} discusses the practical
implementation.
Monte Carlo simulations are presented in Section~\ref{sect:simulations},
and an empirical application is worked out in Section~\ref{sec:empirical}.

\section{Estimation approaches}
\label{sec:estimators}

In this section, we introduce the two estimation approaches
that are afterwards analyzed and used in the rest of the paper.

\subsection{Least-squares interactive fixed effect estimator}

Following \cite{Bai2009} we consider
\begin{align}\label{est:bai}
   \left( \widehat \beta_{\rm LS} , \, \widehat \lambda, \, \widehat f \right)   
   &=
    \argmin_{(\beta,\lambda,f) \in \mathbb{R}^{K+N \times R+T \times R}}  \,
    \sum_{i=1}^{N}\sum_{t=1}^{T}\left(Y_{it} - X_{it}^\prime \, \beta - \sum_{r = 1}^{R}{\lambda}_{ir} \, {f}_{tr}\right)^2 .
\end{align}
This estimator was introduced for the exact factor model in equation \eqref{model:truncated}, and \cite{Bai2009} shows that it is $\sqrt{NT}$-consistent and asymptotically normally distributed for $N,T \rightarrow \infty$ when the true number of factors is fixed
and known. \cite{MoonWeidner2015} extend this result to the case where the true number
of factors is chosen too large in the estimation. 

To make the estimates $\widehat \lambda$ and $\widehat f$ in \eqref{est:bai} unique, we choose the usual normalization 
$T^{-1} \widehat f' \widehat f = \mathbb{I}_R$,
and $\widehat \lambda' \widehat \lambda$ to be a diagonal
matrix. In addition, it is convenient to introduce the notation
$X \cdot \beta$ for the $N \times T$ matrix with elements  $X_{it}' \beta$.

As explained above, the model \eqref{model:general} that we consider in this paper
can be rewritten as the factor model in \eqref{model:rewritten} with an infinite
number of factors in the true data generating process. This suggests that
the least-squares estimators in \eqref{est:bai} can still be consistent 
as long as the number of factors $R=R_{NT}$ used in the estimation is allowed
to grow to infinity jointly with $N$ and $T$.
Estimation of $\left( \widehat \beta_{\rm LS} , \, \widehat \lambda, \, \widehat f \right)$ is done using an iterative scheme. That is, we start by initialising $\widehat \beta_{\rm LS}$, and then iterate between estimating the principal components of $Y - X\cdot\widehat \beta_{\rm LS}$ to obtain $\left(\, \widehat \lambda, \, \widehat f \right)$  and least squares of $Y = X\cdot\beta + \widehat \lambda\widehat f^\prime + e$ to obtain $\widehat \beta_{\rm LS}$. The convergence metric we use is the sum of squares in \eqref{est:bai}. However, this iteration scheme can converge to a local minimum, and it is therefore important to repeat the procedure with multiple
starting values of $\beta$. For more details on the numerical computation of 
the estimator in \eqref{est:bai} we refer to \cite{Bai2009}
and \cite{MoonWeidner2015}.

This least-squares estimator of \cite{Bai2009} is very well-established in the panel regression
literature. It is used regularly both in empirical and in 
methodological papers, e.g.\ \cite{su2013testing}, \cite{KimOka2014},
\cite{lu2016shrinkage}, \cite{gobillon2016regional}, \cite{totty2017effect}, \cite{su2017time}, \cite{MoonWeidner2017}, \cite{giglio2021asset},
to name just a few.

\subsection{Group fixed effects estimator}
\label{subsec:GroupFEidea}

Here, we introduce two-way grouped fixed effects estimator, which 
discretizes the unobserved heterogeneity that is parameterized by $\alpha_i$ and $\gamma_t$ in the 
spirit of \cite{bonhomme2017discretizing}. We first describe the main 
idea of this estimator  before explaining its practical implementation in more details.

\subsubsection{Main idea}

We partition the set $\{1,\ldots,N\}$ of cross-sectional units
into $G=G_{NT}$ groups such that individuals in the same group have similar
values of $\alpha_i$. Let $g_i \in \{1,\ldots,G\}$ denote the group membership 
of individual $i$. Analogously, we partition the set $\{1,\ldots,T\}$ of time periods
into $C=C_{NT}$ groups such that time periods in the same group have similar
values of $\gamma_t$. Let $c_t \in \{1,\ldots,C\}$ denote the group membership 
of time period $t$. Details on how we construct those partitionings in practice are described below. Notice that within each
group the values of the $\alpha_i$ and $\gamma_t$, respectively,
need not be the same, but in the asymptotic theory in Section~\ref{sec:GroupedFE} the 
differences of those fixed effects within each group
are asymptotically negligible.

Once we have obtained those groups, then we estimate $\beta$ by applying
pooled OLS to the linear fixed-effect model
\begin{align}
   Y_{it} &=   X_{it}^\prime \, \beta + \delta_{i,c_t} +  \nu_{t,g_i}  + \epsilon_{it} ,
   \label{TwoWayFEbasic}
\end{align}
where $\delta_{i,c_t} \in \mathbb{R}$ and  $\nu_{t,g_i} \in \mathbb{R}$ are
nuisance parameters that are jointly estimated with $\beta$, that is, the 
basic two-way grouped fixed effect estimator
for $\beta$ can be written as
\begin{align}\label{est:clusterFE}
    \widehat \beta_{\rm G} &=
    \argmin_{\beta \in \mathbb{R}^K} \,
    \min_{\delta \in \mathbb{R}^{N \times C}} \,
    \min_{\nu \in \mathbb{R}^{T \times G}} \,
    \sum_{i=1}^{N}\sum_{t=1}^{T}\left(Y_{it} -   X_{it}^\prime \, \beta - \delta_{i,c_t} -  \nu_{t,g_i} \right)^2.
\end{align}
Notice that within each pair of groups for $i$ and $t$, that is, for
fixed values of $c_t$ and $g_i$, the model in \eqref{TwoWayFEbasic} is simply 
a standard additive two-way fixed effect model 
$ Y_{it} =   X_{it}^\prime \, \beta + \delta_{i} +  \nu_{t}  + \epsilon_{it}$.
However, as the group membership changes we allow the parameters $\delta_i$
and $\nu_t$ to change arbitrarily, as indicated by the additional
subscripts $c_t$ and $g_i$ in \eqref{TwoWayFEbasic}. We could have written
$\delta_{i,g_i,c_t} +  \nu_{t,g_i,c_t}$   to indicate 
explicitly that both the individual and time effect are allowed to change across groups,
but the notation in \eqref{TwoWayFEbasic} of course already allows for that generality.
The parameters
$\delta$ therefore form an $N \times C$ matrix, while the parameters $\nu$
form a $T \times G$ matrix.

In the introduction, we explained how the LS-estimator with interactive effects can be justified for model \eqref{model:general} by a truncation
of the functional singular value expansion in \eqref{functionSVD}. In other words,
a particular approximation of the function $\fct(\alpha_i,\gamma_t)$ naturally leads to
the estimator in~\eqref{est:bai}.

The grouped fixed effect estimator in \eqref{est:clusterFE}
can be justified analogously by a different approximation of the function $\fct(\alpha_i,\gamma_t)$. Under appropriate regularity conditions,
by a joint Taylor expansion in $\alpha_i$
and $\gamma_t$ around the corresponding 
group means $\overline \alpha_{g_i} = \frac{\sum_{j=1}^n \mathbbm{1}\{ g_i=g_j \} \, \alpha_j} {\sum_{j=1}^n \mathbbm{1}\{ g_i=g_j \}}$
and 
$\overline \gamma_{c_t} = \frac{\sum_{s=1}^T \mathbbm{1}\{ c_t=c_s \} \, \gamma_s} {\sum_{s=1}^T \mathbbm{1}\{ c_t=c_s \}}$,
we find that
\begin{align}\label{eq:taylorApprox}
   \fct(\alpha_i,\gamma_t)
 &= \delta_{i,c_t} + \nu_{t,g_i} + O\left(\|\alpha_i -\overline \alpha_{g_i} \|^2 + \|\gamma_t - \overline \gamma_{c_t} \|^2 \right) ,
\end{align}
where for vectors $\|\cdot\|$ denotes the Euclidean norm, and
\begin{align*}
    \delta_{i,c_t} &:= \fct(\overline \alpha_{g_i},\overline \gamma_{c_t}) +  
   \frac{\partial \fct(\overline \alpha_{g_i},\overline \gamma_{c_t})} {\partial \alpha'_i} 
   (\alpha_i - \overline \alpha_{g_i}) ,
   &
   \nu_{t,g_i} &:=  \frac{\partial \fct(\overline \alpha_{g_i},\overline \gamma_{c_t})} {\partial \gamma'_t} 
   (\gamma_t - \overline \gamma_{c_t}) .
\end{align*}
This shows that the leading order dependence of $\fct(\alpha_i,\gamma_t) $
on $\alpha_i$ and $\gamma_t$ can be described by the additive specification
$\delta_{i,c_t} + \nu_{t,g_i}$ used in \eqref{TwoWayFEbasic}. 
Since this two-way grouped fixed effect ignores the terms  
$O\left(\|\alpha_i -\overline \alpha_{g_i} \|^2 + \|\gamma_t - \overline \gamma_{c_t} \|^2 \right)$
entirely, it is of course crucial to construct the groups such that 
$\alpha_i -\overline \alpha_{g_i} $ 
and $\gamma_t - \overline \gamma_{c_t}$ are small. The clustering algorithm that
we use to achieve that is described in Subsection~\ref{Sect:clusterFE} below.

Notice that a naive application of   \cite{bonhomme2017discretizing}
to our two-way fixed effect model would {\it not} result in our estimating equation
\eqref{TwoWayFEbasic} but  in  $Y_{it}  =   X_{it}^\prime \, \beta + \chi_{g_i,c_t}  + \epsilon_{it}$, where $\chi_{g,c}$ is a fixed effect specific to each
pair of groups $(g,c) \in \{1,\ldots,G\} \times \{1,\ldots,C\}$. 
The analog of equation \eqref{eq:taylorApprox}
for that alternative approach reads
\begin{align*}
   \fct(\alpha_i,\gamma_t)  
   &=  \underbrace{\fct(\overline \alpha_{g_i},\overline \gamma_{c_t})  
   }_{=\chi_{g_i,c_t}} 
      + O\left(\|\alpha_i - \overline \alpha_{g_i}\| + \|\gamma_t -  \overline \gamma_{c_t} \| \right),     
\end{align*}
that is, the approximation error would be of linear order in the
discrepancies $\alpha_i - \overline \alpha_{g_i}$
and $\gamma_t - \overline \gamma_{c_t}$ within groups. By contrast, for our estimating equation \eqref{TwoWayFEbasic}
the resulting approximation error in \eqref{eq:taylorApprox} is of quadratic order, which explains why we prefer that approach.

Finally, notice that if our original model would only contain 
individual specific fixed effects $\alpha_i$, that is,
$Y_{it} = X_{it}^\prime \, \beta + \fct(\alpha_i) + \varepsilon_{it}$,
then the analog of \eqref{TwoWayFEbasic} is the standard additive fixed effect
model $Y_{it} = X_{it}^\prime \, \beta + \delta_i + \varepsilon_{it}$, which
requires no grouping at all, and also entails no approximation error
since we can set $\delta_i = \fct(\alpha_i)$. The way in which we generalize the
grouping ideas in \cite{bonhomme2017discretizing} is therefore quite specific
to the two-way fixed effect model in \eqref{model:general}.

\subsubsection{Hierarchical clustering algorithm}
\label{Sect:clusterFE}

To make the  two-way grouped fixed effect estimator   in \eqref{est:clusterFE}
operational we employ the following three-step algorithm:
\begin{enumerate}[A.]
    \item Obtain the factor loading and factor estimates 
    $\widehat \lambda$ and $\widehat f$ 
    of the interactive fixed effect LS estimator in \eqref{est:bai}
    for a relatively large number of factors $R$. Only keep the leading few 
    $R^*$ factor loading and factor estimates and denote those
    by $\widehat \lambda^* = (\widehat \lambda_{ir} \, : \, i=1,\ldots,N, \;
    r= 1,\ldots,R^*)$ and $\widehat f^* = (\widehat f_{tr} \, : \, t=1,\ldots,T, \;
    r= 1,\ldots,R^*)$.

\begin{table}[tb]
\begin{algorithm}[H]
\caption*{\bf Algorithm}
\begin{algorithmic}[1]
  \scriptsize
  \State Input $\widehat {\lambda}^*_i \in \mathbb{R}^{R^*}$ for all $i = 1,\dots N$. 
   Calculate all pairwise Euclidean distances $A_{ij} = \norm{\widehat {\lambda}^*_i - \widehat {\lambda}^*_j}$, for $i \neq j$, and set $A_{ii} = \infty$.
 Initialize ${\cal P} = \{ \{1\},\{2\}, \ldots,\{N\} \}$
 as a partition of $\{1,\ldots,N\}$.
  
    \If{$\exists \, {\cal C}_* \in {\cal P}$ with  $|{\cal C}_*|=4$} for that ${\cal C}_*$
        \State\label{algo:combHard} 
         Find the solution to $$\min_{\left\{ i,j,l,m  \, : \, {\cal C}_*=\{i,j,l,m\} \right\}}{A_{ij} + A_{lm}},$$
         \phantom{aa\,} 
         and split ${\cal C}_*$ into $\{i,j\}$ and $\{l,m\}$, updating 
         the partition ${\cal P}$.
    \ElsIf{$\exists  \, {\cal C} \in {\cal P}$ with  $|{\cal C}|=1$}
        \State Find the solution to 
        $$ \min_{\left\{ i \in \bigcup_{\{{\cal C} \in {\cal P}: |{\cal C}|=1\}} \right\}}
         \min_{\left\{j \in \bigcup_{\{{\cal C} \in {\cal P}: |{\cal C}|\leq 3\}}\right\}} A_{ij},$$
         \phantom{aa\,} and merge the clusters containing $i$ and $j$ into a single cluster, updating the partition ${\cal P}$.
  \EndIf
  \State Repeat 2-6 until $\{ |{\cal C}| \, : \, {\cal C} \in {\cal P}\} \subset \{2,3\}$.
\end{algorithmic}
\end{algorithm}
\caption{\label{algo:hier.cluster} Hierarchical clustering with minimum single linkage.}
\end{table}

    \item Use the $\widehat \lambda^*_1,\ldots,\widehat \lambda^*_N$ as
    inputs into the clustering algorithm in Table~\ref{algo:hier.cluster}
    to partition the set of individuals $\{1,\ldots,N\}$. This algorithm  
    returns the number $G$ of chosen groups and the group membership $g_i \in \{1,\ldots,G\}$ of each individual. Analogously, we use the inputs
    $\widehat f^*_1,\ldots,\widehat f^*_T$ into the same algorithm to
    partition $\{1,\ldots,T\}$, resulting in the 
    number of groups $C$ and the group membership $c_t \in \{1,\ldots,C\}$
    for each time period. Notice that the words partition, cluster, and group are used interchangeably in this paper.
    
    \item Calculate the two-way grouped fixed effect estimator $\widehat \beta_{\rm G}$
    via pooled OLS according to equation \eqref{est:clusterFE}.

\end{enumerate}

It is constructive to briefly describe our algorithm from Table~\ref{algo:hier.cluster} in words before we discuss features of this whole procedure. Step 1 defines the proxy variable to cluster on ($\widehat\lambda_i^*$ in this instance) and sets the distance metric we wish to use, Euclidean distance, which could easily be changed to another norm or metric. Then, we initialise each individual into their own cluster. Steps 2 and 3 then splits any groups of four into two groups of two, since we want groups of no larger than three in our final output.\footnote{%
We avoid singleton groups, because for those groups the within transformation
removes all information of the data. The restriction to groups
of at most size three is somewhat arbitrary, but we want to 
maintain small group sizes to guarantee that the differences 
in the fixed effects within each group are small, and there is no incidental parameter problem for the linear
fixed effect model in \eqref{TwoWayFEbasic}.
}
The optimisation in Step 3 looks at all combinations of two by two splits within this group of four and takes the smallest sum of distances. This type of optimisation is only suitable for very small groups of individuals because it is a combinatorially hard problem. 

Steps 4 and 5 then finds the solitary individual with the smallest distance to any other existing cluster and merges it to that cluster. Combined with Steps 2 and 3 we create an iteration that merges single clusters one at a time to groups of one, two or three, then splits any groups of four as and when they occur. This means Step 2 can only ever return one group of four. Doing this iteration one at a time is important so that we may split these groups of four immediately and have a larger choice set in Step 5 for each unmatched individual. Also, splitting groups of four into two by two groups rather than groups of one and three avoids infinite iterations. The repetition of Steps 2-5 is guaranteed to converge, and delivers a partition
of $\{1,\ldots,N\}$ into groups of size two or three.

Now to discuss the procedure as a whole. The choice of $R$ in step A here is not too important since we only need this to generate proxy variables for clustering and otherwise dispose of $\beta$ estimates from this initial LS step. The important hyperparameter is the number of proxies per observation, $R^*$, which we choose equal to two to five. We discuss the theoretical properties of the hyperparameter in Section \ref{sec:GroupedFEconsistency} but here outline a heuristic approach to this choice. Choosing $R^*$ to be more than one is important to capture cases when $\alpha_i$ and $\gamma_t$ have higher dimension or when the function $\fct(.,.)$ admits eigenfunctions that are not individually injective maps from $\alpha_i$ or $\gamma_t$. The aim is that a linear combination of non-injective maps provides a better mapping to the closeness of the primitives $\alpha_i$ and $\gamma_t$. An archetypal example of this is discussed in \cite{griebel2014approximation} where they show that the first few eigenfunctions of the exponential kernel are individually clearly not injective maps. 

It is also important to not use too many proxies so as to avoid clustering on noise. This can make for poor matches that result in large deviations between $\alpha_i$ and $\alpha_j$, respectively $\gamma_t$ and $\gamma_s$, that show up in the leading $O\left(\|\alpha_i - \alpha_j \|^2\right)$ and $O\left(\|\gamma_t - \gamma_s \|^2 \right)$ remainder  terms in \eqref{eq:taylorApprox}. Maintaining closeness in these primitives when clustering is key to any argument using Taylor's theorem, however, optimising this proxy hyperparameter is still rough and does require further development. We defer discussion about the presence of noise in factors with relation to the LS estimator to Section \ref{sect:consistencyBAI}.

There are, of course, other choices for proxies such as the cross-sectional moments employed in \cite{bonhomme2017discretizing}. However, as displayed in \eqref{functionSVD} and formulated in \cite{griebel2014approximation}, using the eigenfunctions from the singular value decomposition are a more natural choice since these are direct functions of the primitives $\alpha_i$ and $\gamma_t$ and should in theory lead to closer proximity between these. Since we require cross-sectional and time-dependent clusters for our method, these eigenfunctions also provide a convenient means to find these. 
If one truly believes that other proxy variables have more precise injectivity with these primitives then they could always make those the the input to Step 1 in our clustering algorithm. 

Another divergence from the existing literature is the use of clusters of size two or three, rather than letting these cluster sizes grow with sample size. Our motivation for using these small cluster sizes comes directly from the within-group $\beta$ estimation, i.e. that we do not need consistent estimates of $\delta$ or $\nu$ since these are treated as nuisance parameters that are simply differenced out. Hence, for our purposes, it is more useful to have small groups that are very similar rather than to have large groups that have better central tendency estimates. This very conveniently removes one choice for the analyst, namely the setting of group sizes $G$ or $C$.

This procedure is also a departure from the $k$-means approach taken in \cite{bonhomme2017discretizing}. 
For example, $k$-means with $k \approx N/2$ or $k \approx N/3$ only requires group sizes to be 2 or 3 on average.  This allows for 
a large heterogeneity in group sizes, which we avoid with our hierarchical approach.
Considering the distance metric in our algorithm is interchangeable, we expect our method to produce similar allocations to a $k$-means approach that manually limits cluster sizes to 2 or 3. 

Other cluster methods also exist. For example, in the presence of heterogeneous coefficients~$\beta_i$, \cite{su2016identifying} and \cite{su2019sieve} propose clustering on $\beta_i$. 
The procedure proposed there may suggest useful ways to incorporate heterogeneity in the slope coefficients in this setting, or indeed provide good cluster proxies for the unobserved heterogeneity term. 
It should be noted, however, that in those settings there exists a true group structure, which departs from our approach that considers groups as useful discretisations of the underlying parameter space.

\subsubsection{Split-sample version of the estimator}
\label{sec:DefinedBetaGS}

As explained above,
we estimate the group memberships $g_i$ and $c_t$ that enter into 
the estimator for $\beta$ in \eqref{est:clusterFE} via a clustering method applied
to $\widehat \lambda^*$ and $\widehat f^*$. However, clustering in this way creates dependence across $i$ and $t$ through $\widehat \lambda^*$ and $\widehat f^*$. This dependence 
creates technical difficulties when
establishing asymptotic convergence results. 
To mitigate this dependence we augment the clustering
estimator by a simple sample splitting method.
The resulting group fixed effect estimator with sample splitting 
is given by
\begin{align}\label{est:clusterFEsplit}
    \widehat \beta_{\rm GS} &=
    \argmin_{\beta \in \mathbb{R}^K} \,
    \min_{\delta} \,
    \min_{\nu} \,
    \sum_{i=1}^{N}\sum_{t=1}^{T}\left[Y_{it} -   X_{it}^\prime \, \beta -
    \sum_{s=1}^S
    \mathbbm{1}\left\{ (i,t) \in {\cal O}_{s}  \right\} 
    \left( \delta^{(s)}_{i,c^{(s)}_t} +  \nu^{(s)}_{t,g^{(s)}_i}
    \right) \right]^2 ,
\end{align}
where $S$ is the number of partitions,
and the sets ${\cal O}_{s}$, $s=1,\ldots,S$, are the partitions of the sample space
$\{1,\ldots,N\} \times \{1,\ldots,T\}$, that is,
the observation $(i,t)$ is a member of the $s$'th partition 
if and only if $(i,t) \in {\cal O}_{s}$.
Compared to the original group fixed effect estimator
in \eqref{TwoWayFEbasic}, the group membership indicators $g^{(s)}_i$ and $c^{(s)}_t$ and the group fixed effect
$\delta^{(s)}_{i,c^{(s)}_t}$ and $\nu^{(s)}_{t,g^{(s)}_i}$
are all specific to the partition $s$.
For the purpose of this paper, we choose the number of 
partitions to be $S = 4$ and we split the sample space into four blocks as follows:
\begin{align}\label{eqn:partitions}
    \begin{split}
        \mathcal{O}_1 &=  \{1,\dots,\lfloor N/2\rfloor\} \times \{1,\dots,\lfloor T/2\rfloor\},  \\ 
        \mathcal{O}_2 &=  \{1,\dots,\lfloor N/2\rfloor\} \times \{\lfloor T/2\rfloor + 1,\dots,T\},  \\ 
        \mathcal{O}_3 &= \{\lfloor N/2\rfloor + 1,\dots, N\} \times \{1,\dots,\lfloor T/2\rfloor \},  \\
        \mathcal{O}_4 &=  \{\lfloor N/2\rfloor + 1,\dots,N\} \times \{\lfloor T/2\rfloor + 1,\dots,T\} ,
    \end{split}
\end{align}
where $\lfloor \cdot \rfloor$ is the floor function.

We still need to explain how the group memberships $g^{(s)}_i$ and $c^{(s)}_t$ are obtained here.
The aim of the sample splitting is to avoid any stochastic dependence  between  $g^{(s)}_i$ and $c^{(s)}_t$ and the idiosyncratic noise $\varepsilon_{it}$. For each partition
$s=1,\ldots,S$, we therefore construct the 
group memberships $g^{(s)}_i$ and $c^{(s)}_t$ without
using outcomes $Y_{it}$ for observations $(i,t)$ of that partition $\mathcal{O}_s$.
For that purpose, we define the sets
\begin{align}\label{eqn:partitionsProxies}
    \begin{split}
        \mathcal{O}^*_1 &= \{1,\dots,N\} \times  \{1,\dots,\lfloor T/2\rfloor\}, \\ 
        \mathcal{O}^*_2 &= \{1,\dots,N\} \times  \{\lfloor T/2\rfloor + 1,\dots,T\}, \\ 
        \mathcal{O}^*_3 &= \{1,\dots,\lfloor N/2\rfloor\} \times  \{1,\dots,T\}, \\
        \mathcal{O}^*_4 &=  \{\lfloor N/2\rfloor  + 1,\dots,N\} \times  \{1,\dots,T\} ,
    \end{split}
\end{align}
and for $\tilde s=1,\ldots,4$, we define the corresponding least-squares factor and loading
estimates
\begin{align}\label{est:baiPartitions}
   \left( \, \widehat \lambda^{(\tilde s)}, \, \widehat f^{(\tilde s)} \right)   
   &=
    \argmin_{(\lambda,f) \in \mathbb{R}^{N^*_{\tilde s} \times R+T^*_{\tilde s} \times R}}  \, \min_{\beta\in\mathbb{R}^K}
    \sum_{(i,t)\in\mathcal{O}^*_{\tilde s}}\left(Y_{it} - X_{it}^\prime \, \beta - \sum_{r = 1}^{R}{\lambda}_{ir} \, {f}_{tr}\right)^2 ,
\end{align}
which is simply the LS estimator in \eqref{est:bai}
applied only to the $N^*_{\tilde s} \times T^*_{\tilde s}$ subpanel
of observations $(i,t)\in\mathcal{O}^*_{\tilde s}$,
and we also impose the same normalization on the factors
and loadings explained after \eqref{est:bai}.\footnote{
Notice that factor model proxies can only be used to compare observations from the same factor estimation sample space. This is because factors are only identified up to rotations, where these rotations may differ across estimation samples. 
}
Now, for the original partition $\mathcal{O}_s$, $s=1,\ldots 4$,
we construct the group membership $g^{(s)}_i$ of unit $i$
by applying the clustering algorithm in Table~\ref{algo:hier.cluster} to the
loading estimates $\widehat \lambda^{(\tilde s)}_i$
obtained from the subpanel $\mathcal{O}_{\tilde s}$
with $\tilde s=\tilde s(s)$ given by
\begin{align*}
   \tilde s = \left\{ \begin{array}{ll}
        2 & \text{for} \; s=1, \\
        1 & \text{for} \; s=2, \\
        2 & \text{for} \; s=3, \\
        1 & \text{for} \; s=4.
   \end{array} \right.
\end{align*}
Analogously, for the partition $\mathcal{O}_s$, $s=1,\ldots 4$,
we construct the group membership $c^{(s)}_t$ of time period $t$
by applying the clustering algorithm in Table~\ref{algo:hier.cluster} to the
factor estimates $\widehat f^{(\tilde s)}_t$
obtained from the subpanel $\mathcal{O}_{\tilde s}$
with $\tilde s=\tilde s(s)$ given by
\begin{align*}
   \tilde s = \left\{ \begin{array}{ll}
        4 & \text{for} \; s=1, \\
        4 & \text{for} \; s=2, \\
        3 & \text{for} \; s=3, \\
        3 & \text{for} \; s=4.
   \end{array} \right.
\end{align*}

\begin{figure}
    {\centering
    \includegraphics[scale = 0.45]{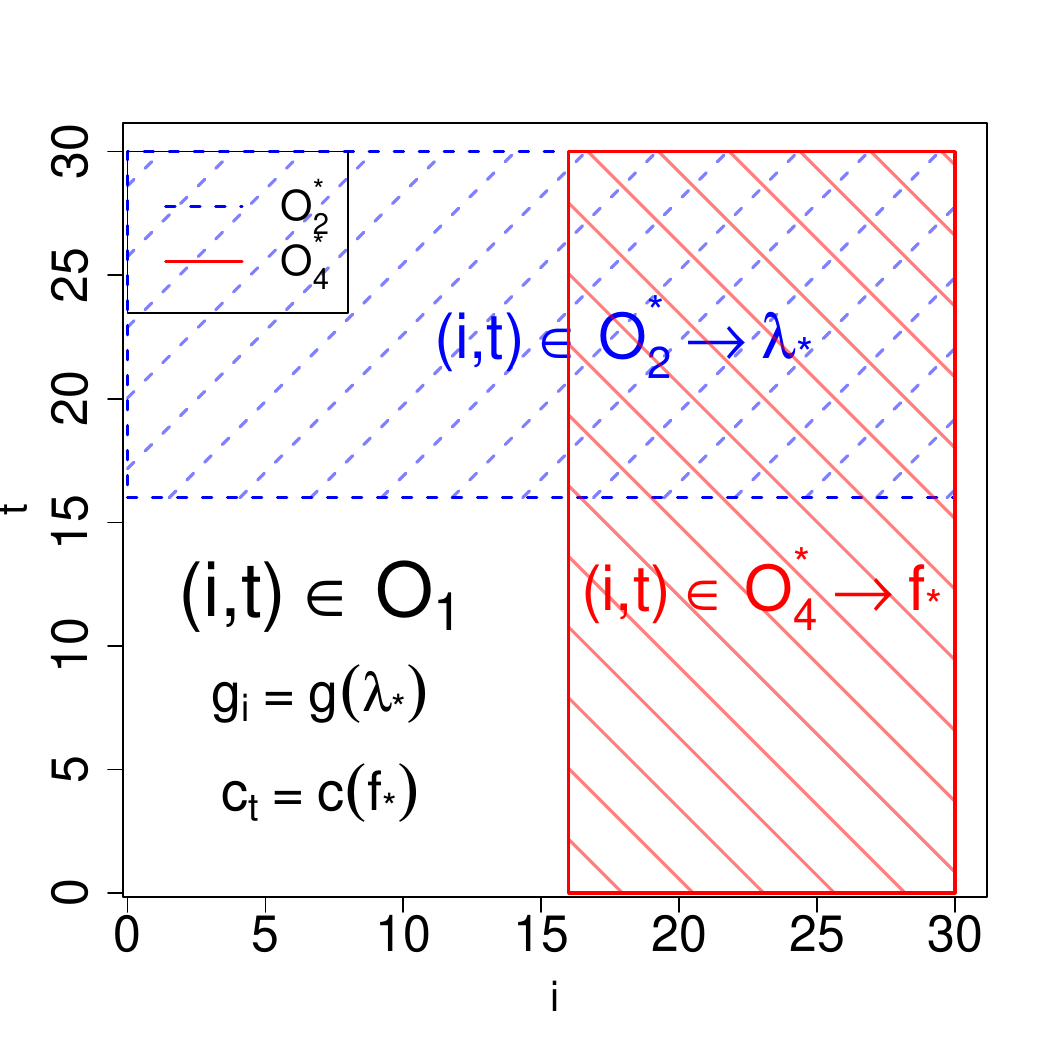}
    \caption{Sample split for partition 1}
    \label{fig:SplitSample}
    }

\end{figure}

Figure \ref{fig:SplitSample} details an example of this sample splitting technique for clustering within partition $\mathcal{O}_1$. 
Here we see clearly how the partitions for proxy estimation $\mathcal{O}^*_2$ and $\mathcal{O}^*_4$ do not overlap with the partition we are grouping within, $\mathcal{O}_1$. 
This guarantees that  we do not introduce any dependence between the group functions and the noise term by making sure grouping within each partition is not a function of the independent noise term, $\varepsilon_{it}$, from observations within that partition. 
This becomes important in our derivations in Section~\ref{sec:GroupFEsplitnormal}, where
we require that the process $X_{it} \, \varepsilon_{it}$ remains zero mean and independently distributed after group means are projected out.

With these cluster assignments it then becomes straightforward to estimate \eqref{est:clusterFEsplit} by first taking within-cluster mean-differences for each partition and then simply apply pooled OLS on the transformed variables.

Notice that by allowing the partitioning in \eqref{eqn:partitionsProxies} used to estimate proxy variables to extend over the whole sample of either $N$ or $T$, we get better estimates than just using the original partition 
\eqref{eqn:partitions}.
As discussed earlier, it is crucial to avoid poor initial estimates of proxy variables to better approximate the residual terms in the Taylor expansion in expression \eqref{eq:taylorApprox}.

\section{Asymptotic results for the least squares estimator}\label{sect:consistencyBAI}

Here,
we derive convergence rate results for the 
least-squares estimator \eqref{est:bai} 
for a data generating process given by \eqref{model:general}.
Thus, we generalize the consistency results in \cite{Bai2009}
and \cite{MoonWeidner2015} to the case where the underlying 
panel regression model does not satisfy the factor model 
in \eqref{model:truncated}. However, as explained in the introduction,
the factor model in \eqref{model:truncated} can be viewed as an approximation
of \eqref{model:general}, and this approximation idea can be formalized asymptotically, as long as we allow the number of factors $R=R_{NT}$
used in the least-squares estimator \eqref{est:bai} to grow with $N$
and $T$.

\subsection{Consistency and convergence rate}\label{sect:bai.consistency}
From now on, we denote the true parameter $\beta$ that generates the data by $\beta^0$.
We rewrite model \eqref{model:general} as
\begin{align}\label{eq:consistency.model:general}
    Y_{it} &= X_{it}^\prime \, \beta^0 + \Gamma_{it} + \varepsilon_{it}, 
\end{align}
where both $\Gamma_{it}$ and $\varepsilon_{it}$ are unobserved.
Our main convergence rate results in Theorem~\ref{thm:bai.con}
actually hold
for any $N\times T$ matrix $\Gamma=(\Gamma_{it})$ that satisfies Assumption~\ref{ass:SVD} below, but ultimately we are 
of course interested in the case $\Gamma_{it} = \fct(\alpha_i,\gamma_t)$.
Arbitrary dependence between $X_{it}$ and $\Gamma_{it}$ is allowed for, so there is a potential endogeneity problem. 

Remember that the components of the $K$-vector $X_{it}$ are denoted by $X_{it,k}$,
$k=1,\ldots,K$.
Let $X_k=(X_{it,k})$ and $\varepsilon=(\varepsilon_{it})$ be
$N \times T$ matrices. 
For a matrix $A$ we denote $r$'th largest singular value
by $\sigma_r(A)$, that is, $\sigma^2_r(A)$ is equal to the 
$r$'th largest eigenvalue of $AA'$.
Furthermore, for
 matrices we denote the spectral
norm by $\| \cdot \|$, and for vectors the norm $\| \cdot \|$ denotes
the Euclidean norm. We write wpa1 for ``with probability approaching
one''.
We impose the following assumptions.

\begin{assumption}[\bf Bounded norms of $X_k$ and $\varepsilon$]~
\label{ass:SN}
\begin{itemize}
  \item [(i)] $\displaystyle \frac 1 {NT} \sum_{i=1}^N \sum_{t=1}^T (X_{it,k})^2 = { O}_P(1)$, \quad \text{for} $k=1,\ldots,K$.
  \item [(ii)] $\displaystyle \| \varepsilon \| = { O}_P\left(\sqrt{\max\{N,T\}} \right)$. 
\end{itemize}
\end{assumption}

\begin{assumption}[\bf Weak Exogeneity of $X_k$]~
  \label{ass:EX}
  $\displaystyle \sum_{i=1}^N \sum_{t=1}^T X_{it,k} \varepsilon_{it}  = {O}_P(\sqrt{NT})$,
              \hfill   \text{for} $k=1,\ldots,K$.            
\end{assumption}

\begin{assumption}[\bf Non-Collinearity of $X_k$]~
  \label{ass:NC}
  Consider linear combinations
  $\delta \cdot X :=\sum_{k=1}^K \delta_k X_k$ of the regressors $X_k$ with
  vectors $\delta \in \mathbb{R}^K$ such that $\|\delta\|=1$.
  Assume that there exists a constant $b>0$ such that
  \begin{align*}
     \min_{\{\delta \in \mathbb{R}^{K}, \, \|\delta\|=1\}}   \,
     \sum_{r=2R_{NT}+1}^{\min(N,T)} \,
      \sigma^2_{r} \left[ \frac{(\delta \cdot X)} {\sqrt{NT}}\right] \;  &\geq b \; , \qquad
     \text{wpa1.}
  \end{align*}
\end{assumption}

\begin{assumption}[\bf Singular value decay]~
  \label{ass:SVD}
   There exists a constant  $\rho>3/2$ such that
   $$\frac 1 {NT} \sum_{r=R_{NT}+1}^{\min(N,T)}
   \sigma^2_r(\Gamma) = O_P \left( R_{NT}^{1-2\rho} \right). $$
\end{assumption}

\medskip

Here, $R=R_{NT}$ is the number of factors that is chosen 
in the computation of the 
least-squares estimator $ \widehat \beta_{\, \rm LS}$ in \eqref{est:bai}. 
We require  $R_{NT} \rightarrow \infty$ as $N,T \rightarrow \infty$
to obtain consistency of  $\widehat \beta_{\, \rm LS}$.

Lemma~\ref{lemma:SG} below justifies Assumption~\ref{ass:SVD} for our main case of interest $\Gamma_{it} = \fct(\alpha_i,\gamma_t)$, and we therefore postpone
the discussion of that assumption until we discuss that lemma.
 Assumptions~\ref{ass:SN}-\ref{ass:NC} are very similar
to the assumptions used in \cite{Bai2009}
and \cite{MoonWeidner2015}
 to show consistency of $ \widehat \beta_{\, \rm LS}$,\footnote{Compared to the assumptions imposed in the
consistency Theorem 4.1 of \cite{MoonWeidner2015},
the only two differences  are that we allow for $R_{NT}$
to grow asymptotically, and that Assumption~\ref{ass:SN}(i)
requires a bound on the Frobenius norm $\|X_k\|_F := \left( \sum_{i=1}^N \sum_{t=1}^T X_{it,k}^2 \right)^2$ instead of
a bound on the spectral norm $\|X_k\|$. Since
$\|X_k\| \leq \|X_k\|_F$, our assumption here is technically stronger, 
but in practice, one likely will justify any bound on $\|X_k\|$
using the inequality $\|X_k\| \leq \|X_k\|_F$ anyway.
}
and the following discussion of those assumptions will, accordingly,
be brief. 

Assumption~\ref{ass:SN}(i) follows from Markov's inequality
as long as the second moment of $X_{it,k}$ is uniformly bounded. Assumption~\ref{ass:SN}(ii)
follows, for example, from the inequality in \cite{Latala2006}
if $\varepsilon_{it}$ has mean zero, uniformly bounded fourth moment,
and is independent across $i$ and $t$. 
However, the assumption still holds if $\varepsilon_{it}$ is weakly
correlated across $i$ and over $t$, see \cite{MoonWeidner2015}.
Assumption~\ref{ass:EX} is satisfied as long as $X_{it} \varepsilon_{it}$
has zero mean, uniformly bounded second moment, and is weakly correlated across
$i$ and over $t$.

To understand Assumption~\ref{ass:NC}, notice first that
for $R_{NT}=0$ the expression $\sum_{r} \,
      \sigma^2_{r} \left[ \frac{(\delta \cdot X)} {\sqrt{NT}}\right]$
      in that assumption becomes
\begin{align*}
   \sum_{r= 1}^{\min(N,T)} \,
      \sigma^2_{r} \left[ \frac{(\delta \cdot X)} {\sqrt{NT}}\right]
      = \frac 1 {NT}  \sum_{i=1}^N \sum_{t=1}^T
      (\delta \cdot X)_{it}^2 .
\end{align*}
Thus, for $R_{NT}=0$, the assumption is just a standard non-collinearity
assumption on the regressors, which demands that every non-trivial 
linear combination $\delta \cdot X$ of the regressors has sufficient variation.
Next, for $R_{NT}>0$ we have
\begin{align*}
   \sum_{r= 2R_{NT} + 1}^{\min(N,T)} \,
      \sigma^2_{r} \left[ \frac{(\delta \cdot X)} {\sqrt{NT}}\right]
      = \frac 1 {NT}  \sum_{i=1}^N \sum_{t=1}^T
      (\delta \cdot X)_{it}^2 
      -  \sum_{r= 1}^{2R_{NT}} \,
      \sigma^2_{r} \left[ \frac{(\delta \cdot X)} {\sqrt{NT}}\right] ,
\end{align*}
that is, the assumption demands that the variation in the linear 
 combination $\delta \cdot X$ does not only come from the leading
 $2R_{NT}$ singular values of this linear combination. 
 
 Of course, if ${\rm rank}(\delta \cdot X) \leq 2 R_{NT}$, then
 for $r > 2R_{NT}$
all the singular values 
$ \sigma_{r} \left(  \delta \cdot X \right)$ 
are equal to zero
and the assumption is violated. 
Thus, a necessary condition for Assumption~\ref{ass:NC} is that
${\rm rank}(\delta \cdot X) > 2 R_{NT}$, that is,
any linear combination of the regressors needs to be a ``high-rank matrix''.
For example, a constant regressor
 $X_{it,1}=1$  violates this assumption (it constitutes a rank one matrix, 
which could be easily absorbed into the unobserved $\Gamma_{it}$),  
but if the regressors are drawn from a DGP with random variation across
both $i$ and $t$, then they typically have full rank.
Again, we refer to the existing papers on the least-squares estimator
with interactive fixed effects for further discussion
of this generalized non-collinearity condition on the regressors.

\medskip

\begin{theorem}[\bf Consistency of $\widehat \beta_{\, \rm LS}$]\label{thm:bai.con}
    Let Assumptions \ref{ass:SN} -- \ref{ass:SVD}
    hold, and furthermore assume that $R_{NT} = o(\min\{N,T\})$ as
    $N,T \rightarrow \infty$.
    Then we have
      \begin{align}
        \widehat \beta_{\, \rm LS} - \beta^0 =  
        O_P\Big(R_{NT}^{(3-2\rho)/2}\Big) +
        O_P\Big(R_{NT} \, (\min\{N,T\})^{-1/2}\Big). 
        \label{MainResultsLS}
    \end{align}
    Therefore, by choosing $R_{NT}\propto (\min\{N,T\})^{\frac{1}{2\rho-1}}$
    we obtain that
    \begin{align*}
        \widehat \beta_{\, \rm LS} - \beta^0 = {O}_P\Big(\min\{N,T\}^{\frac{3 - 2\rho}{2(2\rho-1)}}\Big).
    \end{align*} 
\end{theorem}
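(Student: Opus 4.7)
My plan is to extend the consistency proof of \cite{MoonWeidner2015} for least squares with overestimated factor number to the infinite-rank setting considered here, in which the tail of $\Gamma$ is controlled by Assumption~\ref{ass:SVD} instead of vanishing.

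I would first profile out $(\lambda, f)$ from the least-squares criterion to obtain the concentrated objective
\[
   Q(\beta) \;=\; \frac{1}{NT}\sum_{r=R+1}^{\min(N,T)} \sigma_r^2\!\big(Y - X \cdot \beta\big),
\]
so that $Q(\widehat\beta_{\rm LS}) \leq Q(\beta^0)$ by optimality. Setting $\delta := \widehat\beta_{\rm LS} - \beta^0$, writing $Y - X\cdot\widehat\beta_{\rm LS} = -X\cdot\delta + \Gamma + \varepsilon$, and using the identity $\sum_{r>R}\sigma_r^2(M) = \|M\|_F^2 - \sum_{r\le R}\sigma_r^2(M)$, the optimality inequality reduces to
\[
   \|X\cdot\delta\|_F^2 \;\leq\; 2\,\langle X\cdot\delta,\, \Gamma + \varepsilon\rangle \;+\; \Big[\sum_{r\le R}\sigma_r^2(X\cdot\delta-\Gamma-\varepsilon) \;-\; \sum_{r\le R}\sigma_r^2(\Gamma + \varepsilon)\Big].
\]

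Next, Assumption~\ref{ass:NC} lower-bounds the left-hand side by $b\,NT\|\delta\|^2$ wpa1, since $\|X\cdot\delta\|_F^2 \ge \sum_{r>2R}\sigma_r^2(X\cdot\delta) \ge b\,NT\,\|\delta\|^2$. On the right-hand side, Assumption~\ref{ass:EX} gives $|\langle X\cdot\delta, \varepsilon\rangle| = O_P(\|\delta\|\sqrt{NT})$. For the interaction $\langle X\cdot\delta, \Gamma\rangle$, I split $\Gamma = \Gamma^{(R)} + \Gamma^{\mathrm{res}}$ into its best rank-$R$ approximation plus the truncation tail: Cauchy--Schwarz together with Assumption~\ref{ass:SVD} handles the tail part at order $O_P(NT\|\delta\|R^{(1-2\rho)/2})$, while the rank-$R$ part combines with the bracketed singular-value difference via a Davis--Kahan-type perturbation argument. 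Using Assumption~\ref{ass:SN}(ii) on $\|\varepsilon\|$ and tracking the powers of $R$ carefully, the combined contribution works out to $O_P(NT\|\delta\|R^{(3-2\rho)/2})$ for the bias piece and $O_P(R\|\delta\|\sqrt{NT\max(N,T)})$ for the noise-amplification piece. Substituting back into $b\,NT\|\delta\|^2 \le (\text{stuff})\cdot\|\delta\|$ and dividing by $NT\|\delta\|$ produces the two-term rate of \eqref{MainResultsLS}; balancing the two summands yields the optimal choice $R_{NT} \propto \min(N,T)^{1/(2\rho-1)}$ stated after the display.

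The main technical obstacle is the perturbation analysis that controls the bracketed difference of leading squared singular-value sums. A naive Weyl-type bound of order $R\|X\cdot\delta\|^2$ would overwhelm the signal $b\,NT\|\delta\|^2$ once $R = R_{NT} \to \infty$; instead one needs a finer Davis--Kahan argument exploiting the polynomial singular-value decay of $\Gamma$ from Assumption~\ref{ass:SVD}, so that the rank-$R$ subspaces spanned by $\widehat N$ and by $\Gamma^{(R)}$ are compared directly rather than bounded term by term. Tracking the exact power of $R$ through this perturbation is what produces the $R^{(3-2\rho)/2}$ bias rate rather than the naive $R^{(1-2\rho)/2}$ one would get from Cauchy--Schwarz alone, and it is this step that is the delicate part of the proof.
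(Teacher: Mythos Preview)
Your overall architecture matches the paper's: profile out $(\lambda,f)$, compare the concentrated objective at $\widehat\beta_{\rm LS}$ and at $\beta^0$, lower-bound the quadratic in $\delta$ via Assumption~\ref{ass:NC}, and control the cross-terms using Assumptions~\ref{ass:SN}--\ref{ass:SVD}. Where you diverge, and where the proposal runs into trouble, is the treatment of your bracketed term
\[
   \sum_{r\le R}\sigma_r^2(X\cdot\delta-\Gamma-\varepsilon)\;-\;\sum_{r\le R}\sigma_r^2(\Gamma+\varepsilon).
\]
You propose a Davis--Kahan-type perturbation argument that exploits the singular-value decay of $\Gamma$. This is both unnecessary and, as stated, not obviously workable: Davis--Kahan controls eigenspace perturbation and requires a spectral gap between $\sigma_R$ and $\sigma_{R+1}$, which Assumption~\ref{ass:SVD} does not supply (indeed, the whole point of the setting is that the singular values of $\Gamma$ decay smoothly with no gap).

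The paper sidesteps this entirely with a rank-doubling trick. Write $\Gamma = e^* + S$, where $e^*$ is the best rank-$R$ approximation and $S$ is the tail, and set $e := S + \varepsilon$. Since $e^*$ has rank $R$, the minimization over rank-$R$ matrices $\lambda f'$ in $\|{-X\cdot\delta + e^* + e - \lambda f'}\|_F^2$ is bounded below by the minimization over rank-$2R$ matrices $\tilde\lambda\tilde f'$ in $\|{-X\cdot\delta + e - \tilde\lambda\tilde f'}\|_F^2$. After this absorption, the only perturbation tool needed is the elementary inequality $|{\rm Tr}(A)|\le {\rm rank}(A)\,\|A\|$ applied to ${\rm Tr}[(\delta\cdot X)P_{\tilde f}\,e']$ and ${\rm Tr}[e\,P_{\tilde f}\,e']$, with ${\rm rank}(P_{\tilde f})\le 2R$. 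This is exactly the mechanism in \cite{MoonWeidner2015}; the only new ingredient here is that $e$ contains the tail $S$, whose spectral norm is crudely controlled via $\|S\|\le\|S\|_F = O_P\big(\sqrt{NT}\,R^{(1-2\rho)/2}\big)$ from Assumption~\ref{ass:SVD}. The factor $R$ in $R^{(3-2\rho)/2}=R\cdot R^{(1-2\rho)/2}$ thus comes transparently from the rank of the projector, not from any delicate perturbation analysis. Your remark that Cauchy--Schwarz ``alone'' would give $R^{(1-2\rho)/2}$ has the direction of the inequality reversed: $R^{(1-2\rho)/2}$ is the \emph{smaller} rate, and the extra $R$ is a cost, not a refinement.
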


\medskip

Assumption~\ref{ass:SVD} demands $\rho>3/2$, and the
 first term on the right-hand side of \eqref{MainResultsLS} is therefore 
 decreasing in the number of factors $R_{NT}$ used for estimation.
 By contrast, the second term on the right-hand side of \eqref{MainResultsLS}
 is increasing in $R_{NT}$.
 The final part of the theorem simply gives the rate for $R_{NT}$ that optimally balances the trade-off between those two terms.
 This   is  
 analogous the  bias-variance trade-off for bandwidth selection 
 in non-parametric estimation.
Indeed,
the term $O_P\left(R_{NT}^{(3-2\rho)/2}\right)$ is 
due to the approximation error of the $N \times T$ matrix $\Gamma$ (which can have full rank)
by  only a finite number of factors (of rank only $R_{NT}$).
As expected, the approximation
error is small when choosing a more flexible model (large $R_{NT}$).

The second term on the right-hand side of \eqref{MainResultsLS} also
occurs when one considers a conventional interactive fixed effect model,
where the true matrix $\Gamma$ itself is assumed to have low rank
and the approximation error is therefore not present (for $R_{NT} \geq {\rm rank}(\Gamma)$).  
For that case, the first paper to derive the large $N$, $T$ asymptotic properties for
$ \widehat \beta_{\, \rm LS}$ was \cite{Bai2009}. He imposes
assumptions (in particular, $R_{NT} = {\rm rank}(\Gamma) = {\rm constant}$, and all  factors in $\Gamma$ are ``strong factors'') that are strong enough to derive the result 
$\widehat \beta_{\, \rm LS} - \beta^0 = O_P(1/\sqrt{NT})$ when $N$ and $T$
grow at the same rate.\footnote{For general sequences of $N, T \rightarrow \infty$ one finds $\widehat \beta_{\, \rm LS} - \beta^0 = O_P(1/N+1/T)$ under
 Bai's assumptions.
}
However, without such strong assumptions, the estimator $\widehat \beta_{\, \rm LS}$
may very well converge at a slower rate. 
For example, in Section~4.3 of \cite{MoonWeidner2015} a concrete
data generating process is given for which $ \widehat \beta_{\, \rm LS}$ only converges at the slower rate $ (\min\{N,T\})^{-1/2}$.\footnote{
In that example,
the unnecessarily estimated loadings $\widehat \lambda$ and factors 
 $\widehat f$ are correlated with  the regressors, and by controlling for such endogenous 
 $\widehat \lambda$
and $\widehat f$ one ends up reducing the convergence rate
of $ \widehat \beta_{\, \rm LS}$ from $\sqrt{NT}$
to $  (\min\{N,T\})^{1/2}$. 
}
The key difference between that
example and \cite{Bai2009} is that $R_{NT} > {\rm rank}(\Gamma)$, that is,
the number of factors in the estimation is larger than the true number of factors.
More generally, as soon as the ``strong factor'' assumption or the known number
 of factors assumption ($R_{NT} = {\rm rank}(\Gamma)$) are violated, there is no guarantee
 that $ \widehat \beta_{\, \rm LS}$ converges at the fast rate derived
 in \cite{Bai2009}. In the absence of those assumptions, Theorem~4.1 in
 \cite{MoonWeidner2015} shows that
$\widehat \beta_{\, \rm LS} - \beta^0 =  
        O_P\Big( (\min\{N,T\})^{-1/2}\Big)$ when $R_{NT} \geq {\rm rank}(\Gamma)$ is fixed.
The second term on the right-hand side of \eqref{MainResultsLS} exactly generalizes
that rate to the case where $R_{NT}$ is allowed to grow asymptotically.

In our setting, we cannot impose the ``strong factor'' 
or known number of factor assumptions in \cite{Bai2009}, because,
as explained in the introduction, the data generating process
$\Gamma_{it} = \fct(\alpha_i,\gamma_t)$ typically generates
an infinite sequence of factors of decreasing strength. Demanding
all those factors in equation \eqref{model:rewritten} to be strong
factors  makes no sense in our setting. Deriving a convergence rate for 
$ \widehat \beta_{\, \rm LS}$ faster than $ (\min\{N,T\})^{-1/2}$
in our model therefore
appears to very challenging, to say the least. This is of course,
the key motivation for why we also consider the two-way grouped fixed effect
estimator in this paper, see Section~\ref{sec:GroupedFE} below.

\begin{remark}
   If we change
   Assumption~\ref{ass:SVD} to   
   \begin{align}
   \sigma_r(\Gamma) \, \leq \, c \, \sqrt{NT} \, r^{-\rho} ,
       \label{Alernative:ass:SVD}
   \end{align}
    for all $r\in\{R_{NT}+1,\dots \min\{N,T\}\}$, wpa1,
   and some constant $c>0$,    
   then the result in equation \eqref{MainResultsLS} of Theorem~\ref{thm:bai.con} can be improved to
    \begin{align*}
        \widehat \beta_{\, \rm LS} - \beta^0 = 
        {O}_P\left( { R_{NT}^{\, 1-\rho} }\right) + 
        {O}_P\left({R_{NT}} \, {(\min\{N,T\})^{-1/2}}\right) ,
    \end{align*}
    and we can then obtain consistency of $\widehat \beta$ under the weaker
    condition $\rho>1$.
    Condition \eqref{Alernative:ass:SVD} implies  Assumption~\ref{ass:SVD},
    but not vice versa,
    because Assumption~\ref{ass:SVD} is a condition on the sum of 
    the squared singular values, not on each of the singular values separately.
    It turns out to be technically much easier to verify Assumption~\ref{ass:SVD}
    than to verify \eqref{Alernative:ass:SVD}
    for our main case of interest $\Gamma_{it} = \fct(\alpha_i,\gamma_t)$,\footnote{
    This is because not only the decay of  $\sigma_r(\Gamma)$
    as $r \rightarrow \infty$ needs to be controlled, but also the convergence
    rate of the expressions as $N,T \rightarrow \infty$.
    }
    as we do in Lemma~\ref{lemma:SG} below.
    This explains why we have chosen that formulation of the assumption
    and theorem
    in our baseline presentation. 
\end{remark}

Despite the technical subtleties explained in the preceding remark,
one should still interpret Assumption~\ref{ass:SVD} 
as imposing a particular decay rate for the singular values $\Gamma$,
as in display \eqref{Alernative:ass:SVD} of the remark. 
Thus, the leading few singular value
can have a magnitude of $\sqrt{NT}$, as would be the case under the 
``strong factor assumption'' in the usual interactive fixed effects model
of \cite{Bai2009}.
However, as $N$, $T$, $r$ all converge to infinity we require the $\sigma_r(\Gamma)$ to converge at the polynomial rate $r^{-\rho}$
in order to satisfy the summability condition in Assumption~\ref{ass:SVD}.

The results in this section so far have not made any use of the
structure $\Gamma_{it} = \fct(\alpha_i,\gamma_t)$.
Theorem~\ref{thm:bai.con} is applicable to 
any other data generating process for $\Gamma$ that satisfies Assumption~\ref{ass:SVD}.
A full-rank matrix $\Gamma$ satisfying that assumption could, for example,
also be generated by a dynamic factor model (see e.g.\ \citealt{forni2000generalized,forni2005generalized}, \citealt{stock2002macroeconomic}).\footnote{
One can generate an infinite number of ``static factors'', as in \eqref{model:rewritten}, via a dynamic factor model with a finite number of dynamic factors.}

In the following we now focus exclusively on the case $\Gamma_{it} = \fct(\alpha_i,\gamma_t)$. 
The following lemma provides conditions on the function 
$\fct(\cdot,\cdot)$ that guarantee that  Assumption~\ref{ass:SVD} is satisfied.

\begin{lemma}	\label{lemma:SG}
Assume 
 $\alpha_i \in \Omega_\alpha$
and $\gamma_t \in \Omega_\gamma$, and
that
    $\fct:\Omega_\alpha \times \Omega_\gamma \rightarrow \mathbb{R}$ is $p$ times continuously differentiable
    in both arguments,
    with uniformly bounded mixed-derivatives up to order $p$,
    and the domains $\Omega_\alpha\subset\mathbb{R}^{n_\alpha}$ and $\Omega_\gamma\subset\mathbb{R}^{n_\gamma}$ are smooth and bounded.
Then for $\Gamma_{it}=\fct(\alpha_i,\gamma_t)$  Assumption~\ref{ass:SVD}
is satisfied { for $R_{NT} \rightarrow\infty$} with $\rho =  \frac{p}{\min\left\lbrace n_\alpha , n_\gamma \right\rbrace}$.
\end{lemma}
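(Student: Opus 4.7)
The plan is to construct an explicit low-rank approximation to the matrix $\Gamma$ by approximating the function $\fct$ by a multivariate polynomial, and then using the Eckart--Young characterization of singular values. The bound will turn out to be purely deterministic, so no probabilistic argument beyond taking sup over the compact domain is needed.

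First, I would invoke a standard multivariate approximation theorem: since $\fct \in C^p(\Omega_\alpha \times \Omega_\gamma)$ with $\Omega_\alpha, \Omega_\gamma$ smooth and bounded, one may (after a Whitney/Stein extension to an enclosing box, exploiting the smoothness of the boundary) apply a Jackson-type result to get a polynomial $P_d(\alpha,\gamma)$ of total degree at most $d$ with
\begin{equation*}
   \sup_{(\alpha,\gamma) \in \Omega_\alpha \times \Omega_\gamma} \bigl|\fct(\alpha,\gamma) - P_d(\alpha,\gamma)\bigr| \;\leq\; C_\fct \, d^{-p}.
\end{equation*}
Next, by grouping monomials of $P_d$ according to their $\gamma$-exponent, $P_d(\alpha,\gamma) = \sum_{|l| \le d} p_l(\alpha) \, \gamma^l$, which exhibits $P_d$ as a sum of at most $\binom{d+n_\gamma}{n_\gamma} = O(d^{n_\gamma})$ separable (rank-one) terms. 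Symmetrically grouping by the $\alpha$-exponent gives at most $O(d^{n_\alpha})$ terms, so the separable rank of $P_d$ is bounded by $K(d) := C' \, d^{\min\{n_\alpha, n_\gamma\}}$.

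Second, evaluating at the data, define $\Gamma^R_{it} := P_d(\alpha_i,\gamma_t)$. Since $P_d$ has separable rank $\le K(d)$, the matrix $\Gamma^R$ has matrix rank $\le K(d)$. By the Eckart--Young theorem (the truncated SVD is the best low-rank Frobenius approximant),
\begin{equation*}
   \sum_{r > K(d)} \sigma_r^2(\Gamma)
   \;\leq\; \bigl\|\Gamma - \Gamma^R\bigr\|_F^{\,2}
   \;=\; \sum_{i=1}^N \sum_{t=1}^T \bigl(\fct(\alpha_i,\gamma_t)-P_d(\alpha_i,\gamma_t)\bigr)^2
   \;\leq\; NT \cdot C_\fct^{\,2} \, d^{-2p}.
\end{equation*}

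Third, given $R = R_{NT}$, choose $d$ maximal with $K(d) \le R$, so that $d \asymp R^{1/\min\{n_\alpha,n_\gamma\}}$. Plugging in,
\begin{equation*}
   \frac{1}{NT} \sum_{r > R_{NT}} \sigma_r^2(\Gamma)
   \;=\; O\!\left( d^{-2p} \right)
   \;=\; O\!\left( R_{NT}^{-2p / \min\{n_\alpha,n_\gamma\}} \right)
   \;=\; O\!\left(R_{NT}^{-2\rho}\right)
   \;=\; O\!\left(R_{NT}^{1-2\rho}\right),
\end{equation*}
which verifies Assumption~\ref{ass:SVD} with the claimed $\rho = p/\min\{n_\alpha,n_\gamma\}$. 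The bound is deterministic given that $(\alpha_i,\gamma_t)\in\Omega_\alpha\times\Omega_\gamma$, so no care is needed beyond converting to $O_P$.

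The main obstacle I anticipate is step one: cleanly stating the polynomial approximation rate for $C^p$ functions on a general smooth bounded domain in $\mathbb{R}^{n_\alpha+n_\gamma}$. On a cube this is classical, but for the stated domain one must either invoke a Whitney/Stein extension (which is where the smoothness-of-boundary assumption is genuinely used) or cite a direct multivariate Jackson theorem for smooth domains. An alternative route is to work directly with the functional SVD~\eqref{functionSVD}, viewing $\fct$ as a Hilbert--Schmidt kernel and using Sobolev embedding to bound its operator singular-value decay, but the polynomial route is more explicit and passes from functional to matrix singular values transparently via evaluation. Everything else (the rank-counting in step one and the Eckart--Young step) is routine.
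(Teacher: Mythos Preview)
Your proposal is correct and takes a genuinely different route from the paper. The paper's proof invokes the functional singular value decomposition directly, citing Theorem~3.5 of \cite{griebel2014approximation} to obtain a rank-$R$ separable approximant with $L^2(\Omega_\alpha\times\Omega_\gamma)$ error $O(R^{1/2-\rho})$; it then converts this functional $L^2$ bound into a bound on $\mathbb{E}[(\Gamma_{it}-\cdot)^2]$ (via a sup bound on the densities of $\alpha_i$ and $\gamma_t$), applies Markov's inequality, and finishes with Eckart--Young exactly as you do. Your polynomial-approximation route trades the functional SVD for a Jackson-type theorem plus an elementary rank count for polynomials; this yields a \emph{uniform} (sup-norm) approximation error $O(R^{-\rho})$ rather than an $L^2$ one, so the Frobenius bound on $\Gamma-\Gamma^R$ is deterministic and you never need Markov or any assumption on the distributions of $\alpha_i,\gamma_t$. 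In that sense your argument is slightly sharper (you actually get $O(R^{-2\rho})$ rather than $O(R^{1-2\rho})$) and more self-contained, at the cost of having to justify the multivariate Jackson rate on a smooth bounded domain, which, as you note, is where the Whitney/Stein extension is genuinely needed.
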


Here, we measure the smoothness of the function $\fct(\cdot,\cdot)$ by $p$,
which is the number of times it is continuously differentiable. 
The decay rate $\rho$
of the singular values of $\Gamma$ then depends on this measure of smoothness
and the dimensions $n_\alpha$ and $n_\gamma$ of the arguments $\alpha_i$
and $\gamma_t$. 
The smoother the function $\fct(\cdot,\cdot)$, for fixed dimensions $n_\alpha$ and $n_\gamma$, the faster the eigenvalues of $\Gamma$ converge
to zero.

The proof of Lemma~\ref{lemma:SG} crucially relies on the functional singular
value decomposition in \eqref{functionSVD} and results on the decay rate 
of the corresponding singular values in \cite{griebel2014approximation}.
The only technical contribution of the proof  is then
to properly relate those known results on the functional singular value
to the matrix singular values of $\Gamma$.

Notice that Lemma~\ref{lemma:SG} requires no assumptions on the data
generating process of $\alpha_i$ and $\gamma_t$,
apart from boundedness of the domains
$\Omega_\alpha$ and $\Omega_\gamma$, which 
can always be achieved by a reparameterization.
Thus, those nuisance
parameters can be arbitrarily correlated with each other (across $i$ and over $t$)
and with the regressors $X_{it,k}$. This result is analogous
to the consistency Theorem~4.1 for  $\widehat \beta_{\, \rm LS}$ in \cite{MoonWeidner2015}, where also no
assumptions on the interactive fixed effects are imposed at all, apart from
 ${\rm rank}(\lambda f') \leq R$.

From Theorem~\ref{thm:bai.con} and Lemma~\ref{lemma:SG}  we have the following corollary.

\begin{corollary} \label{cor:LSconsistency}
    Let Assumptions \ref{ass:SN} -- \ref{ass:NC} and 
    the assumption on $\fct(.,.)$ in Lemma~\ref{lemma:SG} be satisfied
    with {$p > 3\min\left\lbrace n_\alpha , n_\gamma \right\rbrace/2$},
    and also let  $R_{NT} \rightarrow \infty$ such that ${R_{NT} / (\min\{N,T\})^{1/2} \rightarrow 0}$. Then we have
    \begin{align*}
        \widehat \beta_{\, \rm LS}  - \beta^0 = o_P(1) .
    \end{align*}
\end{corollary}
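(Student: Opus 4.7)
The plan is to combine Theorem~\ref{thm:bai.con} and Lemma~\ref{lemma:SG} in the obvious way, and then verify that the rate conditions imposed on $R_{NT}$ and on the smoothness parameter $p$ are exactly what is needed to drive both terms in the convergence-rate bound \eqref{MainResultsLS} to zero. There is no new probabilistic or analytical content beyond what is already contained in the two cited results; the corollary is purely a book-keeping exercise tying the two together, so I will present it as such.

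First, I would invoke Lemma~\ref{lemma:SG}. Since the lemma's hypotheses on $\fct$ and its domains are assumed to hold, Assumption~\ref{ass:SVD} is satisfied for $\Gamma_{it}=\fct(\alpha_i,\gamma_t)$ with decay exponent $\rho=p/\min\{n_\alpha,n_\gamma\}$. The assumed lower bound $p>3\min\{n_\alpha,n_\gamma\}/2$ translates directly into $\rho>3/2$, which is exactly the regime required by Assumption~\ref{ass:SVD} and hence by Theorem~\ref{thm:bai.con}. Together with Assumptions~\ref{ass:SN}--\ref{ass:NC}, this means all hypotheses of Theorem~\ref{thm:bai.con} are in force, provided also that $R_{NT}=o(\min\{N,T\})$; but this is implied by the stronger assumption $R_{NT}/(\min\{N,T\})^{1/2}\to 0$, so it holds for free.

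Next, I would apply the conclusion \eqref{MainResultsLS} of Theorem~\ref{thm:bai.con}, which yields
\begin{align*}
    \widehat \beta_{\, \rm LS} - \beta^0
    = O_P\!\left(R_{NT}^{(3-2\rho)/2}\right)
    + O_P\!\left(R_{NT}\,(\min\{N,T\})^{-1/2}\right).
\end{align*}
Because $\rho>3/2$, the exponent $(3-2\rho)/2$ is strictly negative, so $R_{NT}\to\infty$ forces the first term to be $o_P(1)$. The second term is $o_P(1)$ by direct appeal to the imposed rate condition $R_{NT}/(\min\{N,T\})^{1/2}\to 0$. Summing the two $o_P(1)$ contributions gives $\widehat\beta_{\,\rm LS}-\beta^0=o_P(1)$, which is the claim.

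There is essentially no obstacle here, since both lemmas have already done the hard work: Theorem~\ref{thm:bai.con} absorbs the approximation error from truncating to $R_{NT}$ factors together with the endogeneity cost of the estimated loadings and factors, while Lemma~\ref{lemma:SG} converts the smoothness of $\fct$ into the required singular-value decay. The only thing to check, which is entirely arithmetic, is that the threshold $\rho>3/2$ in Assumption~\ref{ass:SVD} matches the smoothness threshold $p>3\min\{n_\alpha,n_\gamma\}/2$ stated in the corollary; this is immediate from the identity $\rho=p/\min\{n_\alpha,n_\gamma\}$ supplied by Lemma~\ref{lemma:SG}.
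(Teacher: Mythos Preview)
Your proposal is correct and takes essentially the same approach as the paper: the corollary is presented there as an immediate consequence of Theorem~\ref{thm:bai.con} and Lemma~\ref{lemma:SG}, with no separate proof given beyond the sentence ``From Theorem~\ref{thm:bai.con} and Lemma~\ref{lemma:SG} we have the following corollary.'' Your write-up spells out exactly the intended verification---that the smoothness condition $p>3\min\{n_\alpha,n_\gamma\}/2$ yields $\rho>3/2$ via Lemma~\ref{lemma:SG}, and that the rate conditions on $R_{NT}$ then kill both terms in \eqref{MainResultsLS}---and there is nothing more to it.
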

This is our final consistency result for the least-squares estimator
of \cite{Bai2009} in a data generating process given by \eqref{model:general}. The
convergence rate of the estimator was already discussed after 
Theorem~\ref{thm:bai.con} above, in particular, the difficulty 
in showing a convergence rate faster than $ (\min\{N,T\})^{1/2}$ 
in our setting.

\subsection{Further discussion}

Here, we want to present some further intuition on the formal results on $\widehat \beta_{\, \rm LS}$ presented above. The  discussion in this subsection is purely
heuristic and does not aim to provide any formal derivations.

Remember the functional singular
value decomposition in equation 
\eqref{functionSVD} of the introduction,
which we now write as
$\fct(\alpha_i,\gamma_t)
= \sum_{r=1}^{\infty} \lambda^0_{ir} \, f^0_{tr} $.
For the sake of the following discussion,
suppose that variation from $\fct(\alpha_i,\gamma_t)$ dominates the variation in $X_{it}^\prime \, \beta$ and $\varepsilon_{it}$ for the leading
$R_{NT}$ principal components
of the residuals
$Y_{it} - X_{it}^\prime \, \beta - \sum_{r = 1}^{R}{\lambda}_{ir} \, {f}_{tr}
= \sum_{r=1}^{\infty} \lambda^0_{ir} \, f^0_{tr} - X_{it}^\prime \, (\beta-\beta^0) + \varepsilon_{it}$.
In this ``best case scenario'',
the estimated factors
$\sum_{r = 1}^{R}{\lambda}_{ir} \, {f}_{tr}$
in the definition of $\widehat \beta_{\rm LS}$ in 
\eqref{est:bai}
will coincide with the leading
$R_{NT}$ components 
$\sum_{r = 1}^{R}{\lambda}^0_{ir} \, {f}^0_{tr}$
of $\fct(\alpha_i,\gamma_t)$, and
we then have
\begin{align*}
    \widehat \beta_{\rm LS} - \beta^0 = \zeta_{NT} + \xi_{NT} ,
\end{align*}    
where    
\begin{align*}
    \zeta_{NT}
    &=
    \left( \frac 1 {NT} \sum_{i=1}^N \sum_{t=1}^T X'_{it} \, X_{it} \right)^{-1}
      \frac 1 {NT} \sum_{i=1}^N \sum_{t=1}^T X'_{it} \, \varepsilon_{it}
 \\
     \xi_{NT}
    &=
    \left( \frac 1 {NT} \sum_{i=1}^N \sum_{t=1}^T X'_{it} \, X_{it} \right)^{-1}
     \frac 1 {NT} \sum_{i=1}^N \sum_{t=1}^T X'_{it} \, \sum_{r=R+1}^{\infty}\lambda^0_{ir}f^0_{tr} .
\end{align*}
Under standard regularity conditions we have  
$\sqrt{NT} \zeta_{NT} \Rightarrow {\cal N}(0,\Sigma)$,
and under the assumptions in the last subsection we have
$\xi_{NT} = {O}_P\Big(R_{NT}^{(3-2\rho)/2}\Big) $.
In this ``best-case scenario'' we can therefore have $R_{NT} \rightarrow \infty$ quick enough such that $\xi_{NT} = o_P(1/\sqrt{NT})$.

However, this is not a realistic scenario for
$R_{NT} \rightarrow \infty$, because 
as $R_{NT}$ grows, eventually the singular values
of $\varepsilon_{it}$ will dominate those of $\sum_{r=R+1}^{\infty}\lambda^0_{ir}f^0_{tr}$, and the factor projection method will just project out idiosyncratic noise,
or even contributions from $X_{it}^\prime \, (\widehat \beta_{\rm LS}-\beta^0)$.
This implies that the problematic variation associated with $\lambda^0_{ir}f^0_{tr}$ for most singular values $r$ remains.
This explains why it is so difficult to show anything
better than the convergence rate results in
Theorem~\ref{cor:LSconsistency} for the estimator 
$\widehat \beta_{\rm LS}$ in our setting.

\section{Asymptotic results for the group fixed-effect estimator}\label{sec:GroupedFE}

The main goal of this section is to derive asymptotic results for
the estimator $\widehat \beta_{\rm GS}$ defined in \eqref{est:clusterFEsplit},
which is the sample-splitting version of the group fixed-effect estimator. 
But we are first going to discuss the 
initial group fixed-effect estimator $\widehat \beta_{\rm G}$ defined
in \eqref{est:clusterFE} without sample-splitting. We will not actually derive
convergence rate results for $\widehat \beta_{\rm G}$ itself, but the discussion
of the approximation bias of $\widehat \beta_{\rm G}$ will be a very useful
precursor of the results for $\widehat \beta_{\rm GS}$.

\subsection{Results for  $\widehat \beta_{\rm G}$}\label{sec:GroupedFEconsistency}

We can rewrite our estimating equation for the group fixed-effect estimator in \eqref{TwoWayFEbasic} as
\begin{align}
   Y &=   X \cdot \beta + \delta \, D_{\delta}' + D_{\nu} \, \nu'  + \varepsilon  ,
\end{align}
where $\delta$ and $\nu$ are the $N \times C$ and $T \times G$ matrices of nuisance parameters,
while $D_{\delta}$ and $D_{\nu}$ are $T \times C$ and $N \times G$ are binary matrices in which each row contains a single one, indicating the group membership of 
the corresponding unit or time period, respectively.
By standard partitioned regression results we can then rewrite the group fixed-effect estimator in \eqref{est:clusterFE} 
as
\begin{align}
         \widehat \beta_{\rm G}  &= \left( \sum_{i = 1}^N \sum_{t = 1}^T \widetilde{X}_{it}^\prime \, \widetilde{X}_{it}\right)^{-1} \sum_{i = 1}^N \sum_{t = 1}^T \widetilde{X}_{it}^\prime \, \widetilde{Y}_{it} ,
         &
      \widetilde{X}_k &= M_N \, X_k \, M_T  , 
      &
       \widetilde{Y} &= M_N \,Y \, M_T ,  
       \label{gFEtilde}
\end{align}
where $\widetilde{X}_{it} = \left(\widetilde{X}_{it,1},\ldots,\widetilde{X}_{it,K} \right)$,
 $\widetilde{Y}_{it}$ and $\widetilde{X}_{it,k}$ are the entries of the $N \times T$ matrices $\widetilde{X}_k$
and $\widetilde{Y} $, respectively, and  $M_N = \mathbb{I}_N - D_{\nu}(D_{\nu}^\prime D_{\nu}) ^{-1}D_{\nu}^\prime$ and $M_T = \mathbb{I}_T - D_{\delta}(D_{\delta}^\prime D_{\delta}) ^{-1}D_{\delta}^\prime$ are projection matrices of dimesion
$N \times N$ and $T \times T$, respectively.

Using this representation of the group fixed-effect estimator and the model in \eqref{eq:consistency.model:general} we obtain that
\begin{align}
   \widehat \beta_{\rm G} - \beta^0 &= \phi_{NT} + \kappa_{NT},
   \label{DecomposeBetaG}
\end{align}
where
\begin{align}
   \phi_{NT} & := \left( \sum_{i = 1}^N \sum_{t = 1}^T \widetilde{X}_{it}^\prime \widetilde{X}_{it}\right)^{-1} \sum_{i = 1}^N \sum_{t = 1}^T \widetilde{X}_{it}^\prime \,  \varepsilon_{it},
  &
  \kappa_{NT} &:= \left( \sum_{i = 1}^N \sum_{t = 1}^T \widetilde{X}_{it}^\prime \widetilde{X}_{it}\right)^{-1} \sum_{i = 1}^N \sum_{t = 1}^T \widetilde{X}_{it}^\prime \, \widetilde \Gamma_{it},
  \label{DefPhiKappa}
\end{align}
with $\widetilde \Gamma$ defined analogously to $\widetilde{X}_k$ and $\widetilde{Y}$ in \eqref{gFEtilde}.
In the definition of $\phi_{NT}$  we can 
equivalently write $\widetilde \varepsilon_{it}$ instead of 
$\varepsilon_{it}$, but since $M_N$ and $M_T$ are idempotent matrices,
and $\widetilde{X}_{it}$ is already the projected regressor, this does not matter.
The same is true, of course, for $\widetilde \Gamma_{it}$ vs $\Gamma_{it}$
in the definition of $\kappa_{NT}$. However, the expressions in
\eqref{DefPhiKappa} turn out to be convenient as written.

Here, $\kappa_{NT}$ is the approximation error of having replaced
the nonlinear specification $\Gamma_{it} = \fct(\alpha_i,\gamma_t)$
in our model in \eqref{model:general} by the much
simpler additive specification 
$\delta_{i,c_t} +  \nu_{t,g_i}$ in the estimation equation \eqref{TwoWayFEbasic}.
To see this, we can
use standard matrix inequalities to bound the Euclidian norm
of $\kappa_{NT}$ by
\begin{align}
      \left\|  \kappa_{NT} \right\|
    &\leq  \left\| \left(\sum_{i = 1}^N \sum_{t = 1}^T \widetilde{X}_{it}^\prime \widetilde{X}_{it}\right)^{-1} \right\|
   \left( \max_{k} \left\| \widetilde{X}_{k} \right\|_F \right)
     \, \big\| \widetilde \Gamma \big\|_F,
\label{ExplainApproxError1}     
\end{align}
where $\|\cdot \|_F$ refers to the Frobenius norm.
Due to the definition of $M_N$ and $M_T$ we have
\begin{align}
   \big\| \widetilde \Gamma \big\|^2_F = \min_{\delta \in \mathbb{R}^{N \times C}} \,
    \min_{\nu \in \mathbb{R}^{T \times G}} \,
    \sum_{i=1}^{N}\sum_{t=1}^{T}\left[ \fct(\alpha_i,\gamma_t) - \delta_{i,c_t} -  \nu_{t,g_i} \right]^2  .
\label{ExplainApproxError2}    
\end{align}
The last two displays show that $\kappa_{NT}$ is small 
whenever $ \fct(\alpha_i,\gamma_t)$ can be well approximated by
$\delta_{i,c_t} +  \nu_{t,g_i}$. 
In equation \eqref{eq:taylorApprox} we already informally discussed the
magnitude of this  
approximation error, and found that it is of order $\|\alpha_i -\overline \alpha_{g_i} \|^2 + \|\gamma_t - \overline \gamma_{c_t} \|^2$. We now want to provide a more formal discussion of this and show that $\kappa_{NT}$ is
asymptotically small under appropriate regularity conditions.

In Section~\ref{Sect:clusterFE} we described the clustering 
algorithms that delivers the group memberships $g_i$ and $c_t$
based on the initial estimates $\widehat{\lambda}^*_i$ and $ \widehat{f}^*_t$.
The goal of the clustering is to group units $i$ with approximately 
the same value of $\alpha_i$, and to group time periods $t$ with
approximately the same $\gamma_t$. It is therefore crucial
that $\widehat{\lambda}^*_i$ and $ \widehat{f}^*_t$ are good
proxies for $\alpha_i$ and $\gamma_t$. Specifically, we require that there exist functions $\lambda^* : {\cal A} \rightarrow \mathbb{R}^{R_*}$ and $f^* : {\cal C} \rightarrow \mathbb{R}^{R_*}$
such that $\widehat{\lambda}^*_i$ and $ \widehat{f}^*_t$ converge
to the non-random limits $\lambda^*(\alpha_i) $ and $f^*(\gamma_t)$
as $N,T \rightarrow \infty$. The following assumption formalizes this and states 
all the regularity condition that we require on 
$\fct(\cdot,\cdot)$,  $\lambda^*(\cdot) $, $f^*(\cdot)$, 
$\widehat{\lambda}^*_i$, $ \widehat{f}^*_t$, and $X_{it}$.

\begin{assumption} 
    \label{ass:ApproxError}
   There exists a sequence $\xi_{NT}>0$ such that $\xi_{NT}  \to 0$ as $N,T \to \infty$, and
    \begin{enumerate}[(i)]     

        \item The function $\fct(\cdot,\cdot)$ is  at least twice continuously differentiable with uniformly bounded second derivatives.
        
        \item Every unit $i$ is a member of exactly one 
        group $g_i \in \{1,\ldots,G\}$, and every time period $t$ is 
        a member of exactly one group $c_t \in \{1,\ldots,C\}$. The size
        of all $G$ groups of units, and the size
        of all $C$ groups of time periods is 
        bounded uniformly by $Q_{\max}$.
        
        \item There exists $B>0$ such that 
         $\left\|a - b \right\| \leq B \left\| \lambda^*(a) -  \lambda^*(b) \right\|$ for all $a,b\in {\cal A} $
        , and  $\left\| a - b \right\| \leq B \left\| f^*(a) -  f^*(b) \right\|$ for all $a, b \in {\cal C}$, and the domains 
        ${\cal A}$ and ${\cal C}$ are convex set.
              
       \item
       $ \frac 1 {N}  \sum_{i=1}^N \left(   \left\| \widehat{\lambda}^*_i - \lambda^*(\alpha_i) \right\|^2 
         \right)= O_P \left( \xi_{NT} \right) $, \;
   $ \frac 1 {T}  \sum_{t=1}^T \left(   \left\| \widehat{f}^*_t - f^*(\gamma_t) \right\|^2 
   \right)= O_P \left(  \xi_{NT} \right) $.
   
       \item 

       $ \frac 1 {N}  \sum_{i=1}^N   \left\| \widehat{\lambda}^*_i - \widehat{\lambda}^*_{j(i)} \right\|^2 = O_P \left(  \xi_{NT} \right) $ for any matching function $j(i) \in \{1,\ldots,N\}$ such that $g_i = g_{j(i)}$,
      and
      $\frac 1 {T}  \sum_{t=1}^T  \left\| \widehat{f}^{\,*}_t - \widehat{f}^{\,*}_{s(t)} \right\|^2 = O_P \left(  \xi_{NT} \right) $ for any matching function $s(t) \in \{1,\ldots,T\}$ such that $c_t = c_{s(t)}$.

        \item 
        $\max_{k,i,t} \left| \widetilde{X}_{it,k} \right| = O_P(1)$,
and        
        $\plim_{N,T \rightarrow \infty}  \frac 1 {{NT}} \sum_{i=1}^N \sum_{t=1}^T \widetilde X_{it}^\prime \widetilde X_{it} = \Omega$, where $\Omega$ is a positive definite non-random matrix.

    \end{enumerate} 
\end{assumption}

\begin{lemma}
   \label{lemma:ApproxError}
   Under Assumption~\ref{ass:ApproxError} we have
   $$
      \kappa_{NT} = O_P(\xi_{NT})
   $$
\end{lemma}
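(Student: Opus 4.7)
The plan is to bound $\kappa_{NT}$ directly from its definition in \eqref{DefPhiKappa}, but to replace the Cauchy--Schwarz estimate in \eqref{ExplainApproxError1} by a sharper H\"older inequality. That Frobenius-norm bound will only yield $\|\widetilde\Gamma\|_F=O_P(\sqrt{NT\,\xi_{NT}})$ and hence $\|\kappa_{NT}\|=O_P(\sqrt{\xi_{NT}})$---one power short of the target. By Assumption~\ref{ass:ApproxError}(vi) the matrix $(\sum_{i,t}\widetilde X_{it}'\widetilde X_{it})^{-1}$ has operator norm $O_P((NT)^{-1})$ and $\max_{i,t,k}|\widetilde X_{it,k}|=O_P(1)$, so applying H\"older componentwise gives
\[
\|\kappa_{NT}\| \;\le\; O_P\bigl((NT)^{-1}\bigr)\cdot O_P(1)\cdot \max_{k}\sum_{i,t}|\widetilde\Gamma_{it}|,
\]
and the task reduces to the $L^1$ bound $\sum_{i,t}|\widetilde\Gamma_{it}|=O_P(NT\,\xi_{NT})$.

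For the pointwise bound I would write $\widetilde\Gamma_{it}=\Gamma_{it}-\bar\Gamma_{g_i,t}-\bar\Gamma_{i,c_t}+\bar\Gamma_{g_i,c_t}$ with within-group averages $\bar\Gamma_{g,t}=|G_g|^{-1}\sum_{j\in G_g}\Gamma_{j,t}$ and analogously for the other bars. This double-difference operator annihilates any summand of the form $\phi(\alpha)+\psi(\gamma)$. A first-order Taylor expansion of $\fct(\alpha_j,\gamma_s)$ around $(\alpha_i,\gamma_t)$, whose remainder is bounded uniformly by $C(\|\alpha_j-\alpha_i\|^2+\|\gamma_s-\gamma_t\|^2)$ thanks to Assumption~\ref{ass:ApproxError}(i), therefore kills the constant and linear pieces and leaves only the quadratic remainder, so that
\[
|\widetilde\Gamma_{it}| \;\le\; C'\!\left(\frac{1}{|G_{g_i}|}\sum_{j\in G_{g_i}}\|\alpha_j-\alpha_i\|^2 \;+\; \frac{1}{|C_{c_t}|}\sum_{s\in C_{c_t}}\|\gamma_s-\gamma_t\|^2\right).
\]
Summing over $(i,t)$ and using the elementary identity $\sum_{i\in G_g}|G_g|^{-1}\sum_{j\in G_g}\|\alpha_i-\alpha_j\|^2=2\sum_{i\in G_g}\|\alpha_i-\bar\alpha_g\|^2$ (and the time analog) then reduces the job to showing that the within-group sums of squares satisfy $\sum_i\|\alpha_i-\bar\alpha_{g_i}\|^2=O_P(N\,\xi_{NT})$ and $\sum_t\|\gamma_t-\bar\gamma_{c_t}\|^2=O_P(T\,\xi_{NT})$.

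The main obstacle is this last step, since the assumptions only control proxy-based quantities. The plan is to choose the worst-case matching $j(i)\in\argmax_{j\in G_{g_i}}\|\alpha_i-\alpha_j\|$, so that Jensen's inequality gives $\|\alpha_i-\bar\alpha_{g_i}\|^2\le|G_{g_i}|^{-1}\sum_{j\in G_{g_i}}\|\alpha_i-\alpha_j\|^2\le\|\alpha_i-\alpha_{j(i)}\|^2$. The Lipschitz-invertibility Assumption~\ref{ass:ApproxError}(iii) then converts $\alpha$-distances to $\lambda^*$-distances, $\|\alpha_i-\alpha_{j(i)}\|\le B\,\|\lambda^*(\alpha_i)-\lambda^*(\alpha_{j(i)})\|$, and the triangle inequality
\[
\|\lambda^*(\alpha_i)-\lambda^*(\alpha_{j(i)})\| \;\le\; \|\lambda^*(\alpha_i)-\widehat\lambda^*_i\| + \|\widehat\lambda^*_i-\widehat\lambda^*_{j(i)}\| + \|\widehat\lambda^*_{j(i)}-\lambda^*(\alpha_{j(i)})\|
\]
splits the bound into three mean-square pieces controlled respectively by (iv), (v), and again by (iv); the bounded-cluster-size statement in (ii) ensures that any fixed index $j$ appears as $j(i)$ for at most $Q_{\max}$ indices, so rewriting $(1/N)\sum_i\|\widehat\lambda^*_{j(i)}-\lambda^*(\alpha_{j(i)})\|^2$ in terms of $(1/N)\sum_j\|\widehat\lambda^*_j-\lambda^*(\alpha_j)\|^2$ costs only a factor $Q_{\max}$. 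This yields $(1/N)\sum_i\|\alpha_i-\alpha_{j(i)}\|^2=O_P(\xi_{NT})$, and the analogous argument for $\gamma_t$ proceeds identically.

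Chaining the pieces, $\sum_{i,t}|\widetilde\Gamma_{it}|=O_P(NT\,\xi_{NT})$, and therefore $\|\kappa_{NT}\|=O_P((NT)^{-1})\cdot O_P(1)\cdot O_P(NT\,\xi_{NT})=O_P(\xi_{NT})$, as claimed.
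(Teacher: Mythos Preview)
Your proposal is correct and follows essentially the same route as the paper: both replace the Frobenius bound \eqref{ExplainApproxError1} by the H\"older pairing $\|\widetilde X_k\|_\infty\cdot\|\widetilde\Gamma\|_1$, control $|\widetilde\Gamma_{it}|$ by a second-order Taylor expansion of $\fct$ around $(\alpha_i,\gamma_t)$ so that the double-difference leaves only quadratic remainders, and then convert $\alpha$- and $\gamma$-distances to proxy distances via Assumption~\ref{ass:ApproxError}(iii), the triangle inequality, and parts (iv), (v), (ii). The only cosmetic difference is that the paper keeps the cross term $\|\alpha_i-\alpha_j\|\,\|\gamma_t-\gamma_s\|$ explicit and bounds it separately via Cauchy--Schwarz/Jensen, whereas you absorb it immediately into the two squared terms by $ab\le(a^2+b^2)/2$; both lead to the same $O_P(NT\,\xi_{NT})$ bound on $\sum_{i,t}|\widetilde\Gamma_{it}|$.
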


The lemma shows that the approximation error $\kappa_{NT}$ vanishes
at  rate $\xi_{NT}$ as $N,T \rightarrow \infty$. The assumption
and  lemma are formulated for arbitrary rates, but
as will become clear from the following discussion,
the best we can
achieve in our setting is a rate of $\xi_{NT} = 1/\min(N,T)$,
which coincides with $\xi_{NT} = 1/\sqrt{NT}$ in the special case
that $N$ and $T$ grow at the same rate.

Part (i) of Assumption~\ref{ass:ApproxError} requires the
function $\fct(\cdot,\cdot)$ to be sufficiently smooth. This condition should 
not be surprising, because our informal
discussion of the approximation error in equation~\eqref{eq:taylorApprox} already relies on a second order Taylor expansion
of $\fct(\cdot,\cdot)$, and the proof of Lemma~\ref{lemma:ApproxError} is
based on exactly such an expansion.

Part (iii) and (iv) of the assumption
are analogous to ``Assumption 2 (injective moments)'' in 
\cite{bonhomme2017discretizing}, except that they consider
a one-way fixed effect setting while we consider a two-way fixed effect setting. Part (iii) requires the functions $\lambda^*(\cdot) $ and $f^*(\cdot)$
to be injective, that is, $\alpha_i$ and $\gamma_t$ can be 
uniquely recovered from knowing $\lambda^*(\alpha_i) $ and $f^*(\gamma_t)$.
A necessary condition for this is that 
\begin{align}
R^* \geq \max(d_\alpha,d_\gamma), 
   \label{lowerBoundRstar}
\end{align}
where $d_\alpha$ and $d_\gamma$ are the dimensions of $\alpha_i$
and $\gamma_t$, respectively.
Part (iv) requires the estimates $\widehat{\lambda}^*_i$ and $ \widehat{f}^*_t$ to converge to  $\lambda^*(\alpha_i) $ and $f^*(\gamma_t)$ at the average rate of $\xi_{NT}^{1/2}$.  
We expect that the estimated eigenfunctions of $\fct(\alpha_i,\gamma_t)$, which correspond to the estimated factor loadings and factors, proposed as cluster proxies in Section~\ref{Sect:clusterFE} satisfy this assumption by an application of Theorem 1 from \cite{BaiNg2002}. 
Since $T$ observations are available for unit $i$
we expect that $\widehat{\lambda}^*_i$ converges at a rate of
$T^{1/2}$, and since $N$ observations are available for time period $t$
we expect that $ \widehat{f}^*_t$ converges at a rate of
$N^{1/2}$, see also, for example, Theorem~1 and~2 in \cite{bai2003inferential}.
This explains why $\xi_{NT} = 1/\min(N,T)$ is the best rate
we can achieve here.

Part (v) of Assumption~\ref{ass:ApproxError} is a high-level assumption on the clustering mechanism used to obtain the group memberships $g_i$
and $c_t$. For units $i$ and $j$ in the same group,
and for time periods $t$ and $s$ in the same group, we demand the average
differences $\widehat{\lambda}^*_i - \widehat{\lambda}^*_j$
and $\widehat{f}^{\,*}_t - \widehat{f}^{\,*}_{s}$ to be small as $N,T \rightarrow \infty$. 
In other words, we require that the clustering mechanism does what it is
intended to do, namely forming groups such that the estimates 
 $\widehat{\lambda}^*_i$ and $\widehat{f}^{\,*}_t$ 
 for units $i$ and time periods $t$ in the same group
 are close to each other. For a given clutstering algorithms (e.g.\
 the one describe in Section~\ref{Sect:clusterFE}) one could prove that this
 assumption  holds under further regularity conditions on the distribution of 
 $\alpha_i$ and $\gamma_t$, see, for example,
 Lemma~1 in \citealt{bonhomme2017discretizing}. 
 In particular, a necessary condition for
 part (v) of Assumption~\ref{ass:ApproxError} to hold is the following:
 
\begin{condition} 
  $ \frac 1 {N}  \sum_{i=1}^N   \left\| \alpha_i - \alpha_{j(i)} \right\|^2 = O_P \left(  \xi_{NT} \right) $ for any matching function $j(i) \in \{1,\ldots,N\}$ such that $g_i = g_{j(i)}$,
      and
      $\frac 1 {T}  \sum_{t=1}^T  \left\| \gamma_t - \gamma_{s(t)} \right\|^2 = O_P \left(  \xi_{NT} \right) $ for any matching function $s(t) \in \{1,\ldots,T\}$ such that $c_t = c_{s(t)}$.
\end{condition}     
 
This condition coincides with  Assumption~\ref{ass:ApproxError}(v)
in the unrealistic case that 
$\widehat{\lambda}^*_i = \alpha_i$ and
$\widehat{f}^{\,*}_t = \gamma_t$. Starting from this 
unrealistic case and then applying the transformations
$\lambda^* : {\cal A} \rightarrow \mathbb{R}^{R_*}$
and $f^* : {\cal C} \rightarrow \mathbb{R}^{R_*}$
and adding noise to the estimates then gives part (v) of Assumption~\ref{ass:ApproxError}.
Crucially, for this regularity condition to hold, we need that  
$\xi_{NT} \gtrsim 1/\min(N^{2/d_\alpha},T^{2/d_\gamma})$,
see Lemma~2 in \cite{bonhomme2017discretizing} for the analogous results in a one-way fixed effect 
model (also \citealt{graf2002rates}). Since our actual clustering method is not
based on the unobserved $\alpha_i$ and $\gamma_t$, but on
$\widehat{\lambda}^*_i$ and 
$\widehat{f}^{\,*}_t $ we require the stronger condition
(in view of \eqref{lowerBoundRstar}) that
$$
  \xi_{NT} \gtrsim [\min(N,T)]^{-2/R^*} .
$$
This is a necessary condition 
for Assumption~\ref{ass:ApproxError}(v) to be satisfied.\footnote{%
Following the logic in \cite{bonhomme2017discretizing} we believe that
we actually only need
   $\xi_{NT} \gtrsim 1/\min(N^{2/d_\alpha},T^{2/d_\gamma})$, that is, our group fixed effect
   estimator $\widehat \beta_{\rm G}$ truly cannot achieve a convergence rate
   faster than $1/\min(N^{2/d_\alpha},T^{2/d_\gamma})$. Thus,
   if $R^* > \max(d_\alpha,d_\gamma)$, then 
   $\xi_{NT} \gtrsim [\min(N,T)]^{-2/R^*}$ is probably not a necessary 
   condition for the result of Lemma~\ref{lemma:ApproxError} itself, 
   but only for our Assumption~\ref{ass:ApproxError}(v).
}
Therefore, if we want to achieve the best possible rate
$\xi_{NT} = 1/\min(N,T)$, then we need $R^* \leq 2$, which according
to \eqref{lowerBoundRstar} implies that $d_\alpha \leq 2$
and $d_\gamma \leq 2$.
This discussion shows that our group fixed-effect estimator $\widehat \beta_{\rm G}$
 suffers from a curse of dimensionality with regards to the
dimensions of $\alpha_i$ and $\gamma_t$. However, this should be unsurprising, given
the semi-parametric nature of the estimation problem -- with non-parametric component
$\fct(\alpha_i,\gamma_t)$. This also shows that there is a tradeoff between the LS
estimator analyzed in Section~\ref{sect:consistencyBAI} and the group fixed effects estimator discussed here -- we will further compare those two estimators in our MC analysis below.

Finally, part (vi) of Assumption~\ref{ass:ApproxError} requires
some regularity conditions on the projected regressors
 $\widetilde{X}_k = M_N \, X_k \, M_T$ 
 defined in \eqref{gFEtilde}.

This concludes our discussion of the approximation 
error $\kappa_{NT}$. We have argued that,
under appropriate regularity conditions, including
$\max(d_\alpha,d_\gamma) \leq 2$, 
we can use Lemma~\ref{lemma:ApproxError} 
to obtain $\kappa_{NT} = 1 / \sqrt{NT}$,
for $N$ and $T$ growing to infinity at the same rate.
Since $\widehat \beta_{\rm G} - \beta^0 = \phi_{NT} + \kappa_{NT}$
we could then conclude that 
$\widehat \beta_{\rm G} - \beta^0 = O_P(1 / \sqrt{NT})$,
if we could also show that $\phi_{NT}  = O_P(1 / \sqrt{NT}) $.

From the definition of $\phi_{NT}$ in \eqref{DefPhiKappa} one might think
that it is easy to derive this result on $\phi_{NT}$ by imposing
an approximate exogeneity condition on the regressors. 
However, the problem is that $\widetilde{X}_k$ depends on the 
group assignments of units $i$ and time periods $t$, which 
were constructed based on $\widehat{\lambda}^*$ and $ \widehat{f}^*$,
which depend on the errors $\varepsilon$. 
Thus, $\widetilde{X}_k$ depends on $\varepsilon$ in complicated ways 
through the group assignment, making a proof of 
 $\phi_{NT}  = O_P(1 / \sqrt{NT}) $ technically challenging.
In principle, we expect that
 \begin{align}
    \sqrt{NT} \, \phi_{NT} \Rightarrow {\cal N}(0,\Sigma_{\rm G})
    \label{AsymptPhiG}
 \end{align}
 holds for an appropriate covariance matrix $\Sigma_{\rm G}$, 
 and our simulations evidence suggest that this is indeed the case.
 However,
 we are not aiming to prove this result in this paper.
As explained already in Section~\ref{sec:estimators},
this technical difficulty in analyzing $\widehat \beta_{\rm G}$ is exactly why we introduced the
split-sample version of the group fixed-effect estimator, for which
we are going to derive results in the following.

\subsection{Results for  $\widehat \beta_{\rm GS}$}\label{sec:GroupFEsplitnormal}

The split-sample version of the group fixed effect estimator
 was introduced in Section~\ref{sec:DefinedBetaGS} above.
Using the Frisch-Waugh-Lovell theorem we can rewrite 
$\widehat \beta_{\rm GS}$ in equation \eqref{est:clusterFEsplit} as follows:
\begin{align*}
   \widehat \beta_{\rm GS}
   = \left( \sum_{s=1}^4
   \sum_{(i,t) \in  {\cal O}_{s}}
   \widetilde{X}_{it}^{(s)\,\prime} \widetilde{X}_{it}^{(s)}\right)^{-1}  \sum_{s=1}^4  \sum_{(i,t) \in  {\cal O}_{s}}   \widetilde{X}_{it}^{(s)\,\prime} \,  Y_{it} ,
\end{align*}
where the projected regressors
$\widetilde{X}_{it}^{(s)} = \left(\widetilde{X}_{it,1}^{(s)} , \ldots, \widetilde{X}_{it,K}^{(s)} \right)'$
for each
subpanel $s \in \{1,2,3,4\}$, each regressor
$k=1,\ldots,K$,
and observations $(i,t) \in  {\cal O}_{s}$ within that subpanel,
are the residuals of the least-squares
problem
\begin{align}
    \min_{\delta} \,
    \min_{\nu} \,
     \sum_{(i,t) \in  {\cal O}_{s}} \left( X_{it,k} -
    \delta_{i,c^{(s)}_t} -  \nu_{t,g^{(s)}_i}
    \right)^2 .
    \label{LSprojection}
\end{align}
Following the decomposition of 
$\widehat \beta_{\rm G}$ in \eqref{DecomposeBetaG}, we can now introduce
the analogous decomposition for $\widehat \beta_{\rm GS}$ by
\begin{align}
   \widehat \beta_{\rm GS} - \beta^0 &=  
    \phi^{\rm (GS)}_{NT} + \kappa^{\rm (GS)}_{NT}  ,
    \label{DecomposeBetaGS}
\end{align}
where
\begin{align*}
   \phi^{\rm (GS)}_{NT} & := \left( \sum_{s=1}^4
   \sum_{(i,t) \in  {\cal O}_{s}}
   \widetilde{X}_{it}^{(s)\,\prime} \widetilde{X}_{it}^{(s)}\right)^{-1}  \sum_{s=1}^4  \sum_{(i,t) \in  {\cal O}_{s}}   \widetilde{X}_{it}^{(s)\,\prime} \,  \varepsilon_{it},
  \\
  \kappa^{\rm (GS)}_{NT} &:= \left( \sum_{s=1}^4
   \sum_{(i,t) \in  {\cal O}_{s}}
   \widetilde{X}_{it}^{(s)\,\prime} \widetilde{X}_{it}^{(s)}\right)^{-1} \sum_{s=1}^4  \sum_{(i,t) \in  {\cal O}_{s}}      \widetilde{X}_{it}^{(s)\,\prime} \, \widetilde \Gamma^{(s)}_{it},
\end{align*}
Here, $\phi^{\rm (GS)}_{NT}$ is a variance term that we 
will show to be unbiased and asymptotically normal,
and $\kappa^{\rm (GS)}_{NT}$ is the approximation 
error from having replaced $\fct(\alpha_i,\gamma_t)$ 
by the linear grouped fixed effect in the
estimation for $\widehat \beta_{\rm GS}$ in \eqref{est:clusterFEsplit}.
 The $\widetilde \Gamma^{(s)}_{it}$ are the 
residuals of the least-squares problem \eqref{LSprojection}
when $X_{it,k}$ is replaced by $\Gamma_{it}=\fct(\alpha_i,\gamma_t)$. 

For each of the four subpaneles $s \in \{1,2,3,4\}$,
 the discussion of the approximation error $\kappa^{\rm (GS)}_{NT}$ is identical to the discussion of the
 approximation error $\kappa_{NT}$ of $\widehat \beta_{\rm G}$,
 see, in particular, the bounds
 \eqref{ExplainApproxError1} and \eqref{ExplainApproxError2} above. It is therefore straightforward to
 obtain the analogue of Lemma~\ref{lemma:ApproxError}
 for the approximation error of the split-sample estimator.
 
\begin{lemma}
   \label{lemma:ApproxError2}
   Under Assumption~\ref{ass:ApproxErrorSplit}  (in appendix) we have
   $$
      \kappa^{\rm (GS)}_{NT} = O_P(\xi_{NT})
   $$
\end{lemma}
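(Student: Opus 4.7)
The plan is to reduce the statement to Lemma~\ref{lemma:ApproxError} applied four times, once for each subpanel $\mathcal{O}_s$, $s \in \{1,2,3,4\}$. The authors have essentially spelled out that the argument is structurally identical to the one that bounded $\kappa_{NT}$, so the work is largely bookkeeping over the four pieces, paired with a verification that Assumption~\ref{ass:ApproxErrorSplit} delivers the inputs needed on each subpanel.

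First I would start from the given expression
\begin{align*}
    \kappa^{\rm (GS)}_{NT}
    = \left( \sum_{s=1}^4 \sum_{(i,t)\in\mathcal{O}_s} \widetilde{X}_{it}^{(s)\prime} \widetilde{X}_{it}^{(s)} \right)^{-1}
      \sum_{s=1}^4 \sum_{(i,t)\in\mathcal{O}_s} \widetilde{X}_{it}^{(s)\prime} \, \widetilde{\Gamma}^{(s)}_{it},
\end{align*}
and use the Cauchy--Schwarz/trace inequality at the level of each subpanel to obtain
\begin{align*}
    \left\| \kappa^{\rm (GS)}_{NT} \right\|
    \leq \left\| \left( \sum_{s=1}^4 \sum_{(i,t)\in\mathcal{O}_s} \widetilde{X}_{it}^{(s)\prime} \widetilde{X}_{it}^{(s)} \right)^{-1} \right\|
    \sum_{s=1}^4 \left( \max_{k} \left\| \widetilde{X}_k^{(s)} \right\|_F \right) \big\| \widetilde{\Gamma}^{(s)} \big\|_F,
\end{align*}
which is the direct analogue of \eqref{ExplainApproxError1}. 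The inverse Gram matrix is $O_P((NT)^{-1})$ by the positive-definiteness condition carried over in Assumption~\ref{ass:ApproxErrorSplit}, and $\max_k \|\widetilde{X}_k^{(s)}\|_F = O_P(\sqrt{NT})$ by the boundedness of the projected regressors.

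Next, for each $s$ I would write, exactly as in \eqref{ExplainApproxError2},
\begin{align*}
    \big\| \widetilde{\Gamma}^{(s)} \big\|_F^2
    = \min_{\delta, \nu} \sum_{(i,t) \in \mathcal{O}_s} \bigl[ \fct(\alpha_i, \gamma_t) - \delta_{i, c_t^{(s)}} - \nu_{t, g_i^{(s)}} \bigr]^2,
\end{align*}
and then plug in the particular additive choice suggested by the Taylor expansion \eqref{eq:taylorApprox} around the subpanel-specific group means $\overline{\alpha}_{g_i^{(s)}}$ and $\overline{\gamma}_{c_t^{(s)}}$. Twice-continuous differentiability of $\fct$ gives a pointwise remainder of order $\|\alpha_i - \overline{\alpha}_{g_i^{(s)}}\|^2 + \|\gamma_t - \overline{\gamma}_{c_t^{(s)}}\|^2$, which by the split-sample analogues of Assumption~\ref{ass:ApproxError}(iii)--(v) (applied to the proxy estimates $\widehat{\lambda}^{(\tilde{s})}$ and $\widehat{f}^{(\tilde{s})}$ constructed from the companion subpanel $\mathcal{O}^*_{\tilde{s}}$) yields, after averaging,
\begin{align*}
    \frac{1}{|\mathcal{O}_s|} \big\| \widetilde{\Gamma}^{(s)} \big\|_F^2 = O_P(\xi_{NT}^2).
\end{align*}
Combining these with the bound above gives $\|\kappa^{\rm (GS)}_{NT}\| = O_P(\xi_{NT})$, as required.

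The main obstacle, as in the proof of Lemma~\ref{lemma:ApproxError}, is controlling the within-group proxy discrepancies uniformly over the four partitions, since the clusterings $g_i^{(s)}$ and $c_t^{(s)}$ are built from proxies $\widehat{\lambda}^{(\tilde{s})}_i$ and $\widehat{f}^{(\tilde{s})}_t$ estimated from a \emph{different} subpanel of reduced size (roughly $N \times T/2$ or $N/2 \times T$). One must verify that the proxy convergence rate and the clustering-closeness rate in Assumption~\ref{ass:ApproxErrorSplit} degrade by at most a constant factor relative to the full-sample version used in Lemma~\ref{lemma:ApproxError}, so that the sum over the four subpanels still delivers the same $\xi_{NT}$ rate. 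Everything else is a routine application of the injectivity assumption (composing $\lambda^*$, $f^*$ with their Lipschitz inverses on the convex domains) and the Taylor-remainder bound on $\fct$ that was already used in the non-split case.
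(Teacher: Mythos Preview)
Your overall strategy---decompose $\kappa^{\rm (GS)}_{NT}$ into the four subpanel pieces and invoke the argument of Lemma~\ref{lemma:ApproxError} on each---is exactly what the paper does; its appendix proof is essentially a one-line pointer back to Lemma~\ref{lemma:ApproxError} applied partition by partition, summed and divided by the pooled Gram matrix.

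There is, however, a genuine gap in the norm inequality you use. You follow the main-text heuristic \eqref{ExplainApproxError1} and pair $\|\widetilde X_k^{(s)}\|_F$ with $\|\widetilde\Gamma^{(s)}\|_F$, then claim
\[
   \frac{1}{|\mathcal{O}_s|}\,\big\|\widetilde\Gamma^{(s)}\big\|_F^2 = O_P(\xi_{NT}^2).
\]
But the Taylor remainder gives $|\widetilde\Gamma^{(s)}_{it}| = O\big(\|\alpha_i - \overline\alpha_{g_i^{(s)}}\|^2 + \|\gamma_t - \overline\gamma_{c_t^{(s)}}\|^2\big)$, so $|\widetilde\Gamma^{(s)}_{it}|^2$ involves \emph{fourth} powers of the within-group distances. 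Assumption~\ref{ass:ApproxErrorSplit} (like Assumption~\ref{ass:ApproxError}) only controls averaged \emph{squared} distances, i.e.\ $\frac{1}{N}\sum_i \|\alpha_i - \alpha_{j(i)}\|^2 = O_P(\xi_{NT})$; it does not deliver $\frac{1}{N}\sum_i \|\alpha_i - \alpha_{j(i)}\|^4 = O_P(\xi_{NT}^2)$. With only bounded domains you get $\|\widetilde\Gamma^{(s)}\|_F^2 = O_P(|\mathcal{O}_s|\,\xi_{NT})$, and the Frobenius route then yields merely $\kappa^{\rm (GS)}_{NT} = O_P(\sqrt{\xi_{NT}})$.

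The paper's actual appendix proof of Lemma~\ref{lemma:ApproxError} does \emph{not} use the Frobenius pairing displayed in the main text; it switches to the $L^\infty$--$L^1$ H\"older pairing
\[
   \|\kappa_{NT}\| \;\leq\; \Big\|\Big(\textstyle\sum_{i,t}\widetilde X_{it}'\widetilde X_{it}\Big)^{-1}\Big\|\,
   \Big(\textstyle\sum_k \|\vect(\widetilde X_k)\|_\infty\Big)\,\|\vect(\widetilde\Gamma)\|_1,
\]
using the $\max_{k,i,t}|\widetilde X_{it,k}| = O_P(1)$ condition in part~(vi). Then one only needs $\sum_{i,t}|\widetilde\Gamma_{it}| = O_P(NT\,\xi_{NT})$, which follows directly from averaging the second-order remainder---precisely what parts~(iii)--(v) supply. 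Replacing your Cauchy--Schwarz step with this $L^\infty$--$L^1$ pairing closes the gap and recovers the stated rate.
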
 
Assumption~\ref{ass:ApproxErrorSplit} is stated in the appendix, but it is simply
a restatement of Assumption~\ref{ass:ApproxError}
for each subpanel $s \in \{1,2,3,4\}$. Those assumptions 
were discussed after Lemma~\ref{lemma:ApproxError} above.
In particular, the best possible convergence rate
we can hope for here is $\xi_{NT} = 1/\min(N,T)$, but 
that rate is only attainable for
$d_\alpha \leq 2$ and $d_\gamma \leq 2$.

The key difference between  $\widehat \beta_{\rm G}$
and $\widehat \beta_{\rm GS}$ is that for the
split-sample estimator we can derive the asymptotic behavior 
of the variance term very easily $\phi^{\rm (GS)}_{NT}$.
For this purpose, we impose the following assumption.

\begin{assumption}~
    \label{ass:AsympSplit}
    \begin{enumerate}[(i)]  
        \item  Conditional on $X$, $\alpha$, $\gamma$, we assume that $\varepsilon_{it}$ is independently distributed across $i$
        and over $t$, such that 
        $ \sigma^2_{it} := \mathbbm{E}\left( \varepsilon_{it}^2 \, \big| \, X,\alpha,\gamma\right) \leq B <\infty$, for some constant $B$
        that is independent of $i,t,N,T$.
        
        \item We have $\plim_{N,T \rightarrow \infty} \frac 1 {NT} \sum_{s=1}^4
   \sum_{(i,t) \in  {\cal O}_{s}}
   \widetilde{X}_{it}^{(s)\,\prime} \widetilde{X}_{it}^{(s)} = \Omega  >0$,
   and for each $s \in \{1,\ldots,S\}$ we have
   $\plim_{N,T \rightarrow \infty}\frac 1 {NT}  
   \sum_{(i,t) \in  {\cal O}_{s}}
   \sigma^2_{it}
   \widetilde{X}_{it}^{(s)\,\prime} \widetilde{X}_{it}^{(s)} = \Sigma^{(s)}$. Furthermore, we assume that, for $s \in \{1,2,3,4\}$, all the 
   third-order sample moments of
    $\widetilde{X}_{it}^{(s)\,\prime} \,  \varepsilon_{it}$
    across $(i,t) \in  {\cal O}_{s}$ are bounded
    as $N,T \rightarrow \infty$.
        
    \end{enumerate}
\end{assumption}

Assumption~\ref{ass:AsympSplit} 
together with the sample splitting method used to construct $\widehat \beta_{\rm GS}$
guarantees that,
within each subpabel $s \in \{1,2,3,4\}$,
the
$ \widetilde{X}_{it}^{(s)\,\prime} \,  \varepsilon_{it}$
are zero mean and independently distributed
across $ (i,t)   $.
Here, the split-panel construction is crucial, since
it guarantees that $\widetilde{X}_{it}^{(s)}$ 
is independent of $ \varepsilon_{it}$. 
The remaining conditions in Assumption~\ref{ass:AsympSplit}
are regularity conditions to allow us to apply
the Lyapunov central limit theorem for each subpanel
and to guarantee that $\phi^{\rm (GS)}_{NT}$ has a 
finite asymptotic variance. We therefore obtain the
following lemma.

 \begin{lemma}
   \label{lemma:CLT}
   Under Assumption~\ref{ass:AsympSplit} we have,
   as $N,T \rightarrow \infty$,
   \begin{align*}
      \sqrt{NT} \, \phi^{\rm (GS)}_{NT}   &\Rightarrow {\cal N}(0, \Sigma_{\rm GS}) ,
      &
      \Sigma_{\rm GS} &= \Omega^{-1} \left( \sum_{s=1}^4 \Sigma^{(s)}
      \right) \Omega^{-1} .
   \end{align*}
\end{lemma}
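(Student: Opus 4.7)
The proof proceeds via the standard decomposition of an OLS-type estimator into an inverse sample moment matrix times a score, combined with sample-splitting independence and a conditional Lyapunov CLT.

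First, I would write $\sqrt{NT}\,\phi^{\rm (GS)}_{NT} = \widehat\Omega_{NT}^{-1} \cdot S_{NT}$ where
\begin{align*}
\widehat\Omega_{NT} &:= \frac{1}{NT}\sum_{s=1}^4 \sum_{(i,t)\in\mathcal{O}_s} \widetilde{X}_{it}^{(s)\,\prime}\widetilde{X}_{it}^{(s)},
&
S_{NT} &:= \frac{1}{\sqrt{NT}}\sum_{s=1}^4 \sum_{(i,t)\in\mathcal{O}_s} \widetilde{X}_{it}^{(s)\,\prime}\,\varepsilon_{it}.
\end{align*}
By Assumption~\ref{ass:AsympSplit}(ii), $\widehat\Omega_{NT} \xrightarrow{p} \Omega$, and the continuous mapping theorem yields $\widehat\Omega_{NT}^{-1} \xrightarrow{p} \Omega^{-1}$. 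Hence, by Slutsky, it suffices to prove $S_{NT} \Rightarrow \mathcal{N}\!\left(0,\sum_{s=1}^4 \Sigma^{(s)}\right)$, after which the stated conclusion follows.

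Second, I would exploit the sample-splitting construction to establish a clean conditional-independence structure. Fix $s\in\{1,\dots,4\}$ and note that the group memberships $g^{(s)}_i$, $c^{(s)}_t$ are measurable with respect to $\widehat\lambda^{(\tilde s)}$, $\widehat f^{(\tilde s)}$ computed from $\mathcal{O}^*_{\tilde s(s)}$, which, by inspection of the index choices in Section~\ref{sec:DefinedBetaGS}, satisfies $\mathcal{O}^*_{\tilde s(s)} \cap \mathcal{O}_s = \emptyset$ for both the loading- and factor-based groupings. Let $\mathcal{F}$ be the $\sigma$-algebra generated by $(X,\alpha,\gamma)$ together with all $\varepsilon_{it}$ for $(i,t)\notin \mathcal{O}_s$ (for each $s$ separately, as needed) and the resulting group memberships. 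Then $\widetilde{X}_{it}^{(s)}$ is $\mathcal{F}$-measurable for every $(i,t)\in\mathcal{O}_s$, while by Assumption~\ref{ass:AsympSplit}(i) the $\varepsilon_{it}$ for $(i,t)\in\bigcup_s \mathcal{O}_s$ are, conditional on $\mathcal{F}$, independent across $(i,t)$ with mean zero and conditional variance $\sigma^2_{it}$. Since the partitions $\mathcal{O}_1,\dots,\mathcal{O}_4$ are disjoint, the array $\{\widetilde{X}_{it}^{(s)\,\prime}\varepsilon_{it}\}$ across all $s$ and $(i,t)\in\mathcal{O}_s$ is a conditionally independent mean-zero triangular array.

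Third, I would apply the Cramér--Wold device and the Lyapunov CLT. For arbitrary $u\in\mathbb{R}^K$, set $Z_{it}^{(s)} := u^\prime \widetilde{X}_{it}^{(s)\,\prime}\varepsilon_{it}$. The conditional variance of $S_{NT}^\prime u$ equals
\begin{equation*}
\frac{1}{NT}\sum_{s=1}^4 \sum_{(i,t)\in\mathcal{O}_s} \sigma_{it}^2 \, u^\prime \widetilde{X}_{it}^{(s)\,\prime}\widetilde{X}_{it}^{(s)}\, u \;\xrightarrow{p}\; u^\prime\!\left(\sum_{s=1}^4 \Sigma^{(s)}\right)\!u
\end{equation*}
by Assumption~\ref{ass:AsympSplit}(ii). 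The Lyapunov third-moment condition
\begin{equation*}
\frac{1}{(NT)^{3/2}} \sum_{s=1}^4 \sum_{(i,t)\in\mathcal{O}_s} \mathbb{E}\!\left[\,|Z_{it}^{(s)}|^3 \,\big|\, \mathcal{F}\right] \;=\; O_P\!\left((NT)^{-1/2}\right) \;\to\; 0
\end{equation*}
follows directly from the bounded third-order sample moments in Assumption~\ref{ass:AsympSplit}(ii). A conditional Lyapunov CLT therefore gives $S_{NT}^\prime u \Rightarrow \mathcal{N}\!\left(0, u^\prime \sum_s \Sigma^{(s)} u\right)$ conditionally on $\mathcal{F}$, and since the limiting variance is deterministic, the unconditional convergence follows by the standard bounded-convergence argument applied to characteristic functions $\mathbb{E}[\exp(i t S_{NT}^\prime u)] = \mathbb{E}[\mathbb{E}[\exp(i t S_{NT}^\prime u)\mid \mathcal{F}]]$.

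The main obstacle is the second step: carefully verifying that the sample-splitting scheme truly ensures $\widetilde{X}_{it}^{(s)}$ is $\mathcal{F}$-measurable with $\varepsilon_{it}$ independent of $\mathcal{F}$ for $(i,t)\in\mathcal{O}_s$, which requires tracking the disjointness of each $\mathcal{O}_s$ from both of its corresponding proxy subpanels. Once this independence is established, the remaining steps are routine applications of Slutsky, Cramér--Wold, and Lyapunov.
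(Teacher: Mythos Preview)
Your approach is essentially the paper's: write $\sqrt{NT}\,\phi^{\rm (GS)}_{NT}=\widehat\Omega^{-1}S_{NT}$, use the sample-splitting construction so that within each subpanel the projected regressors are independent of that subpanel's errors, apply a Lyapunov CLT, and finish with Slutsky and the continuous mapping theorem. The one difference is organizational: the paper applies Lyapunov \emph{separately} to each $\phi^{(s)}_{NT}:=(NT)^{-1/2}\sum_{(i,t)\in\mathcal O_s}\widetilde X_{it}^{(s)\prime}\varepsilon_{it}$, conditioning on $\{\widetilde X_{it}^{(s)}:(i,t)\in\mathcal O_s\}$ for that $s$ alone, and then asserts that the four limiting normals are independent before summing. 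Your pooled version instead posits a single $\sigma$-algebra $\mathcal F$ under which the entire array $\{\widetilde X_{it}^{(s)\prime}\varepsilon_{it}\}$ across all $s$ is conditionally independent; as written this does not hold, because the group memberships entering $\widetilde X^{(s)}$ for different $s$ are built from overlapping blocks of $\varepsilon$ (e.g.\ $\widetilde X^{(1)}$ depends on errors in $\mathcal O_2$ via $\mathcal O^*_2$, while $\widetilde X^{(2)}$ depends on errors in $\mathcal O_1$ via $\mathcal O^*_1$), so no single $\mathcal F$ can make all $\widetilde X^{(s)}$ measurable while keeping every $\varepsilon_{it}$ conditionally independent of it. The paper's per-subpanel conditioning avoids this particular issue, though its own claim of independence of the limits across $s$ is stated without detailed justification; in substance the two arguments are at the same level, and your identification of the second step as the crux is accurate.
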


Combining equation \eqref{DecomposeBetaGS}
with Lemma~\ref{lemma:ApproxError2} and Lemma~\ref{lemma:CLT}
then gives the following theorem.

\begin{theorem}
   Under Assumption~\ref{ass:AsympSplit} and Assumption~\ref{ass:ApproxErrorSplit} we have
   $$ \widehat \beta_{\rm GS} - \beta^0 =  
    O_P\left( \frac 1 {\sqrt{NT}} + \xi_{NT} \right)
    = o_P(1).$$
\end{theorem}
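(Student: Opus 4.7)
The plan is essentially to assemble the pieces already developed: the decomposition in equation \eqref{DecomposeBetaGS} writes
$\widehat \beta_{\rm GS} - \beta^0 = \phi^{\rm (GS)}_{NT} + \kappa^{\rm (GS)}_{NT}$,
so the theorem reduces to bounding each of the two terms and showing each is $o_P(1)$.

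First, I would invoke Lemma~\ref{lemma:ApproxError2} directly: under Assumption~\ref{ass:ApproxErrorSplit} we have $\kappa^{\rm (GS)}_{NT} = O_P(\xi_{NT})$, and since $\xi_{NT} \to 0$ is a maintained property of the sequence appearing in that assumption, this term is $o_P(1)$. Next, I would apply Lemma~\ref{lemma:CLT}, which under Assumption~\ref{ass:AsympSplit} yields $\sqrt{NT}\, \phi^{\rm (GS)}_{NT} \Rightarrow \mathcal{N}(0,\Sigma_{\rm GS})$ with a finite limiting covariance matrix. By the continuous mapping / Slutsky-type argument, this implies $\phi^{\rm (GS)}_{NT} = O_P((NT)^{-1/2})$, and in particular $o_P(1)$.

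Summing the two bounds gives the stated rate $\widehat \beta_{\rm GS} - \beta^0 = O_P((NT)^{-1/2} + \xi_{NT})$, and the $o_P(1)$ conclusion follows since both $(NT)^{-1/2}$ and $\xi_{NT}$ tend to zero as $N,T \to \infty$. No additional obstacle arises, since the hard work — controlling the approximation bias from the two-way additive grouped-fixed-effect specification, and verifying that the independence induced by sample splitting is sufficient to drive the variance term to a centered Gaussian limit — has already been dispatched in Lemmas~\ref{lemma:ApproxError2} and~\ref{lemma:CLT}, respectively. The only subtle point worth emphasizing in the write-up is that the matrix $\Omega$ appearing in the leading inverse in both $\phi^{\rm (GS)}_{NT}$ and $\kappa^{\rm (GS)}_{NT}$ is the same (by Assumption~\ref{ass:AsympSplit}(ii)), so that the two bounds combine cleanly through a single application of the triangle inequality after factoring out $\Omega^{-1}$; there is no mismatch between the two denominators.
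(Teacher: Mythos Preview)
Your proposal is correct and matches the paper's approach exactly: the paper does not give a separate written proof of this theorem but simply states that it follows by combining the decomposition \eqref{DecomposeBetaGS} with Lemma~\ref{lemma:ApproxError2} and Lemma~\ref{lemma:CLT}, which is precisely what you do. Your additional remarks (the $O_P$ implication of weak convergence, the shared $\Omega$ in both terms) are sound elaborations but not strictly needed beyond what the paper asserts.
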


Analogous to Corollary~\ref{cor:LSconsistency}
for the least-squared estimator of \cite{Bai2009},
we have this obtained a consistency result for 
$\widehat \beta_{\rm GS}$ as well. We have not derived
asymptotic inference results using either of these estimators,
but in the following section we explain how we use those
estimators to construct confidence intervals in our simulations
and empirical application.

\section{Implementation}
\label{sec:Implementation}

The asymptotic results derived for
$\widehat\beta_{\rm LS} $, $\widehat\beta_{\rm G} $,
and $\widehat\beta_{\rm GS} $ in the last two sections
are insightful for how those estimates should be used
in practice. In particular, our discussions and derivations
are helpful to appreciate the limitations 
and assumptions needed for the estimation approaches,
and we will summarize those again in our conclusion section below.

In the following Monte Carlo simulations and empirical application we will employ the estimates $\widehat\beta_{\rm LS} $, $\widehat\beta_{\rm G} $,
and $\widehat\beta_{\rm GS} $ in a way that goes beyond our  formal asymptotic results. In particular, we will
use all those estimators to construct confidence intervals
and we will also apply Jackknife methods for bias correction.
In this section, we want to briefly explain how those
confidence intervals and bias corrected estimates are constructed. 

To calculate standard errors for each estimator we  ignore the approximation error discussed in our formal results and simply use formulas as if residuals were independently distributed. For example, in section \ref{sec:GroupedFEconsistency} where we split the residual term into $\phi$ and $\kappa$, we will ignore the $\kappa$ term and estimate standard errors as if we are left with only $\phi$. We use the jackknife corrections to address the residual terms related to approximation error in both the factor and grouped fixed-effects estimation models.

For factor model standard errors we construct the heteroscedasticity-consistent estimator from \cite{white1980heteroskedasticity} as follows. Take $\Omega = \sum_{i=1}^N\sum_{t=1}^T \widetilde X_{it}^\prime \widetilde X_{it}$ and $\widehat\Sigma = \sum_{i=1}^N\sum_{t=1}^T \widehat u_{it}^2  \widetilde X_{it}^\prime \widetilde X_{it}$ where $\widehat u_{it} = \widetilde Y_{it} - \sum_k \widehat\beta_{LS,k} \widetilde X_{it,k}$ and for a matrix $A$, in this context, $\widetilde A$ represents the matrix with factors projected. We must make a degrees of freedom correction for the factor projection by the ratio $dfc = \sqrt{\frac{NT}{(N-R)(T-R)}}$. Then the vector of standard errors are, 
\begin{align*}
    {\rm se}(\widehat \beta_{\rm LS}) = dfc \cdot \sqrt{
    {\rm diag}\left(\Omega^{-1} \widehat\Sigma \Omega^{-1}\right)}. 
\end{align*}
As above, we use this same standard error estimator for jackknife corrected estimates. 

For the grouped fixed-effects models we use clustered standard errors where clusters are taken as the combination of $i$ and $t$ clusters. 
That is, for the matrices of clusters $D_\alpha$ and $D_\gamma$ for $i$ and $t$ respectively we take clusters as the Kronecker product between these two matrices, $D_\alpha\otimes D_\gamma$. 
Remember here that the columns of $D_\alpha$, resp. $D_\gamma$, are the cluster assignments of $i$, resp. $t$ with a 1 entry if that observation is in the cluster and a 0 otherwise. 
Take $m$ as the index for cluster assignment with $M = GC$ the total number of clusters. Hence, $D_\alpha\otimes D_\gamma := \mathcal{D}$ is an $NT$ by $M$ matrix with $\mathcal{D}_m$ representing a column of this matrix and $\mathcal{D}_{n,m}$ representing an entry. 
A combination $(i,t)$ can be identified by the row, $n$, of the matrix $\mathcal{D}$ as $t= \ceil{n/N}$ and $i = n - \left(\ceil{n/N} - 1\right)N$, which is similar to the usual matrix flattening procedure. 
Then, the column-vector $\mathcal{D}_m$ consists of a 1 if the $(i,t)$ combination implied by that row, $n$, is in that column's cluster and 0 otherwise. 

Define as above $\Omega = \sum_{i=1}^N\sum_{t=1}^T \widetilde X_{it}^\prime \widetilde X_{it}$ and $\widehat u_{it} = \widetilde Y_{it} - \sum_k \widehat\beta_{G,k} \widetilde X_{it,k}$ where in this context for matrix $A$, the matrix $\widetilde A$ represents the matrix with group fixed-effects projected out. Call the index function $n(i,t) = i + (t-1)N$, such that $\mathcal{D}_{n(i,t), m}$ returns the binary indicator of whether $(i,t)$ is in the $m^{\textrm{th}}$ combination cluster. Now define $\widehat \Sigma = \sum_{m = 1}^M \sum_{i=1}^N\sum_{t=1}^T \mathcal{D}_{n(i,t), m} \widehat u_{it}^2  \widetilde X_{it}^\prime \widetilde X_{it}$. This collapses the familiar block-diagonal matrix where values within each block corresponds to a combination cluster and are unrestricted but zero outside each block. 
The clustered standard errors can thus be defined as 
\begin{align*}
    {\rm se}(\widehat \beta_{G}) = dfc \cdot \sqrt{ {\rm diag}\left(\Omega^{-1} \widehat\Sigma \Omega^{-1}\right)}
\end{align*}
where in this context $dfc = \sqrt{\frac{NT}{(N - G)(T - C)}}$. The standard error estimator is identical for the split sample version except there are many more combination clusters by the nature of this split sample estimators clustering method.

Finally, in our Monte Carlo simulations below we also explore
 whether Jackknife bias correction methods are able to reduce the approximation bias
and the incidental parameter bias of the various estimates.
We do not have any theoretical results on the leading order
bias of the various estimates, but we nevertheless we
 follow \cite{fernandez2016individual} to estimate the jackknife bias corrected analog to each estimator as follows. This procedure is closely related to \cite{Dhaene2015split}. First, split the sample along the $i$ dimension into two $N/2$ by $T$ samples. For each of these samples run and call the related estimates from estimator $E$, $\widehat\beta^{1,1}_E$ and $\widehat\beta^{1,2}_E$, respectively. Repeat this process along the $t$ dimension to return $\widehat\beta^{2,1}_E$ and $\widehat\beta^{2,2}_E$. Then the final jackknife bias corrected analog for estimator $E$ is 
\begin{align*}
    \widehat\beta_{E,JK} = 3\widehat\beta_E - \frac{1}{2}\left( \widehat\beta^{1,1}_E + \widehat\beta^{1,2}_E\right) - \frac{1}{2}\left( \widehat\beta^{2,1}_E + \widehat\beta^{2,2}_E\right), 
\end{align*}
where $\widehat\beta_E$ is simply the estimate without any sample split. We maintain the assumption that standard errors are the same across split samples so we can simply take the standard error estimate from the whole sample.

\section{Monte Carlo simulations}\label{sect:simulations}

For our Monte Carlo simulations,  we choose
a data generating process with a single regressor ($K=1$), 
and we generate outcomes and regressor as follows:
\begin{align}
    \begin{split}
        Y_{it} &= X_{it}\beta^0 + \fct(\alpha_i, \gamma_t) + \varepsilon_{it} ,\\
        X_{it} &= \FCT(\alpha_i, \gamma_t) + \mu_{it} ,
    \end{split}
\end{align}
with 
\begin{align}
    \begin{split}
        \varepsilon_{it}, \, \alpha_{i}, \, \gamma_{t}, \,\mu_{it} \, &\sim 
        \text{all mutually independent and i.i.d.} \,        
        {\cal N}(0,1)
    \end{split}
\end{align}
This setting assumes that the endogeneity in $X_{it}$ depends on the specification of $\FCT(.,.)$ vis--\`a--vis $\fct(.,.)$. The decay in singular values for either the unobserved term in $Y_{it}$ or for $X_{it}$ can be directly manipulated through the specification of $\fct(.,.)$ and $\FCT(.,.)$, which will dictate the number of significant factors in each decomposition.

We set $\beta^0 = 1$ and,
\begin{align}
\begin{split}
    \fct(a,b) &= \FCT(a,b) = \frac{1}{\sqrt{2\pi}\theta} \exp\left(- \frac{(a - b)^{2}}{\theta^2}\right), \quad \theta = (1/2)^3.
\end{split}
\end{align}
The $\theta$ value here dictates the speed of decay in singular values for $\fct(.,.)$ and $\FCT(.,.)$, holding fixed the variation in their arguments, where a lower value implies a slower decay. This particular value for $\theta$ was chosen as it implies a slow decay in singular values such that the endogenous component of the unobserved term and $X$ persists even as many factors are included. The value for $\theta$ carries no fundamental economic meaning. Note, the nature of bias in this simulation is by design monotonic and positive for illustrative purposes.  

Table \ref{tab:sim.small.output} below shows the results from 10,000 Monte Carlo simulations. 
These results display our theoretical result on bias reduction succinctly. 
We see that as we increase the number of factors the average bias reduces and the standard deviation of estimates increases. 
We also see a significant improvement in bias using the grouped fixed-effects estimator, without a large increase in standard deviation. The GFE split sample estimator performs much worse in terms of bias, which is expected given the significantly smaller candidate pool for clustering in this estimator. 
The jackknife analog to each estimator reduces bias in all cases except the factor model with 5 factors, but significantly increases standard deviation in all cases. 
Note that we only report factor model estimated after first applying a within transformation, but we actually do not find any substantial difference compared to not applying the within transformation first.

\begin{table}[ht]
\centering
\caption{Monte Carlo simulations}
    \begin{tabular}{rrrrrrr}
  \hline
 & Bias & St. Dev. & Mean $\widehat{se}$  &  CDF($\beta^0$)
 & Cover &  MC cover\\ 
  \hline
  OLS & 0.6025 & 0.0094 & 0.0080 & 0.00  & 0\% & 0\% \\
  Fixed-effects & 0.5165 & 0.0095 & 0.0087& 0.00  & 0\% & 0\% \\
 LS (5 factors) & 0.0418 & 0.0114 & 0.0105& 0.00  & 4\% & 0\% \\ 
  LS (20 factors) & 0.0158 & 0.0147 & 0.0110 & 0.14 & 64\% & 28\% \\ 
  LS (50 factors) & 0.0129 & 0.0316 & 0.0135 & 0.34 & 56\% & 68\% \\ 
  LS Jackknife (5 factors) & -0.4216 & 0.0317 & 0.0105 & 1.00 & 0\% & 0\% \\ 
  LS Jackknife (20 factors) & -0.0080 & 0.0282 & 0.0110 & 0.61 & 54\% & 78\% \\
  LS Jackknife (50 factors) & -0.0042 & 0.0795 & 0.0135 & 0.52 & 26\% & 96\% \\ 
  Group fixed-effects & 0.0027 & 0.0176 & 0.0179 & 0.44 & 95\% & 88\% \\ 
  GFE jackknife & 0.0009 & 0.0321 & 0.0179 & 0.49 & 73\% & 98\% \\ 
  GFE splits & 0.0204 & 0.0184 & 0.0126 & 0.13 & 59\% & 27\% \\ 
   \hline
\end{tabular}

\vspace{1ex}
 {\raggedright {\footnotesize  \small{N = T = 100} with 10,000 repetitions.\par
 
 All results refer to estimation of $\beta$. 
 Bias is simply the mean of the bias across simulations. 
 Standard deviation is the standard deviation of the estimates, again across simulations. 
 Mean $\widehat{se}$ is the mean across simulations of the standard error estimate. 
  CDF($\beta^0$) is value of the empirical CDF across simulations evaluated at the true value of $\beta^0=1$. 
 Cover is defined here as the percentage of the
 95\% confidence intervals containing the true $\beta^0$. 
 MC cover reports coverage if estimates are normally distributed with mean bias from column 2 and standard deviation from column 3. 
 \par}}
    
    \label{tab:sim.small.output}
\end{table}

If we compare the mean standard error estimates to standard deviation across simulations we see evidence that the standard error calculation may underestimate the true standard error of the estimator. 
In light of discussion in Section~\ref{sec:Implementation}, we explicitly ignore fixed-effects approximation error and assumed only a noise term remains when estimating standard errors, which may explain this discrepancy. 
The divergence between estimated standard errors and standard deviation across simulations is particularly noticeable for the factor model with a large number of factors and for jackknife bias corrected estimators. 
For large factor models it is likely our inference approach misses out dependence structures introduced by the factor projection. 
This divergence is less pronounced for the group fixed-effects estimator without bias correction. 
It is also worth noting the assumption of equal standard errors across the components of the jackknife estimator appears to be violated from the difference in standard deviations between jackknife and non-jackknife estimators. 
These results suggest an alternative method, for example using bootstrap, is necessary to do feasible inference in this setting with these estimators.
Since we do not expressly advocate a particular inference approach for any estimator used in this paper we do not discuss this issue any further and leave it for future research. 

In Table~\ref{tab:sim.small.output} we also compare where the true value of $\beta^0 = 1$ lies in the empirical CDF of each estimator to the coverage based on a normal distribution with the simulated mean bias and standard deviation as the distribution parameters. 
We see that in instances where bias is low and $\beta^0 = 1$ is close to the median of the empirical CDF then $2|$CDF($\beta^0$) - 0.5$|$ $\times 100\%$ is approximately equal to 1 - MC cover. 
This is some evidence that the estimators may approach the normal distribution, where the simulations correctly estimate the standard deviations. 
However, given that the estimated standard errors are usually far from the simulation standard deviations, this still does not present a feasible inference procedure.

To compare the rates of convergence across estimators we repeat the above simulation exercise across different sample sizes, namely $N = T \in\{20,40,80,160\}$. 
The results are displayed in Table~\ref{tab:sim.sizes.output}. 
The table shows that for this range of data the convergence rates for the GFE estimators are all better or equal to the parametric rate.
Note, in this setting the parametric rate suggests the bias should halve for each increment in sample size. 
The factor model looks to be decaying at about the parametric rate, however, for the specification with a small number of factors (factors equal $\floor{3N^{1/4}}$ and $\floor{3N^{3/8}}$) the bias is substantially above standard deviation. 
This suggest there is a statistically significant persistence in bias for this estimator. 
For the factor model with $\floor{3N^{9/20}}$ factors, which is near the upper bound of number of factors, $O(\min\{N,T\}^{1/2})$, as per Theorem~\ref{thm:bai.con}, the bias does converge to within two standard deviations of zero. 
The standard deviations for each estimator do look to settle on the parametric rate by at the latest the last sample size increment, which is seen by comparison of the second last and last columns across estimators. In Appendix~\ref{app:laggedsimulations} we also include a simulation exercise with lagged dependent variables, which highlights the importants of having a correctly specified model.

\begin{table}[ht]
\centering
\caption{Convergence rate simulation}
    \begin{tabular}{rrrrr}
  \hline
 $N = T =$& 20 & 40 & 80 & 160   \\ 
  \hline
  Mean Bias &&&& \\
  (Standard Deviation) &&&& \\
  \hline
 LS ($\floor{3N^{1/4}}$ factors)& 0.4461 & 0.2782 & 0.2696  & 0.1609  \\ 
 & (0.0983) & (0.0405) & (0.0194) & (0.0097) \\ 
  LS ($\floor{3N^{3/8}}$ factors) & 0.3546  & 0.1748 & 0.0860 & 0.0177  \\ 
 & (0.1245) & (0.0388) & (0.0159) & (0.0067) \\ 
  LS ($\floor{3N^{9/20}}$ factors) & 0.2763  & 0.1077 & 0.0320 & 0.0122  \\ 
 & (0.1411) & (0.0386) & (0.0158) & (0.0071) \\ 
  Group fixed-effects & 0.2690  & 0.0064 & 0.0045 & 0.0008  \\ 
 & (0.1525) & (0.0545) & (0.0224) & (0.0110) \\ 
  GFE jackknife & 0.2458  & 0.0334 & 0.0023 & 0.0004  \\ 
 & (0.2225) & (0.0942) & (0.0406) & (0.0201) \\ 
  GFE splits & 0.3829  & 0.1524 & 0.0249 & 0.0036  \\ 
 & (0.1200) & (0.0657) & (0.0237) & (0.0111) \\ 
   \hline
\end{tabular}

\vspace{1ex}
 {\raggedright {\footnotesize  10,000 Monte Carlo rounds.\par
 
 All results refer to estimation of $\beta$. 
 Mean bias is simply the mean of the bias across simulations. 
 Standard deviation is the standard deviation of the estimates, again across simulations. 
 The multiple of 3 is applied here so the estimated number of factors is not too small for small sample sizes and such that there is a an actual change in the number of factors across the different sample sizes when $\floor{3N^{1/4}}$ factors are estimated. 
 \par}}
    
    \label{tab:sim.sizes.output}
\end{table}

\section{Empirical application}
\label{sec:empirical}

We apply our estimation procedure to an analysis of the UK housing market, following \cite{giglio2016no} (GMS16). Specifically, we study the effects of extremely long lease agreements on the price of housing, when compared to freehold agreements. In the UK housing market it is common for real estate property to be sold under each agreement. GMS16 posit that any change in price due to exogenous variation in whether the property was sold under extremely long lease or freehold must be attributed to so--called ``housing bubbles associated with a failure of the transversality condition''. The empirical challenge in making this comparison, and much discussed in GMS16, is to sufficiently control for observable and unobservable covariates such that variation in the variable of interest can be reasonably described as exogenous. 

In the following, we compare estimates using our method with the more flexible approach taken in their paper. We note first that given differences in data, these results should not be directly compared with GMS16.  Rather, this should be seen as an internal validity check across estimation models, i.e., to check if the aggregated setting produce similar estimates to the granular setting from GMS16 within the same set of data.

Consider the granular model from GMS16
\begin{align}\label{eqn:GMS16}
    Y_{iprt} = ExtremelyLongLease_i\beta + controls_{it}^\prime\delta + \phi_{prt} + \varepsilon_{iprt}
\end{align}
where $i$ are individual \textit{transactions} (i.e. not necessarily properties), $p$ is property type, $r$ are regions and $t$ is the month of transaction. Controls include hedonic variables, e.g. number of bedrooms, bathrooms and floorspace. $\phi_{prt}$ is a scalar fixed effect particular to the region, property type and month, and is identified via variation across transactions $i$. Compare this to an aggregated setting, 
\begin{align}\label{eqn:GMS16agg}
    Y_{rt} = ExtremelyLongLease_{rt}\beta + controls_{rt}^\prime\delta + \fct(\alpha_r,\gamma_t) + \varepsilon_{rt}
\end{align}
where $Y_{rt}$, $ExtremelyLongLease_{rt}$ and $controls_{rt}$ are the sample means aggregated to the region and transaction month.  The multidimensional array with entries $\phi_{prt}$ varies with higher rank than the matrix with entries $\fct(\alpha_r,\gamma_t)$ because the latter is constant across $p$ if extended to the equivalent multidimensional array with dimensions across $(p,r,t)$. This is why we believe the model in \eqref{eqn:GMS16} will better capture fixed-effects. 

 For purposes of this exercise we take the granular model with fixed-effects below as being, in theory, the better model to approximate unobserved heterogeneity. Hence we refer to this as the benchmark model. We use this benchmark approach to understand how well each estimator performs in practical instances where granular levels of aggregation are not always available, for example when data is aggregated for privacy reasons or for other feasibility reason. Hence, estimates close to the granular model estimates should be seen as performing ``well'' in this setting.

Table \ref{tbl:empirical} shows that when we control for fixed effects in the granular model there is a 0.3\% reduction in price when a long leasehold transaction is made compared to a freehold.
Whilst this is statistically significant, it translates to a decrease in the median house price of less than £1,000 so is arguably a small reduction economically. The OLS estimates do not change much across the different aggregation schemes and perhaps unsurprisingly the panel aggregated OLS has a much higher standard deviation due to the lower effective sample size. In the panel setting the factor model shows a convergence to the granular model with fixed effects as factors are increased and, interestingly, also to the grouped fixed-effects estimate, which is the closest to the benchmark estimates.\footnote{%
    In Table \ref{tbl:empirical}, our usual computation for the clustered standard errors of the group fixed-effect estimator was infeasible here due to the sample size.
    These standard error
    estimates are generated by resampling region clusters with replacement over 10,000 resamples. 
    }
 These results show a similar pattern to the simulation exercise where, according to the benchmark model, we see a bias reduction as the number of factors increases and when using the group fixed-effects estimator.

\begin{table}[H]
\centering
\caption{Empirical Results}
\begin{tabular}{r | rrr}
  \hline
 & Model & Estimate & Standard Errors  \\ 
 \hline
 Granular Model & Ordinary Least Squares & 0.203 & 0.0054 \\
 \eqref{eqn:GMS16} \hspace{0.65cm} & with Fixed Effects & -0.003 & 0.0006 \\
  \hline
  Panel Model & Ordinary Least Squares & 0.229 & 0.106\\
  \eqref{eqn:GMS16agg} \hspace{0.55cm} &LS factor model (5 factors)  & 0.024 & 0.012  \\ 
    &LS factor model (15 factors)  & 0.007 & 0.007  \\ 
    &LS factor model (30 factors)  & 0.007 & 0.008  \\ 
    &Group fixed-effects & 0.006 & 0.020 \\ 
   \hline
\end{tabular}

\vspace{1ex}
 {\raggedright {\footnotesize  UK housing market results for N = 2088 and T = 48.}}
\label{tbl:empirical}
\end{table}

\section{Conclusions}

Panel regressions are very popular estimation tools, 
because they allow to control for omitted variables that are unobserved and potentially correlated with the observed covariates.
Both \cite{Pesaran2006} and \cite{Bai2009}, and most of the literature following those
seminal papers, assume that those unobserved omitted variables
take the form of a low-rank matrix, which can be interpreted as
a static factor model or interactive fixed effects.
In this paper, we deviate from this interactive fixed effect model by assuming that the
unobserved omitted variables enter the model in the more general form
$\fct(\alpha_i,\gamma_t)$, where $h(\cdot,\cdot)$ is an unknown
smooth function, and $\alpha_i$ and $\gamma_t$ are (multidimensional)
fixed effects that can be arbitrarily correlated across $i$,
over $t$, and with the observed covariates.

We first explore the behavior of Bai's least-squares esimator
in this new setting.
We show that this LS esimator estimator is still consistent, as long as
the number of factors used in the estimation is allowed to grow
asymptotically.
However,  as explained in detail in Section~\ref{sect:consistencyBAI}, 
 it seems impossible to derive convergence rates faster 
than $ (\min\{N,T\})^{1/2}$ for this estimator in our setting.

We therefore  develop a new estimation approach called 
 the two-way grouped fixed effects approach, which generalize ideas
in \cite{bonhomme2017discretizing} to our two-way setting. 
We derive convergence rate results for the resulting new
 estimators and show that, depending on the dimension of $\alpha_i$
 and $\gamma_t$, and the relative size of $N$ and $T$, 
 convergence rates up to $\sqrt{NT}$ can be achieved with our new
 estimation approach.

We also explore the performance of those various estimators
in simulations and in an empirical application. We find that
both Bai's least-squares esimator and our grouped fixed effect
estimators tend to perform well in practice.
Interestingly, the theoretical convergence rate of $ (\min\{N,T\})^{1/2}$ for the LS esimator may 
often understate the performance of this esimator  
in practice. 

We also find that
Jackknife bias correction
helps to further reduce the bias of the 
various estimators, but at the cost
of increasing the variance.
Overall, the (Jackknife corrected) group fixed-effects estimator tends to have the smallest bias, but not necessarily the
smallest variance.
The empirical application shows that, according to our benchmark estimation, the LS estimation approach improves with more factors and that the group fixed-effects estimator does indeed provide a bias reduction compared to the LS estimator.

In the simulation exercise and empirical application we implemented standard error calculations for each estimator,
but we leave formal inference results in the setting of our paper as an open question for future research.

\setlength{\bibsep}{2pt}

\appendix

\section{Appendix}

\renewcommand{\theequation}{A.\arabic{equation}} 
\setcounter{equation}{0}
\renewcommand{\theassumption}{A.\arabic{assumption}} 
\setcounter{assumption}{0}
\renewcommand{\thelemma}{A.\arabic{lemma}} 
\setcounter{lemma}{0}

\subsection{Simulations with lagged dependent variable}\label{app:laggedsimulations}

In Table~\ref{tab:sim.lagged.output} we display the simulation results for the following DGP, 
\begin{align}
    \begin{split}
        Y_{it} &= Y_{i,t-1}\rho + X_{it}\beta + \fct(\alpha_i, \gamma_t) + \varepsilon_{it} ,\\
        X_{it} &= \FCT(\alpha_i, \gamma_t) + \mu_{it} ,
    \end{split}
\end{align}
where all parameters are set to the same values as Section~\ref{sect:simulations}, along with $\rho = 0.5$. 
Note that given we simulate $\gamma_t$ to be independent across $t$, there should be no omitted variable bias from simply ignoring $Y_{i,t-1}$ in the regression. 
However, and as we see in Table~\ref{tab:sim.lagged.output}, omitting $Y_{i,t-1}$ makes factor estimation more difficult because of the additional $Y_{i,t-1}\rho$ term in the fitted residual. 
To see this take the fitted residual without lagged $Y$ included, $\widehat{W}_1$, and with lagged $Y$ included, $\widehat{W}_2$, 
\begin{align*}
    \begin{split}
        \widehat{W}_1 &= (Y - X\widehat{\beta}) = X(\beta - \widehat{\beta}) + Y_{-1}\rho + \fct(\alpha, \gamma) + \varepsilon \\
        \widehat{W}_2 &= (Y - X\widehat{\beta} - Y_{-1}\widehat{\rho}) = X(\beta - \widehat{\beta}) + Y_{-1}(\rho - \widehat{\rho}) + \fct(\alpha, \gamma) + \varepsilon,
    \end{split}
\end{align*}
where $Y_{-1}$ is simply the matrix of lagged $Y$. 
We see then when lagged $Y$, or any control variable for that matter, is not included in the regression, then it makes identifying factors related to $\fct(\alpha, \gamma)$ more difficult due to $Y_{-1}\rho$ in the residual.
However, when lagged $Y$ is included in the regression then as long as $(\rho - \widehat{\rho})$ is reasonably small then factors are estimated better. 
Hence, whilst the presence of the lagged dependent term in the residual does not directly create omitted variable bias, it adds noise to the fitted residual and makes estimation of factors more difficult. 
This is especially highlighted by the fact that increased number of factors do not necessarily improve bias, from Table~\ref{tab:sim.lagged.output}. 
We see a similar, albeit less exaggerated, issue with the grouped fixed effects estimator, where cluster proxies for time periods are poorly estimated for the model without lagged $Y$ included in the regression. 

\begin{table}[H]
\centering
\caption{Lagged dependent variable simulation}
\begin{tabular}{rrr}
  \hline
    & Without lagged $Y$ & With lagged $Y$ \\ 
  \hline
 Mean bias (Standard deviation) & &    \\ 
  & &    \\ 
  \hline
  OLS & 0.9388 (0.0347) & 0.5625 (0.0102)  \\ 
    Fixed-effects & 0.4967 (0.0322) & 0.5109 (0.0096) \\ 
  LS (10 factors) & -0.0600 (0.0134) & 0.0191 (0.0124)  \\ 
  LS (20 factors) & -0.1445 (0.0134) & 0.0154 (0.0150)  \\ 
  LS (40 factors) & -0.2280 (0.0172) & -0.0384 (0.0831)  \\ 
  LS jackknife  (10 factors) & -0.1968 (0.0249) & -0.0380 (0.0229)  \\ 
  LS jackknife  (20 factors) & -0.2696 (0.0255) & -0.0089 (0.0286) \\ 
  LS jackknife  (40 factors) & -0.3296 (0.0401) &  -0.0582 (0.2418) \\ 
  GFE & -0.0354 (0.0248) &  0.0181 (0.0193) \\ 
  GFE jackknife & -0.0139 (0.0442) &  0.0157 (0.0343)  \\ 
   \hline
    
\end{tabular}

\vspace{1ex}
 {\raggedright {\footnotesize  10,000 Monte Carlo rounds.\par
 
 All results refer to estimation of $\beta$. 
 Mean bias is simply the mean of the bias across simulations. 
 Standard deviation is the standard deviation of the estimates, again across simulations. 
 \par}}
    
    \label{tab:sim.lagged.output}
\end{table}

\subsection{Proofs for Section~\ref{sect:consistencyBAI}}

 We first establish a technical lemma, which
 is afterwards used to prove the main text theorem.
Remember that we write $\|\cdot\|$ for the spectral norm
of a matrix.
Define the projection matrix $P_A = A(A^\prime A)^\dagger A^\prime$ for any matrix $A$ and remember we write the annihilation matrix $M_A = \mathbb{I} - P_A$. Here, $\dagger$ refers
to the Moore-Penrose inverse.

\begin{lemma}\label{lma:MWcons}
    Let Assumption \ref{ass:NC} hold
    and consider $N,T \rightarrow \infty$.    
    Furthermore,
    assume that
    \begin{align}
         Y =  \sum_{k=1}^K \, X_k  \, \beta^0_k +   e^* + e ,
         \label{model:lemma}
    \end{align}
    with ${\rm rank}(e^*) = R_{NT} \leq \min(N,T)/2$,
    $\| e \| = {\cal O}_P(\eta_{NT})$,
    $\|X_k\| = {\cal O}_P(\sqrt{NT})$,
    and
    $\frac 1 {\sqrt{NT}}  {\rm Tr}(X_k e^{\prime}) = {\cal O}_P(\xi_{NT})$,
    for $k=1,\ldots,K$.
    Then, the LS estimator 
    in \eqref{est:bai} calculated with $R = R_{NT}$
    factors in the estimation procedure,
    satisfies 
   $ \widehat \beta_{\rm LS} - \beta^0  =   {\cal O}_P \left( (\xi_{NT}+ R_{NT}\eta_{NT})  / \sqrt{NT} \right)$.
\end{lemma}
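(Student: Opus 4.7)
\textbf{Proof plan for Lemma~\ref{lma:MWcons}.}
The plan is to follow the singular-value perturbation strategy of Moon--Weidner (2015), adapted to the present setting where the approximation error $e^*$ has exact rank $R_{NT}$ and $R_{NT}$ itself is allowed to grow with $N,T$. After profiling out $(\lambda,f)$ from \eqref{est:bai} via the Eckart--Young theorem, the estimator $\widehat\beta_{\rm LS}$ minimizes
$L(\beta) = \sum_{r=R_{NT}+1}^{\min(N,T)} \sigma_r^2\bigl(Y - X\cdot\beta\bigr)$,
so $L(\widehat\beta_{\rm LS}) \le L(\beta^0)$. Writing $\Delta = \widehat\beta_{\rm LS} - \beta^0$ and substituting \eqref{model:lemma} gives $Y - X\cdot\widehat\beta_{\rm LS} = e^* + e - X\cdot\Delta$ and $Y - X\cdot\beta^0 = e^* + e$. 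The entire argument consists of expanding $L(\widehat\beta_{\rm LS}) - L(\beta^0) \le 0$ around $\Delta=0$, extracting a quadratic-in-$\|\Delta\|$ lower bound from Assumption~\ref{ass:NC}, and matching it to a linear upper bound whose coefficient is controlled by the hypotheses $\|e\| = O_P(\eta_{NT})$ and $\mathrm{Tr}(X_k e')/\sqrt{NT} = O_P(\xi_{NT})$.

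Let $M_{\lambda^*}$ and $M_{f^*}$ denote the orthogonal projectors onto the complements of the left and right singular subspaces of $e^*$, so that $M_{\lambda^*} e^* M_{f^*} = 0$. Combining the variational identity $\sum_{r>R}\sigma_r^2(M) = \min_{\mathrm{rank}(L)\le R}\|M-L\|_F^2$ with the Weyl--Mirsky bound $\sum_{r\le R}\sigma_r^2(M) \le R\|M\|^2$, one obtains the expansion
\begin{align*}
   L(\widehat\beta_{\rm LS}) - L(\beta^0)
   \;\ge\;
   \bigl\|M_{\lambda^*}(X\cdot\Delta)M_{f^*}\bigr\|_F^2
   \;-\; 2\,\mathrm{Tr}\!\bigl(M_{\lambda^*}(X\cdot\Delta)M_{f^*}\,e'\bigr)
   \;-\; |R(\Delta)|,
\end{align*}
where the remainder $R(\Delta)$ collects the corrections from rank-$R_{NT}$ perturbations of the factor subspace by $e$ and from the tail truncation. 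Because only $R_{NT}$ singular values of $e$ can enter these corrections, each bounded in spectral norm by $\eta_{NT}$, standard perturbation inequalities give $|R(\Delta)| = O_P\!\bigl(R_{NT}\,\eta_{NT}^2 + R_{NT}\,\eta_{NT}\sqrt{NT}\,\|\Delta\|\bigr)$. This is precisely where the factor $R_{NT}$ (rather than the much larger $\sqrt{\min(N,T)}$ that a naive Frobenius bound $\|e\|_F^2 \le \min(N,T)\|e\|^2$ would produce) enters the final rate.

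For the leading terms, Assumption~\ref{ass:NC} applied to the projected regressor $M_{\lambda^*}(X\cdot\Delta)M_{f^*}$ gives the curvature bound $\|M_{\lambda^*}(X\cdot\Delta)M_{f^*}\|_F^2 \ge b\cdot NT\cdot\|\Delta\|^2\,(1+o_P(1))$. The cross-term decomposes as $\mathrm{Tr}(X\cdot\Delta\,e') + [\text{rank-}R_{NT}\text{ correction}]$: the first piece is $O_P(\sqrt{NT}\,\xi_{NT}\,\|\Delta\|)$ by the weak-exogeneity hypothesis, and the rank-$R_{NT}$ correction is $O_P\!\bigl(R_{NT}\,\eta_{NT}\sqrt{NT}\,\|\Delta\|\bigr)$ via $\|X\cdot\Delta\| = O_P(\sqrt{NT}\,\|\Delta\|)$ and the low-rank structure of the projector correction. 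Using $L(\widehat\beta_{\rm LS}) - L(\beta^0) \le 0$ then yields
\begin{align*}
   b\cdot NT\cdot\|\Delta\|^2
   \;\le\;
   C\sqrt{NT}\,(\xi_{NT} + R_{NT}\,\eta_{NT})\,\|\Delta\|
   \;+\; O_P\!\bigl(R_{NT}\,\eta_{NT}^2\bigr),
\end{align*}
and solving this quadratic in $\|\Delta\|$ delivers $\|\Delta\| = O_P\!\bigl((\xi_{NT} + R_{NT}\,\eta_{NT})/\sqrt{NT}\bigr)$, which is the claim.

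The main obstacle is the rank-$R_{NT}$ bookkeeping in the perturbation expansion, i.e.\ showing that every remainder in the expansion of $L(\widehat\beta_{\rm LS}) - L(\beta^0)$ can be controlled by spectral-norm rather than Frobenius-norm quantities in $e$, so that only $R_{NT}$ (not $\min(N,T)$) copies of $\eta_{NT}$ accumulate. This is exactly what the Moon--Weidner (2015) perturbation lemmas are designed for, and the present argument essentially repackages them uniformly in a growing $R_{NT}$.
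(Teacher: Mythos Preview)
Your proposal is correct and follows the same Moon--Weidner singular-value strategy as the paper. The only notable difference is in how $e^*$ is handled: you project via fixed $M_{\lambda^*}, M_{f^*}$ and carry a perturbation remainder $R(\Delta)$, whereas the paper absorbs $e^*$ directly by enlarging the factor space from rank $R_{NT}$ to rank $2R_{NT}$ in the lower bound (i.e.\ minimizes over $\tilde\lambda\tilde f'$ of rank $2R_{NT}$, obtaining $\min_{\tilde f}{\rm Tr}[(\Delta\beta\cdot X+e)M_{\tilde f}(\Delta\beta\cdot X+e)']$), and bounds the objective at $\beta^0$ from above by simply plugging in $\lambda^*,f^*$. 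This rank-$2R$ absorption trick avoids your somewhat loosely specified remainder: after absorption, the cross-terms ${\rm Tr}[(\Delta\beta\cdot X)P_{\tilde f}e']$ and ${\rm Tr}(eP_{\tilde f}e')$ are bounded in one line via $|{\rm Tr}(A)|\le{\rm rank}(A)\|A\|$ with ${\rm rank}\le 2R_{NT}$, and the term ${\rm Tr}(ee')$ cancels exactly between the upper and lower bounds. Your route should go through as well, but the paper's decomposition makes the $R_{NT}$-bookkeeping you flag as ``the main obstacle'' essentially automatic.
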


\begin{proof}[\bf  Proof of Lemma~\ref{lma:MWcons}]
   This proof is relatively minor modification
   of the consistency proof for the LS estimator 
   in \cite{MoonWeidner2015}, and more technical details 
   can be found there.
   For simplicity we just write $R$, $\eta$, $\xi$ instead of $R_{NT}$, $\eta_{NT}$, $\xi_{NT}$ in this proof. 
   We rewrite the definition of $\widehat \beta_{\rm LS} $ as
   \begin{align} 
      \widehat \beta_{\rm LS}  
      &= \argmin_{\beta}  {\cal L}_{NT}(\beta) ,
     \nonumber \\
      {\cal L}_{NT}(\beta)
      &:= 
      \min_{\left \{ \lambda \in \mathbbm{R}^{N\times R}, \;
                f \in \mathbbm{R}^{T\times R} \right\}  }
                \frac 1 {NT}
      {\rm Tr}\left[
      \left( Y - X \cdot \beta - \lambda f'
      \right)
      \left( Y - X \cdot \beta - \lambda f'
      \right)'
      \right] .
     \label{LSobjective} 
\end{align}
Since ${\rm rank}(e^*) = R$ 
we can write $e^* = \lambda^{*} f^{*\prime}$
for some $N \times R$ matrix    $\lambda^{*}$
and $T \times R$ matrix $f^{*\prime}$.

We now first establish a lower bound on ${\cal L}_{NT}(\beta)$.
   Let $\Delta \beta = \beta - \beta^0$.
   Consider the definition of
   ${\cal L}_{NT}(\beta)$ in equation \eqref{LSobjective} and plug in the model
    $Y=  \beta \cdot X + \lambda^{*} f^{*\prime} + e$.
    We then have
    \begin{align}
       {\cal L}_{NT}(\beta) &=  \min_{\left \{ \lambda \in \mathbbm{R}^{N\times R}, \;
                f \in \mathbbm{R}^{T\times R} \right\}  }   \frac 1 {NT}  {\rm Tr}
            \left[ \left( \Delta \beta \cdot X + e + \lambda^{*} f^{*\prime} - \lambda f'  \right)  
                   \left( \Delta \beta \cdot X + e + \lambda^{*} f^{*\prime} - \lambda f' \right)' \right] 
    \nonumber \\
        &\geq            \min_{\left \{ \tilde \lambda \in \mathbbm{R}^{N\times (2R)}, \;
                \tilde f \in \mathbbm{R}^{T\times (2R)} \right\}  }  \frac 1 {NT}  {\rm Tr}
            \left[ \left( \Delta \beta \cdot X + e   - \tilde \lambda \tilde f'  \right)  
                   \left( \Delta \beta \cdot X + e   - \tilde \lambda \tilde f' \right)' \right]    
    \nonumber \\
         &=
        \frac 1 {NT}   \min_{ \tilde f \in \mathbbm{R}^{T\times (2R)} }
         {\rm Tr}
            \left[ \left( \Delta \beta \cdot X + e \right) M_{\tilde f}
                   \left( \Delta \beta \cdot X + e \right)' \right]  
    \nonumber \\
         &=
   \frac 1 {NT} \min_{ \tilde f \in \mathbbm{R}^{T\times (2R)} }  \Bigg\{ 
         {\rm Tr}
            \left[ \left( \Delta \beta \cdot X  \right) M_{\tilde f}
                   \left(\Delta  \beta \cdot X   \right)' \right]   +  {\rm Tr} \left( e  e' \right)
                   -  {\rm Tr} \left( e P_{\tilde f}  e' \right)
        \nonumber \\ & \qquad \qquad        \qquad   \qquad \qquad        \qquad      \qquad        \qquad    
           + 2       {\rm Tr}
            \left[ \left( \Delta \beta \cdot X  \right)  e' \right]      
           -  2   {\rm Tr}
            \left[ \left( \Delta \beta \cdot X  \right) P_{\tilde f}  e' \right]   
           \Bigg\}
    \nonumber \\
         &\geq
   \frac 1 {NT}  \Bigg\{  
     \sum_{r=2R+1}^T  \mu_r\left[  (\Delta \beta \cdot X)' (\Delta \beta \cdot X)  \right]   
     +  {\rm Tr} \left( e  e' \right) -  2 R   \|e\|^2
        \nonumber \\ & \qquad \qquad        \qquad  \qquad        \qquad        \qquad        \qquad  
           + 2       {\rm Tr}
            \left[ \left( \Delta \beta \cdot X  \right)  e' \right]      
           -  4 R   \|e\|   \|    \Delta  \beta \cdot X \| 
           \Bigg\}
         \nonumber \\
           &\geq b \, \| \Delta \beta \|^2 + \frac 1 {NT} \, {\rm Tr} \left( e  e' \right)
           +  {\cal O}_P \left( \frac {R \, \eta^2} {NT} \right)
                + {\cal O}_P\left( \frac {(\xi + R  \,\eta) \, \| \Delta \beta \|}
                             {\sqrt{NT}} \right)     .
       \label{ConBnd1}
    \end{align}
   Here,  we applied the inequality $\left| {\rm Tr}(A) \right| \leq {\rm rank}(A) \|A\|$ with $A 
   = \left( \Delta \beta \cdot X \right)  P_{\tilde f} e'$
   and also with $A = e P_{\tilde f}  e' $.
   We also used that 
   $ \min_{ \tilde f }  
    {\rm Tr}
            \left[ \left( \Delta \beta \cdot X  \right) M_{\tilde f}
                   \left(\Delta  \beta \cdot X   \right)' \right]
           =  \sum_{r=2R+1}^T  \mu_r\left[  (\Delta \beta \cdot X)' (\Delta \beta \cdot X)  \right]   $.
 In the last step of \eqref{ConBnd1} we applied the 
 various assumptions in the lemma.
     
    Next, we establish an upper bound on ${\cal L}_{NT}(\beta^0)$. 
    We can choose $\lambda=\lambda^*$
    and $f = f^{*}$ in the minimization problem
    in   \eqref{LSobjective}, and therefore
    \begin{align}
       {\cal L}_{NT}(\beta^0)
          &\leq  \frac 1 {NT} \, {\rm Tr} \left( e  e' \right) .
       \label{ConBnd2}
    \end{align}
    Since we could choose $\beta=\beta^0$ in the minimization of $\beta$, the optimal
    $\widehat \beta_{\rm LS}$  needs to satisfy ${\cal L}_{NT}(\widehat \beta_{\rm LS} ) \leq {\cal L}_{NT}(\beta^0)$. 
    Together with \eqref{ConBnd1} and \eqref{ConBnd2} this gives
    \begin{align}
         b \, \| \widehat \beta_{\rm LS}  - \beta^0 \|^2
         &\leq
                            {\cal O}_P\left( \frac {(\xi+ R \, \eta) \| \widehat \beta_{\rm LS}  - \beta^0 \|} {\sqrt{NT}} \right)
                          +  {\cal O}_P \left( \frac {R \, \eta^2} {NT} \right)
                    \label{BoundBetaLSinequality}      
    \end{align}
    Since $R \rightarrow \infty$ as $N,T \rightarrow \infty$,
    we have 
    $$
    {\cal O}_P \left( \frac {R \, \eta^2} {NT} \right)
    \leq  {\cal O}_P \left( 
                          \left(  \frac {R \, \eta} {\sqrt{NT}}
                          \right)^2
                          \right)
                        \leq 
                         {\cal O}_P \left( 
                          \left(  \frac {\xi + R \, \eta} {\sqrt{NT}}
                          \right)^2
                          \right) ,
    $$
    and \eqref{BoundBetaLSinequality} thus implies
    \begin{align*}
           \| \widehat \beta_{\rm LS}  - \beta^0 \|^2
                          &\leq  {\cal O}_P\left( \frac {(\xi+ R \, \eta) \| \widehat \beta_{\rm LS}  - \beta^0 \|} {b \, \sqrt{NT}} \right)
                          + {\cal O}_P \left(  \frac 1 b
                          \left(  \frac {\xi + R \, \eta} { \sqrt{NT}}
                          \right)^2
                          \right)
                \\          
                    &=:  2 B_1 \,       \| \widehat \beta_{\rm LS}  - \beta^0 \|
                    + (B_2)^2 ,
    \end{align*}    
    with random variables $B_1 = {\cal O}_P \left( 
                            \frac {\xi + R \, \eta} {\sqrt{NT}}
                           \right)$
                        and $B_2 = {\cal O}_P \left( 
                            \frac {\xi + R \, \eta} {\sqrt{NT}}
                           \right)$,
                           and where we used that
                           $b$ is a positive constant.                           
   Completing the square gives
     \begin{align*}  
          \left( \| \widehat \beta_{\rm LS}  - \beta^0 \| -  B_1  \right)^2
          \leq (B_2)^2 +  (B_1)^2  ,
      \end{align*} 
     by taking the square root we thus obtain   
     \begin{align*}  
           \| \widehat \beta_{\rm LS}  - \beta^0 \| 
          \leq  B_1  + \sqrt{ (B_2)^2  +  (B_1)^2 }  .
      \end{align*} 
    Since    $B_1$ and $B_2$  
    are both of order ${\cal O}_P \left( 
                            \frac {\xi + R \, \eta} {\sqrt{NT}}
                           \right)$
   it thus follows that
    $$\| \widehat \beta_{\rm LS}  - \beta^0 \|  =
         {\cal O}_P \left( (\xi+ R \, \eta)  / \sqrt{NT} \right),$$
    which is what we wanted to show.
\end{proof}

Using Lemma~\ref{lma:MWcons} we are now ready to 
prove Theorem \ref{thm:bai.con}.

\begin{proof}[\bf{Proof of Theorem \ref{thm:bai.con}}]
To apply Lemma~\ref{lma:MWcons} we first need to
define $e$ and $e^*$ such that
\eqref{model:lemma} is an implication
of our model \eqref{eq:consistency.model:general}.
Decompose $\Gamma = \sum_{r = 1}^{\min\{N,T\}}\lambda^*_r f^{*\prime}_r$, which is a reformulation of the singular value decomposition of a matrix. Define $e^* =  \sum_{r = 1}^{R_{NT}} \lambda^*_r f^{*\prime}_r$ such that ${\rm rank}(e^*) = R_{NT}$. Also define $e =S + \varepsilon$ where $S = \Gamma -  \sum_{r = 1}^{R_{NT}} \lambda^*_r f^{*\prime}_r$. With these definitions model \eqref{eq:consistency.model:general} can be rewritten as \eqref{model:lemma} and it remains to show Assumptions \ref{ass:SN}-\ref{ass:NC} are sufficient for Lemma~\ref{lma:MWcons}  and to characterise the sequences $\eta_{NT}$ and $\xi_{NT}$.

First, use the norm inequality $\norm{S+\varepsilon}\leq\norm{S} + \norm{\varepsilon}$ with $\norm{\varepsilon} = O_P(\sqrt{\max\{N,T\}})$ from Assumption \ref{ass:SN} (ii) to show $\norm{e} \leq \norm{S} + O_P(\sqrt{\max\{N,T\}})$. To bound $ \norm{S}$ use the fact that the spectral norm is bounded by the Frobenius norm and Assumption~\ref{ass:SVD} to show 
\begin{align*}
     \norm{S}^2\leq\norm{S}_F^2 &= \sum_{r = R_{NT} + 1}^\infty \sigma_r^2(\Gamma) \quad \quad &\quad \quad {\rm }\\
    &\leq O_P\big({NT}R_{NT}^{1-2\rho}\big).
\end{align*}
This shows that $\norm{e}$ is asymptotically bounded in probability by the sequence $\eta_{NT}$ with 
\begin{align*}
    \eta_{NT} = \sqrt{\max\{N,T\}} + \sqrt{NT}R_{NT}^{(1-2\rho)/2}.
\end{align*}
That is, $\norm{e} = O_P(\eta_{NT})$. 

Secondly, the bound on $\norm{X_k}$ is direct from Assumption ~\ref{ass:SN}.(i) again because the spectral norm is bounded by the Frobenius norm. That is, $\norm{X_k}^2 \leq \norm{X_k}_F^2 = \sum_{i=1}^N\sum_{t=1}^T X_{it,k}^2 = O_P(NT)$. 

Lastly, we need to show that $\frac{1}{\sqrt{NT}}{\rm Tr}(X_k e^{\prime}) = O_P(\xi_{NT})$ and to find $\xi_{NT}$. To do this we decompose $e$ and use the Cauchy-Schwarz inequality, the triangle inequality and linearity of the trace operator in the following, 
 \begin{align}\label{proof1:WE}
 \begin{split}
     \left|\frac{1}{\sqrt{NT}}\textrm{Tr}(X_k e^\prime)\right| &=  \left|\frac{1}{\sqrt{NT}}\textrm{Tr}(X_k (S + \varepsilon)^\prime)\right| \\
     &\leq \frac{1}{\sqrt{NT}}\norm{X_k}_F\norm{S}_F + \frac{1}{\sqrt{NT}}|\textrm{Tr}(X_k \varepsilon^\prime)|\\
     &=O_P(1)\norm{S}_F + O_P(1).
     \end{split}
 \end{align}
The third line follows from Assumption \ref{ass:SN}.(i) and Assumption \ref{ass:EX}. From above we know $\norm{S}_F = O_P\big(\sqrt{NT}R_{NT}^{(1-2\rho)/2}\big)$, hence we have found $\xi_{NT} = \sqrt{NT}R_{NT}^{(1-2\rho)/2} + 1$.

Thus, we have shown that all conditions for Lemma~\ref{lma:MWcons} are satisfied and found the rates $\eta_{NT}$ and $\xi_{NT}$. This shows that LS estimation in \eqref{est:bai} on the model \eqref{eq:consistency.model:general} with $R=R_{NT}$ factors satisfies $\widehat \beta_{\rm LS} - \beta^0  =   O_P \left( (\xi_{NT}+ R_{NT}\eta_{NT})  / \sqrt{NT} \right)$, with 
\begin{align*}
     O_P\Bigg(\frac{(\xi_{NT}+ R_{NT}\eta_{NT})}{\sqrt{NT}}\Bigg)  &= {O}_P\Big(R_{NT}^{(1-2\rho)/2}\Big) + 
     {O}_P\Big(\frac{1}{\sqrt{NT}}\Big) + 
     {O}_P\Big(R_{NT}^{(3-2\rho)/2}\Big) \\
      &\qquad + \mathcal{O}_P\Bigg(R_{NT}\sqrt{\frac{\max\{N,T\}} {NT}}\Bigg) \\
     &= {O}_P\Big(R_{NT}^{(3-2\rho)/2}\Big) + 
     {O}_P\Big(R_{NT}\min\{N,T\}^{-1/2}\Big).
\end{align*}
\end{proof}

\begin{proof}[\bf{Proof of Remark 1}]
Note that if we weaken the singular value decay to that supposed in Remark 1, i.e. $\sigma_r(\Gamma) = c\sqrt{NT}r^{-\rho}$, and otherwise maintain Assumptions~\ref{ass:SN}-\ref{ass:NC} we can further bound the bias in LS estimation found in Theorem~\ref{thm:bai.con} as follows. For $\norm{S}_F$, note,
    \begin{align*}
    \norm{S}_F^2 &= \sum_{r = R_{NT} + 1}^\infty \sigma_r^2(\Gamma) \quad \quad &\quad \quad {\rm }\\
    &\leq \sum_{r = R_{NT} + 1}^\infty c{NT} r^{-2\rho} \quad \quad wpa.1& \quad \quad {\rm (Assumption \ \ref{ass:SVD})}\\
    &\leq c{NT} \int_{R_{NT}}^\infty r^{-2\rho}dr \quad \quad wpa.1&\quad \quad {\rm (integral \ bound)}\\
    &= \frac c {2\rho - 1} {NT}R_{NT}^{1-2\rho} \quad \quad wpa.1&
\end{align*}
In the third line we use an integral bound and the fourth line simply evaluates this integral. From line two all arguments are $wpa.1$, hence $\norm{S}_F = \mathcal{O}_P(\sqrt{NT}R_{NT}^{(1-2\rho)/2})$, where $(c/2\rho - 1)$ is the bounding constant. We can then directly bound
\begin{align*}
    \norm{S} &= \max_{r\in\{R_{NT}+1,\dots, \min\{N,T\}\}}\sigma_r(\Gamma)\\
    &= \mathcal{O}_P\big(\sqrt{NT}(R_{NT}+1)^{-\rho}\big),
\end{align*}
where we use the convention that singular values are indexed in descending order. We then simplify the last bound to $\norm{S}= \mathcal{O}_P\big(\sqrt{NT}R_{NT}^{-\rho}\big)$, replacing $R_{NT}+1$ with $R_{NT}$ as $R_{NT}\rightarrow\infty$. We can then rely on the same working in the proof of Theorem~\ref{thm:bai.con} to show that the conditions in Lemma~\ref{lma:MWcons} are satisfied with 
$\xi_{NT} = \sqrt{NT}R_{NT}^{(1-2\rho)/2} + 1$
and 
$\eta_{NT} = \sqrt{\max\{N,T\}} + \sqrt{NT}R_{NT}^{-\rho}$, 
where the second term in $\eta_{NT}$ is slightly different to Theorem~\ref{thm:bai.con}.
Hence,  $\widehat \beta_{\rm LS} - \beta^0  =   O_P \left( (\xi_{NT}+ R_{NT}\eta_{NT})  / \sqrt{NT} \right)$, with
\begin{align*}
     O_P\Bigg(\frac{(\xi_{NT}+ R_{NT}\eta_{NT})}{\sqrt{NT}}\Bigg)  &= {O}_P\Big(R_{NT}^{(1-2\rho)/2}\Big) +
     {O}_P\Big(\frac{1}{\sqrt{NT}}\Big) + 
     {O}_P\Big(R_{NT}^{1-\rho}\Big)\\
      &+ {O}_P\Bigg(R_{NT}\sqrt{\frac{\max\{N,T\}} {NT}}\Bigg) \\
     &= {O}_P\Big(R_{NT}^{1-\rho}\Big) +
     {O}_P\Big(R_{NT}\min\{N,T\}^{-1/2}\Big).
\end{align*}
\end{proof}

To prove Lemma \ref{lemma:SG} we rely on the following result from \citep{griebel2014approximation}, which we state without proof.

Let $ H^p(\Omega_\alpha \times \Omega_\gamma)$ denote the Sobolev space $W^{p,k}$ on the product domain $(\Omega_\alpha \times \Omega_\gamma)$ for $k = 2$, which is in turn a Hilbert space. 
In the one dimensional case, this space admits functions in $L^2(\mathbb{R})$-space whose derivatives up to order $p$ are also in $L^2(\mathbb{R})$-space. 
In multiple dimensions this definition extends as follows. 
Let $\nabla := \{\nabla^\alpha, \nabla^\gamma\}$ be a multi-index that captures all the dimensions of $\alpha$ and $\gamma$ respectively. 
Define the mixed partial derivative as,
\begin{align*}
    f^{(\nabla)} = \frac{\partial^{|\nabla|}f}{\partial a_1^{\nabla^\alpha_1}\dots \partial a_{d_\alpha}^{\nabla^\alpha_{d_\alpha}}  \partial c_1^{\nabla^\gamma_1}\dots \partial c_{d_\gamma}^{\nabla^\gamma_{d_\gamma}}},
\end{align*}
where $a\in\Omega_\alpha$ and $c\in\Omega_\gamma$ with $(\Omega_\alpha \times \Omega_\gamma)$ the domain of $f$. 
Then $|\nabla| = |\nabla_\alpha| + |\nabla_\gamma|$ and the bivariate function $\fct$ is said to be in Hilbert space of order $p$ if the mixed partial derivative exists (weakly) and 
\begin{align*}
    \norm{\fct^{(\nabla)}}_{L_2} \leq \infty \textrm{ for all } |\nabla| \leq p.
\end{align*}
This space of functions places a bound on the function itself as well as its derivative, which is why we refer to it as a smoothness condition.

\begin{lemma}[Theorem 3.5 in \citealt{griebel2014approximation}]
\label{thm3.4}
Let $\fct \in H^p(\Omega_\alpha \times \Omega_\gamma)$  and $p>\min\left\lbrace n_\alpha , n_\gamma \right\rbrace/2$, then 
\begin{align}
\norm{\fct - \sum_{l=1}^R{\sigma_l}(\varphi_l\otimes \psi_l)}_{L^2(\Omega_1 \times \Omega_2)} =  {O}\left(R^{\frac{1}{2} - \frac{p}{\min\left\lbrace n_\alpha , n_\gamma \right\rbrace}}\right).
\end{align}
\end{lemma}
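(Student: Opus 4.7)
The plan is to establish this truncation rate via the Eckart-Young-Mirsky optimality principle for Hilbert-Schmidt operators, combined with a classical tensor-product approximation using smooth finite-dimensional bases in the ``smaller-dimensional'' argument. First, by orthonormality of $\{\varphi_l\}$ in $L^2(\Omega_\alpha)$ and $\{\psi_l\}$ in $L^2(\Omega_\gamma)$, Parseval immediately gives
\[
\Big\|g - \sum_{l=1}^R \sigma_l (\varphi_l \otimes \psi_l)\Big\|_{L^2(\Omega_\alpha \times \Omega_\gamma)}^2 = \sum_{l > R}\sigma_l^2,
\]
so the job reduces to controlling the tail of squared singular values, and it suffices to show the termwise decay $\sigma_l = O(l^{-p/\min\{n_\alpha,n_\gamma\}})$: summing the squared tail from $R+1$ and using $p > \min\{n_\alpha,n_\gamma\}/2$ then yields $\sum_{l>R}\sigma_l^2 = O(R^{1-2p/\min\{n_\alpha,n_\gamma\}})$, and square-rooting gives the claimed $O(R^{1/2 - p/\min\{n_\alpha,n_\gamma\}})$.

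The key step is the individual decay of $\sigma_l$, which I would obtain from the functional Eckart-Young-Mirsky theorem: for any rank-$R$ kernel $\tilde g_R$,
\[
\sigma_{R+1}^2 \;\leq\; \sum_{l>R}\sigma_l^2 \;\leq\; \|g - \tilde g_R\|_{L^2}^2.
\]
I would construct $\tilde g_R$ explicitly via a one-sided tensor-product projection. Assume without loss of generality that $n_\alpha = \min\{n_\alpha,n_\gamma\}$. Let $V_R \subset L^2(\Omega_\alpha)$ be an $R$-dimensional subspace (spline, polynomial, or wavelet) satisfying the standard Jackson-type bound
\[
\inf_{v \in V_R}\|f - v\|_{L^2(\Omega_\alpha)} \;\leq\; C R^{-p/n_\alpha}\,\|f\|_{C^p(\Omega_\alpha)}
\qquad \text{for all } f \in C^p(\Omega_\alpha),
\]
which is available on any smooth bounded domain in $\mathbb{R}^{n_\alpha}$. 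Defining $\tilde g_R(\alpha,\gamma) := (\Pi_{V_R}\, g(\cdot,\gamma))(\alpha)$, where $\Pi_{V_R}$ is the $L^2$-orthogonal projector, the approximation has rank at most $R$ in the SVD sense, and applying the Jackson bound pointwise in $\gamma$ and integrating yields
\[
\|g - \tilde g_R\|_{L^2}^2 \;\leq\; C^2 R^{-2p/n_\alpha}\!\int_{\Omega_\gamma}\|g(\cdot,\gamma)\|_{C^p}^2\, d\gamma \;\leq\; C' R^{-2p/n_\alpha},
\]
where the last bound uses boundedness of the derivatives of $g$ together with the boundedness of $\Omega_\gamma$. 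This delivers $\sigma_{R+1} \leq C'' R^{-p/n_\alpha}$, i.e.\ the required termwise decay, and plugging back into the Parseval identity gives the lemma.

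The main obstacle will be step three: constructing the approximation space $V_R$ with the Jackson-type estimate and verifying that the constant $C$ can be chosen independent of $\gamma$ so that the integration step is uniform. On smooth bounded domains this relies on extension operators from $C^p(\Omega_\alpha)$ to $C^p(\mathbb{R}^{n_\alpha})$ combined with standard spline/polynomial approximation theory, which are off-the-shelf ingredients but do require $\Omega_\alpha$ to satisfy a mild regularity condition (Lipschitz or smoother) so that bounded extension is available; the smoothness assumption on $\Omega_\alpha$ and $\Omega_\gamma$ in the surrounding Lemma~\ref{lemma:SG} is exactly what guarantees this. A secondary subtlety is the choice of dimension: the Jackson theorem typically gives spaces of dimension $\asymp R$ only up to a fixed multiplicative constant, which can be absorbed into the $O(\cdot)$ by an elementary rescaling of $R$ and therefore does not affect the final rate.
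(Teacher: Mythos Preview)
The paper does not actually prove this lemma: it is quoted verbatim as Theorem~3.5 of \cite{griebel2014approximation} and explicitly ``state[d] without proof'', so there is no in-paper argument to compare against. Your proposal is therefore not a reconstruction of anything in this paper but rather a sketch of the underlying result in the cited reference.

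As a standalone argument, your sketch is sound and is essentially the strategy used by Griebel and Harbrecht: reduce the truncation error to the singular-value tail via Parseval, bound the tail by the error of \emph{any} rank-$R$ competitor via the Schmidt/Eckart--Young--Mirsky optimality of the SVD, and then build such a competitor by a one-sided $L^2$-projection onto an $R$-dimensional approximation space in the lower-dimensional argument, for which a Jackson-type estimate yields the rate $R^{-p/\min\{n_\alpha,n_\gamma\}}$. The regularity of the domain is indeed what is needed to guarantee bounded extension and hence the Jackson bound, and your observation that the dimension-$\asymp R$ slack is harmless for the $O(\cdot)$ rate is correct. One notational point worth flagging: the hypothesis ``$g\in L^p(\Omega_\alpha\times\Omega_\gamma)$'' in the lemma as stated is not the Lebesgue space but a smoothness class (consistent with the $p$-times differentiability assumed in Lemma~\ref{lemma:SG}); your proof correctly uses the $C^p$ norm, so you are already reading the statement the right way, but it would be worth saying so explicitly to avoid confusion.
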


In the following proof we use the Frobenius norm, which as a reminder is defined as $\norm{A}_F^2 = \sum_{i=1}^N\sum_{t=1}^T |A_{it}|^2$ for any $N\times T$ matrix $A$. 
\begin{proof}[\bf{Proof of Lemma \ref{lemma:SG}}]
From Lemma \ref{thm3.4} we have, 
 \begin{align}
    \begin{split}%
        \mathbb{E}\Big[\Big(\fct(\alpha_i,\gamma_t) - \sum_{s =1}^R \sigma_r \varphi_r(\alpha_i) \psi_r(\gamma_t)\Big)^2\Big] &= \int_{\Omega_\alpha} \int_{\Omega_\gamma} \big(\fct(a,c) - \sum_{s =1}^R \sigma_r \varphi_r(a) \psi_r(c) \big)^2 f_{\alpha_i,\gamma_t}(a,c)da dc
        \\
        &\leq \int_{\Omega_\alpha} \int_{\Omega_\gamma} \big(\fct(a,c) - \sum_{s =1}^R \sigma_r \varphi_r(a) \psi_r(c) \big)^2dadc  \sup_{a,c}f_{\alpha_i,\gamma_t}(a,c)
        \\
        &= \norm{\fct - \sum_{l =1}^R \sigma_l \varphi_l \otimes \psi_l}_{L^2(\Omega_\alpha \times \Omega_\gamma)}^2{O}(1) 
        \\
        &={O}\left(R^{{1} - \frac{2p}{\min\left\lbrace n_\alpha , n_\gamma \right\rbrace}}\right) ,
    \end{split}
\end{align}
where in the second line we use a supremum bound on the probabilities, in the third line we use the definition of the ${L^2(\Omega_\alpha \times \Omega_\gamma)}$-norm and in the final line we use Lemma~\ref{thm3.4}. This shows that, in expectations, the entry-wise functional representation decays at polynomial rate $r^{1-2\rho}$, with $\rho = p/\min\left\lbrace n_\alpha , n_\gamma \right\rbrace$.  

Using the Markov inequality gives
\begin{align*}
    \Big(\Gamma_{it} - \sum_{\ell= 1}^{r} \sigma_\ell \varphi_\ell(\alpha_i) \psi_\ell(\gamma_t)^\prime\Big)^2 = \mathcal{O}_P\left(r^{{1} - \frac{2p}{\min\left\lbrace n_\alpha , n_\gamma \right\rbrace}}\right),
\end{align*}
which we use to bound singular values of the matrix $\Gamma$ as follows. 

We know
\begin{align*}
   \Gamma_{it} =\fct(\alpha_i, \gamma_t) &= \sum_{r=1}^{\infty}\sigma_r \varphi_r(\alpha_i) \psi_r(\gamma_t)
   = \sum_{r=1}^{\infty}\sigma_r w_{ir} v_{tr}
\end{align*}
and in matrix form, 
\begin{align*}
   \Gamma =\fct(\alpha, \gamma) &= \sum_{r=1}^{\infty}\sigma_r \varphi_r(\alpha) \psi_r(\gamma)^\prime
   = \sum_{r=1}^{\infty}\sigma_r w_{r} v_{r}^\prime.
\end{align*}
Hence, we have
\begin{align*}
    \sum_{\ell = r + 1}^{\min\{N,T\}}\sigma_\ell^2(\Gamma) &= \min_{\lambda \in \mathbb{R}^{N \times r}}
   \min_{f \in \mathbb{R}^{T \times r}}
  \left\| \Gamma - \lambda \, f' \right\|_F^2
  \\
  &\leq  \norm{\Gamma - \sum_{\ell= 1}^{r} \sigma_\ell \varphi_\ell(\alpha) \psi_\ell(\gamma)^\prime}_F^2
  \\ 
  &= \sum_i \sum_t \Bigg(\sum_{\ell= r + 1}^{\infty}\sigma_\ell \varphi_\ell(\alpha_i) \psi_\ell(\gamma_t)\Bigg)^2
  \\
    &= \sum_i \sum_t \mathcal{O}_P\left(r^{{1} - \frac{2p}{\min\left\lbrace n_\alpha , n_\gamma \right\rbrace}}\right)
    \\
    &= NT \mathcal{O}_P\left(r^{{1} - \frac{2p}{\min\left\lbrace n_\alpha , n_\gamma \right\rbrace}}\right).
\end{align*}
Hence, we have $\frac{1}{NT}\sum_{\ell = r+1}^{\min\{N,T\}}\sigma_\ell^2(\Gamma) =  \mathcal{O}_P\big(r^{1- 2\rho}\big)$ with $\rho = p/\min\left\lbrace n_\alpha , n_\gamma \right\rbrace$ , and Assumption~\ref{ass:SVD} is satisfied. 
\end{proof}

\subsection{Proofs for Section~\ref{sec:GroupedFE}}

\begin{proof}[\bf{Proof of Lemma~\ref{lemma:ApproxError}}]

From Section~\ref{sec:GroupedFE} we have 
\begin{align*}
  \kappa_{NT} &:= \left( \sum_{i = 1}^N \sum_{t = 1}^T \widetilde{X}_{it}^\prime \widetilde{X}_{it}\right)^{-1} \sum_{i = 1}^N \sum_{t = 1}^T \widetilde{X}_{it}^\prime \, \widetilde \Gamma_{it},
\end{align*}
with $\widetilde \Gamma$ defined analogously to $\widetilde{X}_k$ and $\widetilde{Y}$.

Take
\begin{align*}
  \norm{\kappa_{NT} }&:= \norm{\left( \sum_{i = 1}^N \sum_{t = 1}^T \widetilde{X}_{it}^\prime \widetilde{X}_{it}\right)^{-1} \sum_{i = 1}^N \sum_{t = 1}^T \widetilde{X}_{it}^\prime \, \widetilde \Gamma_{it}}.
\end{align*}
Using the inequality ${\norm{Az}} \leq\norm{A} {\norm{z}}$ for general matrices $A$ and vectors $z$ we find
\begin{align*}
    \left\|  \kappa_{NT} \right\|
    \leq  \left\| \left(\sum_{i = 1}^N \sum_{t = 1}^T \widetilde{X}_{it}^\prime \widetilde{X}_{it}\right)^{-1} \right\|
        \, \norm{\sum_{i = 1}^N \sum_{t = 1}^T\widetilde{X}_{it}^\prime  
         \widetilde \Gamma_{it} }.
\end{align*}

Use $\left|\sum_{i = 1}^N \sum_{t = 1}^T\widetilde{X}_{it,k}\widetilde \Gamma_{it}\right| \leq \sum_{i = 1}^N \sum_{t = 1}^T\left| \widetilde{X}_{it,k}\widetilde \Gamma_{it}\right|$ and H\"older's inequality such that 
\begin{align*}
    \begin{bmatrix}
       \left|\sum_{i = 1}^N \sum_{t = 1}^T\widetilde{X}_{it,1}  
     \widetilde \Gamma_{it}\right|  \\
       \vdots \\
      \left|\sum_{i = 1}^N \sum_{t = 1}^T\widetilde{X}_{it,K}  
     \widetilde \Gamma_{it}\right|
     \end{bmatrix}  \leq \begin{bmatrix}
       \sum_{i = 1}^N \sum_{t = 1}^T\left| \widetilde{X}_{it,1}\widetilde \Gamma_{it}\right|  \\
       \vdots \\
       \sum_{i = 1}^N \sum_{t = 1}^T\left| \widetilde{X}_{it,K}\widetilde \Gamma_{it}\right|
     \end{bmatrix} \leq   \begin{bmatrix}
       \norm{\vect(X_1)}_\infty \\
       \vdots \\
       \norm{\vect(X_K)}_\infty
     \end{bmatrix}\norm{\vect (\widetilde \Gamma)}_1, 
\end{align*}
{where $\vect(A)$ vectorises a matrix $A$ such that $\norm{\vect(A)}_\infty = \max_{i,t}|A_{it}|$ yields the maximum norm and $\norm{\vect(A)}_1 = \sum_{i=1}^N\sum_{t=1}^T |A_{it}|$ yields the entry-wise 1-norm of such a matrix. }

Take the $\norm{\cdot}$ to show 
\begin{align*}
    \norm{\sum_{i = 1}^N \sum_{t = 1}^T\widetilde{X}_{it}^\prime  \widetilde \Gamma_{it} } &= \left( \sum_k \bigg| \sum_{i = 1}^N \sum_{t = 1}^T \widetilde{X}_{it,k} \widetilde \Gamma_{it}\bigg|^2 \right)^{1/2}\\
    &\leq \left( \sum_k \left(\big\|\vect( \widetilde{X}_k)\big\|_\infty \big\|\vect (\widetilde \Gamma)\big\|_1 \right)^2 \right)^{1/2}\\
    &= \left( \sum_k \left(\big\|\vect( \widetilde{X}_k)\big\|_\infty \right)^2 \right)^{1/2}\big\| \vect (\widetilde \Gamma)\big\|_1 \leq \left(\sum_k \big\|\vect( \widetilde{X}_k)\big\|_\infty \right)\big\|\vect (\widetilde \Gamma)\big\|_1,
\end{align*}
where in the last line we use that $\norm{\vect (\widetilde \Gamma)}_1$ is a scalar and that $\big\|\vect (X_k)\big\|_\infty>0 \,\,\, \forall\,k$.
Thus we can bound the norm of $\kappa_{NT}$ by 
\begin{align*}
    \left\|  \kappa_{NT} \right\|
    \leq  \left\| \left(\sum_{i = 1}^N \sum_{t = 1}^T \widetilde{X}_{it}^\prime \widetilde{X}_{it}\right)^{-1} \right\|
        \, \left( \sum_{k=1}^K \big\|\vect( \widetilde{X}_k) \big\|_\infty  \right)
        \big\|\vect (\widetilde \Gamma) \big\|_1.
\end{align*}

Concentrate on $\big\| \vect (\widetilde \Gamma)  \big\|_1$. Let $n_i^N$ be the size of each $i$'s cluster and $n_t^T$ be the size of each $t$'s cluster, then 
\begin{align*}
    {\widetilde \Gamma_{it} }&= \fct(\alpha_i, \gamma_t) - \frac{1}{n_i^N}\sum_{j \in g_i}\fct(\alpha_j, \gamma_t) - \frac{1}{n_t^T}\sum_{s \in c_t}\fct(\alpha_i, \gamma_s) + \frac{1}{n_i^N}\frac{1}{n_t^T}\sum_{j \in g_i}\sum_{s \in c_t}\fct(\alpha_j, \gamma_s).
\end{align*}
Take the following Taylor expansions,
\begin{align*}
    \fct(\alpha_j,\gamma_s) &= \fct( \alpha_{i}, \gamma_{t}) + \frac{\partial \fct( \alpha_i, \gamma_{t})} {\partial \alpha'} 
  (\alpha_j -  \alpha_{i}) + \frac{\partial \fct( \alpha_{i}, \gamma_{t})} {\partial \gamma'} 
  (\gamma_s -  \gamma_{t})
  + r(i,j,t,s) 
  \\ 
  \fct({\alpha}_j,{\gamma}_{t}) &= \fct( \alpha_{i}, \gamma_{t}) + \frac{\partial \fct( \alpha_{i},  \gamma_{t})} {\partial \alpha'_i} 
  (\alpha_j -  \alpha_{i}) + r^\prime(i, j, t)\\ 
  \fct( \alpha_{i}, \gamma_{s}) &= \fct( \alpha_{i}, \gamma_{t}) +  \frac{\partial \fct( \alpha_{i},  \gamma_{t})} {\partial \gamma'} 
  (\gamma_s -  \gamma_{t}) + r^{\prime\prime}( t,s,i),
\end{align*}
where $r$, $r^\prime$ and $r^{\prime\prime}$ are remainder terms from the Taylor expansion. 

From these expansions we have 
\begin{align*}
     \frac{1}{n_i^N}\sum_{j \in g_i}\fct(\alpha_j,\gamma_t) = \fct(\alpha_i,\gamma_t) + \frac{1}{n_i^N}\sum_{\substack{j \in g_i,\\ j\neq i}}\left(\frac{\partial \fct( \alpha_{i},  \gamma_{t})} {\partial \alpha'} (\alpha_j -  \alpha_{i}) + r^\prime(i, j, t)\right),
\end{align*}
\begin{align*}
     \frac{1}{n_t^T}\sum_{s \in c_t}\fct(\alpha_i,\gamma_s) = \fct(\alpha_i,\gamma_t) + \frac{1}{n_t^T}\sum_{\substack{s \in c_t,\\ s\neq t}}\left(\frac{\partial \fct( \alpha_{i},  \gamma_{t})} {\partial \gamma'} (\gamma_s -  \gamma_t) + r^{\prime\prime}(t, s, i)\right),
\end{align*}
and 
\begin{align*}
    \frac{1}{n_i^N}\frac{1}{n_t^T}&\sum_{j \in g_i}\sum_{s \in c_t}\fct(\alpha_j, \gamma_s) = \frac{1}{n_i^Nn_t^T}\fct(\alpha_i,\gamma_t) + \frac{1}{n_i^Nn_t^T}\left( \sum_{\substack{j \in g_i,\\ j\neq i}}\sum_{\substack{s \in c_t,\\ s\neq t}} \fct(\alpha_j,\gamma_s) + \sum_{\substack{j \in g_i,\\ j\neq i}}\fct(\alpha_j,\gamma_t) + \sum_{\substack{s \in c_t,\\ s\neq t}}\fct(\alpha_i,\gamma_s)  \right)\\
    &= \fct(\alpha_i,\gamma_t) + \frac{1}{n_i^Nn_t^T}\sum_{\substack{j \in g_i,\\ j\neq i}}\sum_{\substack{s \in c_t,\\ s\neq t}}\left(\frac{\partial \fct( \alpha_{i},  \gamma_{t})} {\partial \alpha'}(\alpha_j -  \alpha_{i}) + \frac{\partial \fct( \alpha_{i},  \gamma_{t})} {\partial \gamma'} (\gamma_s -  \gamma_t) + r(i,j,t,s) \right)\\
    &+\frac{1}{n_i^Nn_t^T}\sum_{\substack{j \in g_i,\\ j\neq i}}\left(\frac{\partial \fct( \alpha_{i},  \gamma_{t})} {\partial \alpha'}(\alpha_j -  \alpha_{i}) + r^\prime(i,j,t) \right) + \frac{1}{n_i^Nn_t^T}\sum_{\substack{s \in c_t,\\ s\neq t}}\left(\frac{\partial \fct( \alpha_{i},  \gamma_{t})} {\partial \gamma'} (\gamma_s -  \gamma_t) + r^{\prime\prime}(t,s,i) \right)\\
    &= \fct(\alpha_i,\gamma_t) + \frac{1}{n_i^N}\sum_{\substack{j \in g_i,\\ j\neq i}}\left(\frac{\partial \fct( \alpha_{i},  \gamma_{t})} {\partial \alpha'}(\alpha_j -  \alpha_{i}) + r^\prime(i,j,t) \right) + \frac{1}{n_t^T}\sum_{\substack{s \in c_t,\\ s\neq t}}\left(\frac{\partial \fct( \alpha_{i},  \gamma_{t})} {\partial \gamma'} (\gamma_s -  \gamma_t) + r^{\prime\prime}(t,s,i)\right)\\
    &+ \frac{1}{n_i^Nn_t^T}\sum_{\substack{j \in g_i,\\ j\neq i}}\sum_{\substack{s \in c_t,\\ s\neq t}}r(i,j,t,s).
\end{align*}
We explicitly split the sum in the second line to make clearer the fact that almost all terms cancel out once we difference these identities. From the last line it should be clear that,
\begin{align*}
    {\widetilde \Gamma_{it} }= \frac{1}{n_i^Nn_t^T}\sum_{\substack{j \in g_i,\\ j\neq i}}\sum_{\substack{s \in c_t,\\ s\neq t}}r(i,j,t,s).
\end{align*}
From $\fct(.,.)$ being twice continuously differentiable and a uniformly bounded second derivative, we have  from Cauchy-Schwarz $r(i,j,t,s)= 
O\left(
\norm{\alpha_i -  \alpha_j}^2 + 
\norm{\gamma_t -  \gamma_s}^2 \right)$. 

For the entry-wise 1-norm, we have,  
\begin{align*}
    \norm{\vect (\widetilde \Gamma)}_1 &= \sum_i\sum_t\left|\frac{1}{n_i^Nn_t^T}\sum_{\substack{j \in g_i,\\ j\neq i}}\sum_{\substack{s \in c_t,\\ s\neq t}} r(i,j,t,s) \right|\\
    &\leq 
    \sum_i\sum_t\left|\frac{1}{n_i^Nn_t^T}\sum_{\substack{j \in g_i,\\ j\neq i}}\sum_{\substack{s \in c_t,\\ s\neq t}}O\left(\norm{\alpha_i -  \alpha_j}^2 \right) \right|
    + \sum_i\sum_t\left|\frac{1}{n_i^Nn_t^T}\sum_{\substack{j \in g_i,\\ j\neq i}}\sum_{\substack{s \in c_t,\\ s\neq t}}O\left(\norm{\gamma_t -  \gamma_s}^2 \right) \right|.
\end{align*}

Now, concentrate on the  first term, 
\begin{align*}
    \sum_i\sum_t\left|\frac{1}{n_i^Nn_t^T}\sum_{\substack{j \in g_i,\\ j\neq i}}\sum_{\substack{s \in c_t,\\ s\neq t}}O\left(\norm{\alpha_i -  \alpha_j}^2 \right) \right| &\leq \sum_i\sum_t\left|\frac{(n_i^N - 1)(n_t^T - 1)}{n_i^Nn_t^T}\max_{\substack{j \in g_i,\\ j\neq i}}O\left(\norm{\alpha_i -  \alpha_j}^2 \right) \right|\\
    &= O(T)\sum_i\max_{\substack{j \in g_i,\\ j\neq i}}\norm{\alpha_i -  \alpha_j}^2
\end{align*}
Use Assumption~\ref{ass:ApproxError}(iii) to show for $j\in g_i$,
\begin{align*}
    \norm{\alpha_i - \alpha_j}^2 &\leq B^2   \norm{\lambda(\alpha_i) - \lambda(\alpha_j) }^2\\
    &=  B^2   \norm{\lambda(\alpha_i) - \widehat\lambda_i - ( \lambda(\alpha_j)- \widehat\lambda_j) + \widehat\lambda_i - \widehat\lambda_j}^2\\
    &\leq B^2  \left(\norm{\lambda(\alpha_i) - \widehat\lambda_i} +   \norm{ \lambda(\alpha_j)- \widehat\lambda_j} +  \norm{\widehat\lambda_i - \widehat\lambda_j}\right)^2.
\end{align*}
An application of Cauchy-Schwarz and  Assumption~\ref{ass:ApproxError}(iv) gives
 \begin{align*}
    \sum_{i = 1}^N\max_{\substack{j \in g_i,\\ j\neq i}}\norm{\alpha_i - \alpha_j}^2 
    &\leq  
    B^2 \sum_{i = 1}^N\max_{\substack{j \in g_i,\\ j\neq i}} \bigg( \norm{\lambda(\alpha_i) - \widehat\lambda_i}^2 +   \norm{ \lambda(\alpha_j)- \widehat\lambda_j}^2 +  \norm{\widehat\lambda_i - \widehat\lambda_j}^2 
    \bigg),
\end{align*}
hence we have 
\begin{align*}
    \sum_{i= 1}^N\sum_{t=1}^T\left|\frac{1}{n_i^Nn_t^T}\sum_{\substack{j \in g_i,\\ j\neq i}}\sum_{\substack{s \in c_t,\\ s\neq t}}O\left(\norm{\alpha_i -  \alpha_j}^2 \right) \right| = NTO_P(\xi_{NT}).
\end{align*}
The $t$-dimension analogy is direct such that  $ \sum_{i= 1}^N\sum_{t=1}^T\left|\frac{1}{n_i^Nn_t^T}\sum_{\substack{j \in g_i,\\ j\neq i}}\sum_{\substack{s \in c_t,\\ s\neq t}}O\left(\norm{\gamma_t -  \gamma_s}^2 \right) \right|= NTO_P(\xi_{NT})$.

Lastly, use Assumption~\ref{ass:ApproxError}.(vi), which implies $\left(\sum_{i = 1}^N \sum_{t = 1}^T \widetilde{X}_{it}^\prime \widetilde{X}_{it}\right)^{-1} = O_p(1/NT)$, to show \begin{align*}
     \left\|  \kappa_{NT} \right\| &= O_p(\xi_{NT})\\
     \Rightarrow  \kappa_{NT}  &= O_p(\xi_{NT})
\end{align*}

\end{proof}

For each partition $\mathcal{O}_q$, with $q\in\{1,2,3,4\}$, the $N_q\times G^{(q)}$ matrix $D_\nu^{(q)}$, respectively $T_q\times C^{(q)}$ matrix $D_\delta^{(q)}$, represent the $i$, respectively $t$, cluster assignment matrices for $(i,t)\in\mathcal{O}_q$ where the columns of each matrix are binary indicators of cluster assignment. 
That is, any given column of $D_\nu^{(q)}$ represents a cluster equal to 1 if that row is a member of the cluster and 0 otherwise, and likewise for $D_\delta^{(q)}$. Here $G^{(q)}$ are the number of $i$ clusters and $C^{(q)}$ are the number of $t$ clusters in $\mathcal{O}_q$. 
For each partition define the annihilation matrix $M_\nu^{(q)} = \mathbb{I}_{N_q} -  D_\nu^{(q)}\left( [D_\nu^{(q)}]^\prime D_\nu^{(q)} \right)^{-1}[D_\nu^{(q)}]^\prime$ and $M_\delta^{(q)} =  \mathbb{I}_{T_q} -  D_\delta^{(q)}\left( [D_\delta^{(q)}]^\prime D_\delta^{(q)} \right)^{-1}[D_\delta^{(q)}]^\prime$. 
To perform within-cluster mean-differences we can then take, for matrix $A^{(q)}$ being the partition $\mathcal{O}_q$ of matrix $A$, $\check{A}^{(q)} = M_\nu^{(q)}A^{(q)}M_\delta^{(q)}$.\footnote{Note these are very similar to the $\tilde{A}$ variables in the main text, but here we make the distinction that projection is done at the partition level.} 
Take $\check{A}$ as the block matrix with blocks $\check{A}^{(q)}$. Further, for each regressor, $k$, let $\check{X}_k$ be defined similarly for each $k$ separately such that $\check{X}_{it}$ a $K$ dimensional column vector. 

\begin{assumption}
    \label{ass:ApproxErrorSplit}
    Let $\mathcal{O}_q$ denote partitions for cluster formation and $\mathcal{O}_q^*$ denote partitions for proxy sampling. 
    Across each partition, $\alpha_i^{(q)}$ has common support $\mathcal{A}$ for each $q$, $\gamma^{(q)}$ has common support $\mathcal{C}$ for each $q$, and both of these are bounded and convex sets. 
    Also, assume each partition is of equal size, up to rounding error, such that they all grow proportionally with $N,T$. 
   There exists a sequence $\xi_{NT}>0$ common to all partitions such that $\xi_{NT}  \to 0$ as $N,T \to \infty$, and
    \begin{enumerate}[(i)]     

        \item The function $\fct(\cdot,\cdot)$ is  at least twice continuously differentiable with uniformly bounded second derivatives.
        
        \item For each $q$, every unit $i\in\mathcal{O}_q$ is a member of exactly one 
        group $g_i^{(q)} \in \{1,\ldots,G^{(q)}\}$, and every time period $t$ is 
        a member of exactly one group $c_t^{(q)} \in \{1,\ldots,C^{(q)}\}$. The size
        of all $G^{(q)}$ groups of units, and the size
        of all $C^{(q)}$ groups of time periods is 
        bounded uniformly by $Q_{\max}$ for all $q$.
        
        \item There exists $B>0$ such that for all $q$ there is,
         $\left\|a - b \right\| \leq B \left\| \lambda^{(q)}(a) -  \lambda^{(q)}(b) \right\|$ for all $a,b\in {\cal A} $
        , and  $\left\| a - b \right\| \leq B \left\| f^{(q)}(a) -  f^{(q)}(b) \right\|$ for all $a, b \in {\cal C}$.
              
       \item For each $q$ there is, \\
        $ \frac 1 {N^*_q T^*_q}  \sum_{i=1}^{N}\sum_{t=1}^{T} \mathbbm{1}\{(i,t)\in\mathcal{O}_{q}^*\} \left(   \left\| \widehat{\lambda}^{(q)}_i - \lambda^{(q)}(\alpha_i) \right\|^2 
        \right)= O_P \left( \xi_{NT} \right) $,  \\
        $ \frac 1 {N^*_q T^*_q}  \sum_{i=1}^{N}\sum_{t=1}^{T} \mathbbm{1}\{(i,t)\in\mathcal{O}_{q}^*\} \left(   \left\| \widehat{f}^{(q)}_t - f^{(q)}(\gamma_t) \right\|^2 
        \right)= O_P \left(  \xi_{NT} \right) $.

       \item 
       For each $q$ there is, \\
       $ \frac 1 {N^*_q T^*_q}  \sum_{i=1}^{N}\sum_{t=1}^{T} \mathbbm{1}\{(i,t)\in\mathcal{O}_{q}^*\}  \left\| \widehat{\lambda}^{(q)}_i - \widehat{\lambda}^{(q)}_{j(i)} \right\|^2 = O_P \left(  \xi_{NT} \right) $ for any matching function $(j(i),t) \in \mathcal{O}_q$ such that $g_i^{(q)} = g_{j(i)}^{(q)}$,
      and \\
      $\frac 1 {N^*_q T^*_q}  \sum_{i=1}^{N}\sum_{t=1}^{T} \mathbbm{1}\{(i,t)\in\mathcal{O}_{q}^ *\}  \left\| \widehat{f}^{\,(o)}_t - \widehat{f}^{\,(o)}_{s(t)} \right\|^2 = O_P \left(  \xi_{NT} \right) $ for any matching function $(i, s(t)) \in \mathcal{O}_q$ such that $c_t^{(q)} = c_{s(t)}^{(q)}$.

        \item 
        $\max_{k,i,t} \left| \check{X}_{it,k} \right| = O_P(1)$,
and        
        $\plim_{N,T \rightarrow \infty}  \frac 1 {{NT}} \sum_{i=1}^N \sum_{t=1}^T \check X_{it}^\prime \check X_{it} = \Omega$, where $\Omega$ is a positive definite non-random matrix.

    \end{enumerate} 
\end{assumption}

\begin{proof}[\bf{Proof of Lemma~\ref{lemma:ApproxError2}}]

    Recall from the proof of Lemma~\ref{lemma:ApproxError} the definition of $\kappa_{NT}$. Take the split sample version as follows, 
    \begin{align*}
        \kappa_{NT}^{(GS)} &:= \left( \sum_{i = 1}^N \sum_{t = 1}^T\check{X}_{it}^\prime \check{X}_{it}\right)^{-1} \sum_{i = 1}^N \sum_{t = 1}^T \check{X}_{it}^\prime \, \check \Gamma_{it} \\
        &=\left( \sum_{i = 1}^N \sum_{t = 1}^T\check{X}_{it}^\prime \check{X}_{it}\right)^{-1} \sum_{o = 1}^4 \sum_{(i,t)\in\mathcal{O}_q} [\check{X}_{it}^{(q)}]^\prime \, \check \Gamma_{it}^{(q)}.
    \end{align*}
   By Assumption~\ref{ass:ApproxErrorSplit} and the proof steps of Lemma~\ref{lemma:ApproxError}  we have that for each partition $\sum_{(i,t)\in\mathcal{O}_q} [\check{X}_{it}^{(q)}]^\prime \, \check \Gamma_{it}^{(q)} = O_P(N_qT_q\xi_{NT})$, where $N_q$ and $T_q$ are the number of $i$ and $t$, respectively, in partition $q$. Thus we have $\sum_{o = 1}^4 \sum_{(i,t)\in\mathcal{O}_q} [\check{X}_{it}^{(q)}]^\prime \, \check \Gamma_{it}^{(q)} = \sum_{o = 1}^4 O_P(N_qT_q\xi_{NT}) \leq O_P(NT\xi_{NT})$. The statement of the lemma then follows from $\sum_{i = 1}^N \sum_{t = 1}^T\check{X}_{it}^\prime \check{X}_{it} = O_P(NT)$. 
\end{proof}

\begin{proof}[\bf Proof of Lemma~\ref{lemma:CLT}]
Using the definition of $\phi^{\rm (GS)}_{NT}$
in the main text we have
\begin{align*}
   \sqrt{NT} \, \phi^{\rm (GS)}_{NT} & := 
   \widehat \Omega^{-1}  \sum_{s=1}^4  \phi^{(s)}_{NT}
\end{align*}
where
\begin{align*}
   \widehat \Omega  &:=   \frac 1 {NT} \sum_{s=1}^4
   \sum_{(i,t) \in  {\cal O}_{s}}
   \widetilde{X}_{it}^{(s)\,\prime} \widetilde{X}_{it}^{(s)} ,
   &
  \phi^{(s)}_{NT} &:= \frac 1 {\sqrt{NT}} \sum_{(i,t) \in  {\cal O}_{s}}   \widetilde{X}_{it}^{(s)\,\prime} \,  \varepsilon_{it}.
\end{align*}
By construction,
the projected regressors
 $\widetilde{X}_{it}^{(s)}$ for subpanel $s \in \{1,2,3,4\}$
 only depend on
$X=(X_{it})$, and on outcomes $Y_{it}$ (and thus error terms $\varepsilon_{it}$) that are not in that subpanel,
i.e. $(i,t) \notin {\cal O}_{s}$. Therefore, under
Assumption~\ref{ass:AsympSplit}(i), we have that for $s \in \{1,2,3,4\}$,
conditional on  $\{\widetilde{X}_{it}^{(s)} : (i,t) \in {\cal O}_{s}\}$, the $\widetilde{X}_{it}^{(s)\,\prime} \,  \varepsilon_{it}$ are mean zero and independently distributed
across all the observations $(i,t) \in {\cal O}_{s}$ in that subpanel. 
Using the regularity conditions in
Assumption~\ref{ass:AsympSplit}(ii),
for each $s \in \{1,2,3,4\}$, we can therefore
apply Lyapunov's CLT to find
\begin{align*}
   \left(\widehat \Sigma^{(s)} \right)^{-1} \,  \phi^{(s)}_{NT} \, &\Rightarrow \, {\cal N}( 0, \mathbbm{1}_K) ,
   &
    \widehat \Sigma^{(s)} &:=  \sum_{(i,t) \in  {\cal O}_{s}}
   \sigma^2_{it}
   \widetilde{X}_{it}^{(s)\,\prime} \widetilde{X}_{it}^{(s)},
\end{align*}
and the limiting distributions of 
$ \left(\widehat \Sigma^{(s)} \right)^{-1} \,  \phi^{(s)}_{NT}$
are independent across $s$.
Using that $\widehat \Sigma^{(s)}$ converges
to the constant $\Sigma^{(s)}$ we thus find that
\begin{align*}
     \sum_{s=1}^4  \phi^{(s)}_{NT}
     \Rightarrow {\cal N}\left( 0, \sum_{s=1}^4 \Sigma^{(s)} 
     \right) .
\end{align*}
Since $\widehat \Omega$ converges to $\Omega>0$,
the continuous mapping theorem then gives the statement
of the lemma.
\end{proof}

\end{document}